\newtheorem{theorem}{Theorem}
\newtheorem{lemma}{Lemma}[theorem]
\newtheorem*{remark}{Remark}
\def\@email#1#2{%
 \endgroup
 \patchcmd{\titleblock@produce}
  {\frontmatter@RRAPformat}
  {\frontmatter@RRAPformat{\produce@RRAP{*#1\href{mailto:#2}{#2}}}\frontmatter@RRAPformat}
  {}{}
}%
\begin{document}

\preprint{AIP/123-QED}

\title[A Novel Self-Adaptive SIS Model]{A Novel Self-Adaptive SIS Model Based on the Mutual Interaction between a Graph and its Line Graph}

\author{Paolo Bartesaghi}
\email{paolo.bartesaghi@unimib.it}
\affiliation{University of Milano - Bicocca, Via Bicocca degli Arcimboldi 8, 20126 Milano, Italy.}

\author{Gian Paolo Clemente}
\affiliation{Universit\`{a} Cattolica del Sacro Cuore di Milano, Largo Gemelli 1, 20123 Milano, Italy}

\author{Rosanna Grassi}
\affiliation{University of Milano - Bicocca, Via Bicocca degli Arcimboldi 8, 20126 Milano, Italy.}

\date{\today}

\begin{abstract}
We propose a new paradigm to design a network-based self-adaptive epidemic model that relies on the interplay between the network and its line graph. We implement this proposal on a Susceptible-Infected-Susceptible model in which both nodes and edges are considered susceptible and their respective probabilities of being infected result in a real-time re-modulation of the weights of both the graph and its line graph.
The new model can be considered as an appropriate perturbation of the standard Susceptible-Infected-Susceptible model, and the coupling between the graph and its line graph is interpreted as a reinforcement factor that fosters diffusion through a continuous adjustment of the parameters involved. We study the existence and stability conditions of the endemic and disease-free states for general network topologies. Moreover, we introduce, through the asymptotic values in the endemic steady states, a new type of eigenvector centrality where the score of a node depends on both the neighboring nodes and the edges connected to it. We also investigate the properties of this new model on some specific synthetic graphs, such as cycle, regular, and star graphs. Finally, we perform a series of numerical simulations and prove their effectiveness in capturing some empirical evidence on behavioral adoption mechanisms.
\end{abstract}

\maketitle

\begin{quotation}
The spread of a disease within a population, the propagation of a shock among financial institutions, the diffusion of opinions in online social networks, or the adoption of a behavior by members of a community, are all examples of diffusive phenomena within a network of interacting individuals.
Despite their similarities, these processes can be very different and cannot always be reduced to simple models.
For example, the phenomenon by which a repeated message becomes a personal belief and is adopted by an individual is not the same as the spread of a cold through contact between individuals in the same social network.
It is known that opinions and behaviors require reinforcement, and only when the individual is reached by multiple messages does he or she adopt them, whereas a single contact could be sufficient to transmit a sexual disease.
Similarly, the propagation of shocks through financial or transportation networks is hardly captured by standard contagion models. 
In this paper, we propose a new diffusion model in networks that exploits the mutual interaction between spread processes over nodes and edges. This mutual reinforcement is able to explain some well-known empirical evidence about adoption mechanisms and how they differ from other contagion processes.
\end{quotation}

\section{Introduction}
\label{Introduction}

In the last few decades, massive research efforts have focused on evolutionary and dynamical models in complex networks. The spread of a disease within a population, the propagation of a financial shock among banks, the ripple effect of an accident on traffic and transportation networks or an attack on cybersecurity networks, the spread of trends in online social networks, or the adoption of a behavior by members of a given community are all examples of diffusive phenomena in networks of interconnected entities.\cite{Vespignani2008} Despite their similarities, these processes can be very different and can hardly be reduced to simple contagion models.

The pioneering work by \citet{Kermack1927} in 1927 embedded for the first time an epidemic process in a closed population with homogeneous mixing. Since then, it has been pointed out that compartmental models such as the Susceptible-Infected-Susceptible (SIS) and  Susceptible-Infected-Recovered (SIR) models fail to describe many types of propagation phenomena and several variants have been proposed to provide a more realistic representation of the spread dynamics in different contexts. \cite{Ball2015,bartesaghi2021b,Liang2023,Zixiang2023,Leng2022,Li2018,Schreiber2021,Yakubu2006}

For example, \citet{Mieghem2015} propose a generalization to the so-called $\varepsilon$-SIS model by adding a source of self-infection in a cybersecurity network due, for instance, to accessing malicious websites, opening emails with worms, or downloading files containing malware. The heterogeneous SIS (H-SIS) model proposed by \citet{Mieghem2017} allows the infection rate along each link to be different, and makes these rates dependent on the type of connection between the two nodes. \citet{Yeftanus2021}, starting from a market basket analysis, construct a weighted communication network of different computers of a given company, and propose a HG-SIS model as a generalization of the H-SIS model, in which the infection rate is a function of the communication weight and a self-infection is allowed.

In general, the state of a node can influence the infection rate by altering the flow along a given edge. For example, in a transportation network, nodes affected by a shock, such as an accident, can induce changes in movement patterns, thereby affecting the likelihood of shock transmission in the network. Addressing this issue, \citet{Punzo2022} proposes a flow-regulated infection rate which accounts for the tendency of infection carriers to prefer healthy nodes over infected ones.

The limitations of the SIS and SIR models appear most clearly in the context of social interactions, opinion dissemination, and behavioral adoption, where it is well known that a single exposure to a piece of information is not sufficient for an individual to adopt that opinion or behavior.

The process through which a repeated message transforms into a personal belief and is adopted by an individual within a social network differs significantly from the process of spreading a cold through direct contact between individuals in the same network. Opinions and behaviors require reinforcement, and only when the individual is exposed to multiple messages does he or she adopt them. \cite{Centola2010} This leads to unexpected interactions with the topological structure of the network, which responds differently depending on the type of diffusion process it hosts. For instance, a disease and an opinion spread very differently in regular networks such as lattices compared to random networks.\cite{Zheng2013}

Our proposal stems from the search for a model that is flexible enough to potentially adapt to different contexts. The key idea is to design a process in which the weights on the edges in the network adapt to the actual epidemiological state of the nodes, and vice versa. These weights are not statically assigned at the beginning of the process; instead, they naturally emerge as the outcome of a coupled secondary process. We refer to this process as the \lq\lq dual process\rq\rq, and its interplay with the primary one results in a unique, brand-new diffusion process that we call the self-Adaptive SIS (ASIS) model.

In other words, we avoid suggesting any extrinsic dynamic process that superimposes on the epidemiological model by modifying the infection rates along the edges according to arbitrary criteria. Rather, the system autonomously adapts to the actual epidemiological state of the network. This results in a tunable coupling between the primary and the dual process, which can be interpreted as a reinforcement effect in message transmission.

The concept of a reciprocal action in which node and edge attributes are mutually dependent has recently been used to propose a nonlinear eigenvector centrality for both nodes and edges. \cite{Tudisco2021} The purpose of the authors is to define a mutually reinforcing static centrality measure, in which the node's score inherits that of its connecting edges and the edge's score that of its extreme nodes. Instead, our goal is to employ a dynamic approach that leverages a similar but distinct mutual reinforcement between the attributes of nodes and edges. To this end, in the ASIS model, the node score is associated with the asymptotic steady state probability of that node in the primary process and is influenced by the score of the connected edges. Similarly, the edge score is associated with its steady state probability in the dual process and is contingent on the scores of the nodes at its ends. These scores evolve simultaneously, interacting with each other over time.

One of the implications of the ASIS model is that it induces a brand new definition of self-adaptive eigenvector centrality. Traditionally, eigenvector centrality assigns importance to nodes based on the importance of their neighbors.
Our model allows to consider jointly nodes and edges relevance. Node centrality is indeed  proportional to the product of the scores of its neighboring nodes by that of the corresponding edges connecting that node to its neighbors. Differently from  \citet{Tudisco2021}, these scores emerge at the end of an iterative process that gradually updates them to stationary values.

The paper is structured as follows. In Section \ref{Motivation and Model overview}, we provide the motivations and the intuition behind the ASIS model. The main background and preliminaries are introduced in Section \ref{Background}. In Section \ref{The Self-Adaptive SIS Model}, we describe the details of the ASIS model and the analytical results for the cycle and complete graph. Section \ref{General steady states analysis} focuses on steady states and the related nonlinear eigenproblem. The self-adaptive eigenvector centrality is defined in Section \ref{Self-adaptive eigenvector centrality}. The model is tested on  an illustrative example in Section \ref{Illustrative example} and a variety of numerical simulations in Section \ref{Numerical Experiments}, while its effectiveness in online social networks is discussed in Section \ref{Social reinforcement in lattice and random networks}. Conclusions follow.

\section{Motivation and Model overview}
\label{Motivation and Model overview}

In 2010, \citet{Centola2010} conducted an influential experiment on the spread of behavior in online social networks, highlighting the pivotal role of social reinforcement in the adoption process. Social reinforcement pertains to the common scenario where an individual requires multiple cues from peers before adopting a particular opinion or behavior. \cite{Peyton2009,Onnela2010} Indeed, the experiment showed that a single signal exerts minimal influence on individuals' decision making, while redundant signals can improve the probability of approval and behavior adoption. It is only when a node receives a reinforced message that it may transition to adopting the opinion or behavior it carries.

The reinforcement effect in the spread of information, opinions, and behaviors within social networks, particularly in online contexts where face-to-face interactions are absent, is known to radically alter diffusion dynamics compared to the case of biological diseases. For example, it has been observed that in cases where the infection rate is not too high, reinforcement favors diffusion in regular networks over random networks. \cite{Zheng2013}

It is, therefore, important to devise a mechanism that takes into account the intensity with which a given node is able to transmit a message.
In the standard SIS epidemic model on networks, initial infection probabilities for nodes evolve over time according to a dynamics that depends on the infection rate $\beta$, the recovery rate $\gamma$ and, assuming a weighted network, on a static assignment of weights to the edges.  The weight of the edge conveys how likely that edge is to be a channel for the spread of the infection. Hence, the potential of an edge to transmit the infection may be different from edge to edge due to the intrinsic and topological features of the network. However, this capability may vary over time as a result of the diffusion itself.\cite{Roberts2015}

For opinions and behaviors, the more information individuals receive, the more inclined they are to accept them. However, this information is obtained from other individuals who are engaged in the same process and who may be more or less convinced or at a more or less advanced stage in the adoption process. The presence of an edge and its initial weight are not sufficient to explain this phenomenon, unless the weight is adjusted over time based on the level of actual infection/adoption of neighboring nodes.

The evolution of these weights over time can be described by a similar contagion process. In fact, the intensity of the message transmitted along an edge is ultimately governed by the probabilities that nodes at its ends are at varying stages of the adoption process and evolve accordingly.

We then introduce an adaptive reinforcement mechanism in the signal transmission from one node to another that accounts for the graded nature typical of social responses in contrast to the \textit{all-or-nothing} nature which is more typical of infectious diseases spread. Furthermore, since we leave open the possibility that an individual may suddenly abandon the idea or behavior for various reasons and return to the susceptible state, we turn to an SIS-type model.

To further support this idea, let us consider this analogy. In a traffic network, nodes represent locations, such as squares, intersections, or prominent sites, while edges denote streets, roads, or connections between them. Consider a shock propagating across the network. When a location is affected by an accident, the edges linked to that node experience traffic blockage, regardless of their weights in terms of traffic volume. A realistic model should therefore update the weights of those edges, to reflect the heightened probability of the shock being transmitted along a road originating from that location. In other words, the probability that an edge is a channel for the transmission of an infection \textit{is not independent} of the probability that its end points are infected. This mirrors the fact that the probability that a node is infected at time $t$ is not independent of the probability that an edge would transmit the epidemic, which is typically expressed by its weight.

A natural way to implement this idea is to run	two parallel SIS processes over nodes and over edges. More precisely, to consider an auxiliary, or \textit{dual}, process in which the information propagates \textit{among edges through the nodes}; that is, a process occurring in a new network in which edges become nodes and nodes become edges. This network is usually defined in the literature as \textit{line graph}. \cite{Gross2013} Specifically, we consider two SIS processes, one on the original network $G_P$ (primary process) and one on its line graph $G_D$ (dual process). The updated values of the edge weights are computed as outcomes of the dual process on the line graph. Let ${G_P}=({V_P},{E_P})$ be the primary network and ${G_D}=({V_D},{E_D})$ the corresponding line graph and denote by $x_i(t)$ the probability that node $i\in {V_P}$ is infected at time $t$, and $y_j(t)$  the probability that node $j\in {V_D}$ is infected at time $t$. The probabilities $y_{j}(t)$ will serve as weight attributes for the edges in ${E_P}$, while the probabilities $x_{i}(t)$ will be used as weight attributes for the edges in ${E_D}$. In this way, we generate a pair of intertwined processes that evolve simultaneously over time  utilizing the probability derived by each other. The model works jointly on both networks, leveraging the interrelated properties of nodes and edges.


To further illustrate the intuition behind the proposed mechanism, consider the binary network $G_P$ shown in Fig. 1, panel (a). Edges are labeled by letters $a$, $b$, $c$ and $d$. Panel (b) shows the corresponding line graph $G_D$, in which the nodes adopt the labels of the corresponding edges and the edges retain the colors of the corresponding nodes in $G_P$. At each step, the probabilities obtained through the evolution of an SIS process on $G_P$ are assigned as edge weights of the dual network $G_D$, as shown in panel (c). Similarly, the probabilities obtained from the SIS process on $G_D$ are reassigned to the primary network $G_P$ in the form of updated edge weights, as shown in panel (d). This simple example will be analyzed in more depth in Section \ref{Illustrative example}, after discussing the details of the model.
\begin{widetext}
\begin{center}
\begin{figure}[H]
	\centering
	\subfloat[]{\includegraphics[width=0.36\textwidth]{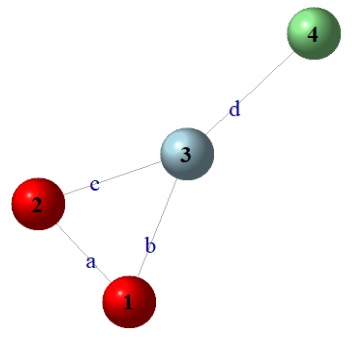}}\hspace{10mm}
	\subfloat[]{\includegraphics[width=0.36\textwidth]{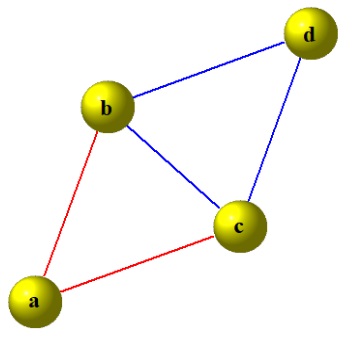}}\\
	\subfloat[]{\includegraphics[width=0.48\textwidth]{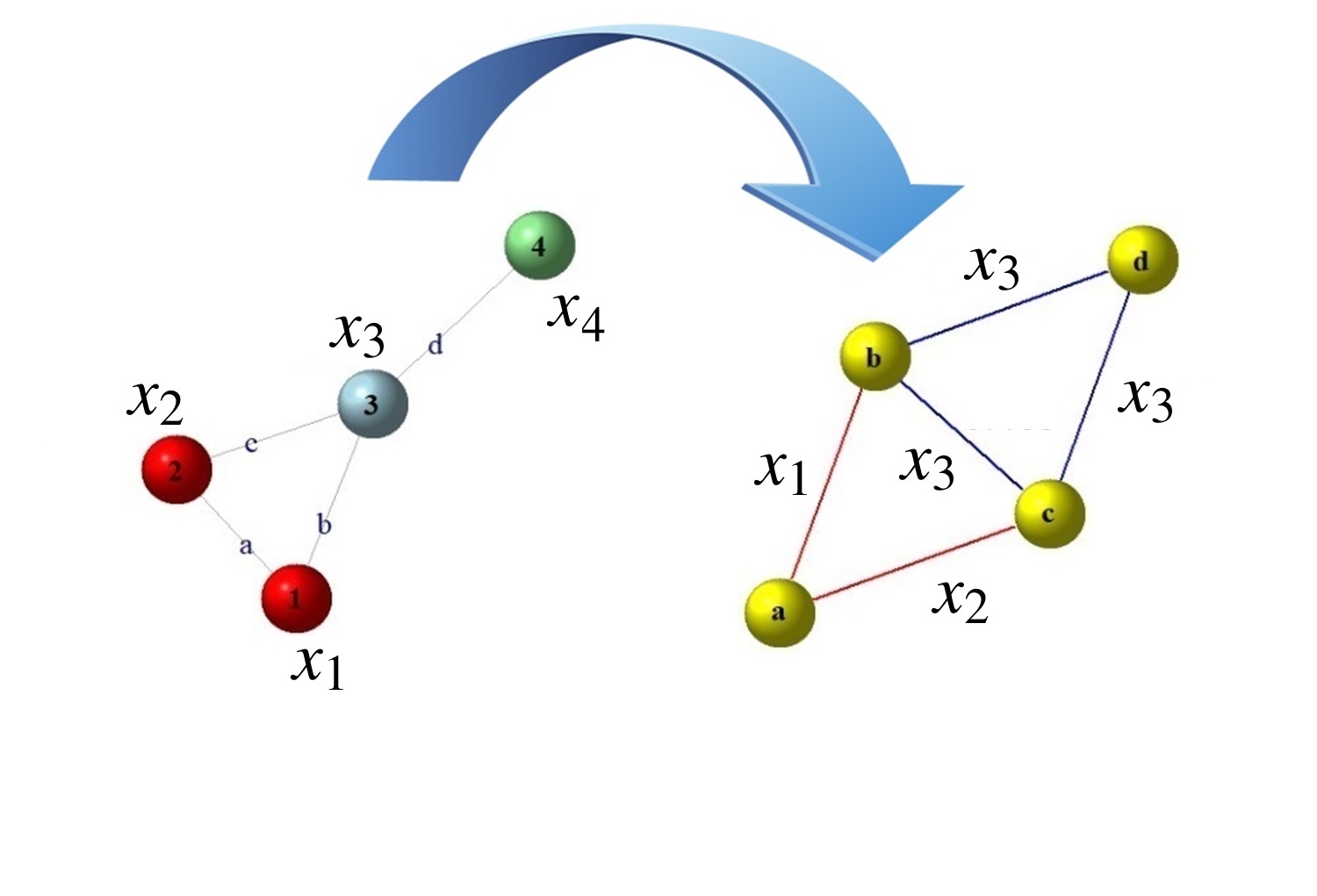}}
	\subfloat[]{\includegraphics[width=0.48\textwidth]{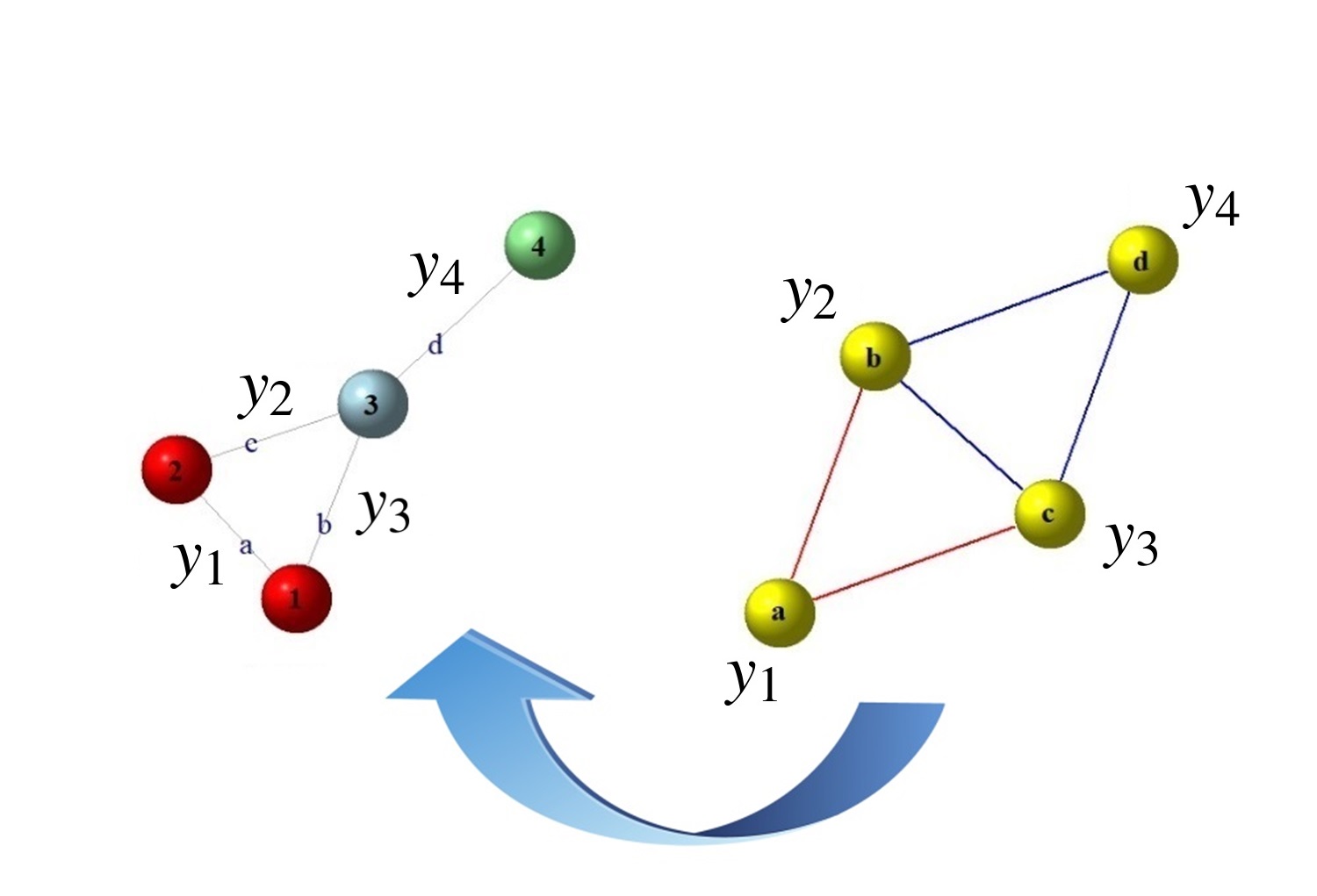}}
	\caption{Primary network model, panel (a), and its line graph, panel (b). Panels (c) and (d) illustrate the interplay between the two processes and the mechanism that models the interdependence between the probabilities $x_{i}(t)$ and $y_{i}(t)$.}
	\label{Cartoon} 
\end{figure}
\end{center}
\end{widetext}

\section{Background}
\label{Background}
	We give a brief overview of standard SIS models, mainly to recall some background ideas and to introduce notations that will be extended in the rest of the paper. The scalar SIS model is described by the following differential equation
	\begin{equation}
		\left\{ 
		\begin{array}{l}
			\dot{x}(t)=\beta[1-x(t)]x(t) -\gamma x(t) \\
			x(0)=p \\
		\end{array}
		\right.
		\label{SIS1}
	\end{equation}
	where $x(t)$ denotes the prevalence of infected individuals at time $t$, $\beta\geq 0$ is the infection rate, $\gamma\geq 0$ the recovery rate, and $0<p<1$ the initial prevalence of infected individuals at time $t=0$. If $n$ is the population size, then $nx(t)$ is the size of the infected compartment, and $\beta nx(t)$ is the total rate of infectious contacts. Conversely, $n[1-x(t)]$ is the size of the susceptible compartment. A closed solution for Eq. (\ref{SIS1}) is
	\begin{equation}
		x(t)=\frac{\left(1-\frac{\gamma}{\beta} \right)p}{p+(1-\frac{\gamma}{\beta}-p)e^{-\beta \left(1-\frac{\gamma}{\beta} \right)t}}
		\label{SISsolution1}
	\end{equation}
	where ${\mathcal{R}}\coloneqq\frac{\beta}{\gamma}$ is called \textit{basic reproductive ratio}. It is known from the related literature that, if ${\mathcal{R}}<1$, all trajectories converge to the unique disease-free steady state $x^{\star}=0$ and the epidemic disappears. If ${\mathcal{R}}>1$, each trajectory from initial condition $0<p<1$ converges to the exponentially stable endemic steady state $x^{\star}=1-\frac{\gamma}{\beta}$, and the disease-free steady state is unstable. Therefore, a transcritical bifurcation occurs at ${\mathcal{R}}=1$ (see \citet{Kiss2017} for an in-depth discussion).
	
	The first step toward an SIS model on networks is the Kermack-McKendrick model. \cite{Kermack1927} This model is based on the homogeneous mean-field assumption that nodes have an average number of neighbors $\left\langle k \right\rangle$ and that their
	degrees have only small fluctuations around this mean value. With the same meanings as before, the prevalence evolution equation is given by
	\begin{equation}
		\left\{ 
		\begin{array}{l}
			\dot{x}(t)=\beta \left\langle k \right\rangle [1-x(t)] x(t) -\gamma x(t) \\
			x(0)=p \\
		\end{array}
		\right.
		.
		\label{ Kermack-McKendrick}
	\end{equation}
	By setting $x^{\star} \cdot \left[ \beta \left\langle k \right\rangle (1-x^{\star})-\gamma \right]=0$, the equilibrium states are obtained: $x^{\star}=0$ and $x^{\star}=1-\frac{\gamma}{\beta \left\langle k \right\rangle}$. For ${\mathcal{R}}< \frac{1}{\left\langle k \right\rangle}$, $x^{\star}=0$ is asymptotically stable; whereas, for ${\mathcal{R}}> \frac{1}{\left\langle k \right\rangle}$, $x^{\star}=1-\frac{\gamma}{\beta \left\langle k \right\rangle}$ is asymptotically stable and $x^{\star}=0$ is unstable. 
	This means that $\tau\coloneqq \frac{1}{ \left\langle k \right\rangle}$ represents the threshold below which the epidemic cannot spread, since more nodes are recovered by $\gamma$ than are infected by $\beta$.
	
	Let us now turn to the network-based model. We consider an undirected weighted network $G=(V,E)$ with $n\times n$ adjacency matrix ${\bf A}=[A_{ij}]$. We denote henceforth by ${\bf k}=[k_1, \dots ,k_n]^T$ the degree vector of $G$, by $\lambda_{i}, \ i=1, \dots, n$, the eigenvalues of ${\bf A}$ with $\lambda_1\geq \lambda_2\geq \dots \geq\lambda_n$ and $\psi_{i}, \ i=1, \dots, n$, the corresponding eigenvectors. A weight $w_{ij} \in [0,1]$ is associated with each edge $(i,j)$. The weight represents the probability that the disease is transmitted along that edge, or, in other words, how likely that edge is to be a channel for the spread of infection. The SIS model on network is then described by the $n$ differential equations represented, in matrix form, by the following system
	\begin{equation}
		\left\{ 
		\begin{array}{l}
			\dot{{\bf x}}(t)=\beta\left[ {\bf I}_{n} - {\rm diag}\, {\bf x}(t) \right] {\bf A}\, {\bf x}(t)-\gamma {\bf x}(t)\\
			{\bf x}(0)={\bf p}\\
		\end{array}
		\right.
		\label{SIS3}
	\end{equation}
	where ${\bf I}_{n}$ is the $n\times n$ identity matrix and ${\rm diag}\, {\bf x}(t)$ is the diagonal matrix whose diagonal entries are $x_i(t),i=1,...,n.$ 
	Let us remark that we make some assumptions that we will preserve throughout the paper: first, $\beta$ and $\gamma$ are the same for all nodes; second, the initial infection probabilities $p_i$ are identical on all nodes, namely ${\bf p}=p {\bf u}_{n}$ where $p\in(0,1]$ and ${\bf u}_{n}=[1,1,\dots,1]^{T}\in{\mathbb R}^{n}$.
	
	\noindent \textit{Steady states}\\
	Although a closed solution of the non-linear problem in Eq. (\ref{SIS3}) cannot be provided, we can obtain information about its steady states.
	A steady state is achieved when $\dot{{\bf x}}(t)=\bf 0$ as $t \to +\infty$,  that is 
	\begin{equation*}
		\beta\left[ {\bf I}_{n} - {\rm diag}\, {\bf x} \right] {\bf A}\, {\bf x}-\gamma {\bf x}=\bf 0.
	\end{equation*}
	
	The disease-free steady state is given by the trivial solution $x_{i}=0$, $\forall i$. If we consider the linearization of Eq. (\ref{SIS3}) around the disease-free steady state, that is 
	\begin{equation*}
		\left\{ 	
		\begin{array}{l}
			\dot{\bf x}(t)=\beta{\bf A}\, {\bf x(t)}-\gamma {\bf x(t)}\\
			{\bf x}(0)={\bf p}\\
		\end{array}
		\right.
		,
	\end{equation*}
	then the study of the steady state involves the resolution of the eigenvalue problem
	\begin{equation*}
	\beta{\bf A}\, {\bf x}-\gamma {\bf x}=0.
	\end{equation*}
	Recalling that $\lambda$ is an eigenvalue of $\bf A$  if and only if $\beta\lambda-\gamma$ is an eigenvalue of $\beta{\bf A}-\gamma {\bf I}_n$, the threshold that ensures the stability of the null solution is given by $\beta \lambda_{1}-\gamma < 0$, or, equivalently, ${\mathcal{R}}< \frac{1}{\lambda_{1}}$. Therefore, if the reproductive ratio is less than $\frac{1}{\lambda_{1}}$, this state is stable and the process dies out.
	
	Conversely, if ${x_i}>0$ for at least one $i$, the system evolves into an endemic state. In this case, the
	steady states can be viewed as solutions of the implicit recurrence relation
	\begin{equation}
		{x_i}=\frac{\beta \sum_{j=1}^{n}A_{ij}x_{j} }{\gamma + \beta \sum_{j=1}^{n}A_{ij}x_{j}}=
		1-\frac{1}{1+{\mathcal{R}} \sum_{j=1}^{n}A_{ij}x_{j}}.
		\label{recurrence1}
	\end{equation}
	A sufficient condition for the existence of the endemic state is ${\mathcal{R}}>\max_{i} \frac{1}{k_i}$,  where $k_i$, $\forall i=1,\dots, n$, is the degree of node $i$. In this case, there exists a constant $c\leq 1-\frac{\gamma}{\beta {k_i}}$, $\forall i=1,\dots, n$, such that $c\leq {x_i}\leq 1$ holds for all $i$ (see \citet{Kiss2017}). It has been shown that this condition can be relaxed. Indeed, a refined sufficient condition for the existence and stability of the endemic steady-state solution is ${\mathcal{R}}>\frac{1}{\lambda_{1}}$. It can be proved that, under this condition, if ${\bf x}(0)\in [0,1]^n$ then ${\bf x}(t)\in [0,1]^n$ for all $t>0$ and if ${\bf x}(0)>{\bf 0}$ then ${\bf x}(t)>{\bf 0}$ for all $t>0$.\cite{Kiss2017} Moreover, there exists an equilibrium point ${\bf x}^{\star}={\bf 0}$, the epidemic outbreak, which is exponentially unstable, and an endemic state ${\bf x}^{\star}\neq {\bf 0}$, which is  exponentially stable. Something more can be said about the behavior of the endemic solution at the ends of the interval $\left( \frac{1}{\lambda_{1}}, +\infty \right)$ in ${\mathcal{R}}$: if ${\mathcal{R}}\to\left( \frac{1}{{\lambda_{1}}} \right)^+$ then ${\bf x}^{\star} \to a\left( {\mathcal{R}}\lambda_{1}-1 \right) {{{\psi}_{1}}}\quad {\rm with} \quad a=\frac{||{{{\psi}_{1}}}||^2}{{{{\psi}_{1}^{T}}{\rm diag}({{{\psi}_{1}}}){{\psi}_{1}}}}$, while if ${\mathcal{R}}\to +\infty$ then ${\bf x}^{\star}\to {\bf u}_{n}-\frac{1}{{\mathcal{R}}}\, {\rm diag}\, {\bf k}^{-1}$.
	
	It should be emphasized that the value $ \frac{1}{\lambda_{1}}$ represents a lower bound for the actual threshold $\tau$ of the process \textit{in networks}, $\tau\geq  \frac{1}{\lambda_{1}}$ (while it as an exact value for the N-intertwined mean-field approximation, see \citet{Mieghem2015}). For some graphs, such as the complete graph, this value is a good approximation of the actual threshold, while for other graphs, such as the star, it is less accurate. In general, the larger the heterogeneity in the degree distribution, the larger the deviation from the first-order mean-field approximation. For $d$-regular graphs, where all degrees are equal to $d$, the lower bound is $ \frac{1}{\lambda_{1}}=\frac{1}{d}$.
	
\section{The Self-Adaptive SIS Model}
\label{The Self-Adaptive SIS Model}
	
	\subsection{Primary network epidemic model and its dual}
	To facilitate the understanding of the model architecture, we initially assume that the primary network and its dual are unweighted. Hence, let us suppose 
	that the original network $G_{P}$ is represented by a binary undirected graph with adjacency matrix ${\bf B}_{P}\in {\mathbb R}^{n\times n}$ and incidence matrix ${\bf E}\in {\mathbb R}^{n\times m}$, and the dual binary network $G_D$ has adjacency matrix ${\bf B}_{D}\in {\mathbb R}^{m\times m}$.
	By graph theory, it is known that 
	${\bf B}_{P}={\bf E} {\bf E}^{T}-{\rm diag(\bf k}_P)$ and ${\bf B}_{D}={\bf E}^{T}{\bf E}-{\rm diag(\bf k}_D)$, where ${\rm diag}({\bf k}_P)$ is the diagonal matrix with diagonal entries given by the node degrees of the network $G_P$, and ${\rm diag}({\bf k}_D)$ is the analog diagonal matrix of the dual network $G_D$. Note that, in the latter case, the diagonal entries count the number of nodes each edge contains, hence ${\rm diag}({\bf k}_D)=2{\bf I}_{m}$.
	
	Now suppose that both the nodes and the edges of the network $G_{P}$ are assigned numerical attributes represented by vectors ${\bf x}=[x_{1}, \dots , x_{n}]^{T}$ and ${\bf y}=[y_{1}, \dots , y_{m}]^{T}$, respectively.
	
	The attributes ${\bf y}$ of the edges in the network $G_{P}$ can be naturally and uniquely assigned to the nodes of the dual network $G_{D}$ in a one-to-one correspondence. Conversely, to assign attributes to the edges of the dual network $G_{D}$ from those assigned to the nodes of $G_{P}$, we proceed as follows. 
	An edge in $G_D$ is the bridge between two vertices in $G_D$ and corresponds to a specific node in $G_P$. This node is the common end of the two corresponding edges in $G_P$.  Therefore, we assign to an edge in $G_D$ the same attribute $x_i$ as the common node between the two edges in $G_P$. Of course, the same attribute $x_i$ can be used multiple times.
	
	The adjacency matrices of the networks $G_{P}$ and $G_{D}$ are then modified as follows
	\begin{equation}
		\left\{ 
		\begin{array}{l}
			{\bf A}_{P}={\bf E}\, {\rm diag}\, {\bf y}\, {\bf E}^{T}-{\rm diag}\, {\bf k}_P\\
			\hfill \\
			{\bf A}_{D}={\bf E}^{T} {\rm diag}\, {\bf x}\, {\bf E}-{\rm diag}\, {\bf k}_D\\
		\end{array}
		\right.
		\label{ApAd}
	\end{equation}
	where ${\bf k}_P={\bf E}{\bf y}$ and ${\bf k}_D={\bf E}^{T}{\bf x}$. These relations play a central role because they link the entries of the adjacency matrices of one network with the attributes of the nodes of the other.
	
	Setting the initial conditions  ${\bf x}(0)={\bf x}_{0}$ and ${\bf y}(0)={\bf y}_{0}$ on the nodes of $G_P$ and $G_D$, respectively, by Eq. \eqref{SIS3}, the two parallel SIS processes on the $G_P$ and $G_D$ networks are described by
	\begin{equation}
		\label{continuos_eqs}
		\left\{ 
		\begin{array}{l}
			\dot{{\bf x}}(t)=\beta_{P}\left[ {\bf I}_{n} - {\rm diag}\, {\bf x}(t) \right] {\bf A}_{P}\, {\bf x}(t)-\gamma_{P} {\bf x}(t) \\
			\hfill \\
			\dot{{\bf y}}(t)=\beta_{D}\left[ {\bf I}_{m} - {\rm diag}\, {\bf y}(t) \right] {\bf A}_{D}\, {\bf y}(t)-\gamma_{D} {\bf y}(t)\\
		\end{array}
		\right.
	\end{equation}
	where $\beta_{P}$ ($\beta_{D}$) and $\gamma_{P}$ ($\gamma_{D}$) are the infection and recovery rates on the primary (dual) network.
	
	What we aim to do is to consider the non-autonomous version of system (\ref{continuos_eqs}). Specifically, the time dependence of the two matrices ${\bf A}_{P}$ and ${\bf A}_{D}$ can be introduced by setting 
	${\bf k}_P(t)={\bf E}{\bf y}(t)$ and ${\bf k}_D(t)={\bf E}^{T}{\bf x}(t)$,	
	where ${\bf E}{\bf y}(t)$ returns, for each node in the network $G_P$, a weight equal to the sum of the attributes of the edges connected to that node, and ${\bf E}^{T}{\bf x}(t)$ returns, for each edge in the network $G_P$, a weight equal to the sum of the attributes of its two end nodes. Hence, the two adjacency matrices of the network $G_{P}$ and $G_{D}$ become, respectively
	\begin{equation}
		\left\{ 
		\begin{array}{l}
			{\bf A}_{P}({\bf y}(t))={\bf E}\, {\rm diag}\, {\bf y}(t) {\bf E}^{T}-{\rm diag}({\bf E}{\bf y}(t))\\
			\hfill \\
			{\bf A}_{D}({\bf x}(t))={\bf E}^{T}\, {\rm diag}\, {\bf x}(t) {\bf E}-{\rm diag}({\bf E}^{T}{\bf x}(t))\\
		\end{array}
		\right.
		.
		\label{ApAd_continuos}
	\end{equation}
	
	Let us emphasize that, by formula (\ref{ApAd_continuos}), edges in network $G_P$ inherit the weights from the node probabilities in network $G_D$ to produce an updated version of the adjacency matrix\footnote{In the following, for ease of reading, we will alternately use the equivalent notations ${\bf A}_{P}(t)$ and ${\bf A}_{P}({\bf y})$, instead of ${\bf A}_{P}({\bf y}(t))$. Similarly, ${\bf A}_{D}(t)$ and ${\bf A}_{D}({\bf x})$ instead of ${\bf A}_{D}({\bf x}(t))$.} ${\bf A}_{P}(t)$ at time $t$. Similarly, ${\bf A}_{D}(t)$ inherits the weights from the node probabilities in network $G_P$, by assigning to the edges of the dual network $G_D$ the probabilities of the corresponding nodes in network $G_P$ at time $t$ in a non-one-to-one correspondence.
	Expressions in Eq. (\ref{ApAd_continuos}) make clear that the adjacency matrix controlling the SIS evolution on the network $G_P$ depends on the attributes ${\bf y}(t)$ and the adjacency matrix controlling the SIS evolution on the network $G_D$ depends on the attributes ${\bf x}(t)$. It is worth noting that, in our model, originally binary networks become weighted networks in a natural way, through the introduction of node attributes and edge attributes.
	
	Equations (\ref{continuos_eqs}) can be conveniently expressed in a more compact form as
		\begin{equation}
		\resizebox{.88\hsize}{!}{$
		\left[\begin{array}{c}
			{\bf \dot{x}} \\
			{\bf \dot{y}} \\
		\end{array} \right]=\\
		\left[ \begin{array}{cc} 
			\beta_{P}\left[ {\bf I}_{n} - {\rm diag}\, {\bf x} \right] {\bf A}_{P}({\bf y})-\gamma_{P} {\bf I}_{n} & {\bf 0}_{n\times m} \\
			{\bf 0}_{m\times n} & \beta_{D}\left[ {\bf I}_{m} - {\rm diag}\, {\bf y} \right] {\bf A}_{D}({\bf x})-\gamma_{D}{\bf I}_{m} \\
		\end{array} \right]
		\left[\begin{array}{c}
			{\bf x} \\
			{\bf y} \\
		\end{array} \right].$}
		\label{system1}
	\end{equation}
	Although the model allows working with different parameter values on the networks $G_P$ and $G_D$, we will only consider analytically the case  $\beta_{P}=\beta_{D}=\beta$ and $\gamma_{P}=\gamma_{D} =\gamma$. By introducing the new variable ${\bf z}\coloneqq
	\left[\begin{array}{c}
		{\bf x} \\
		{\bf y} \\
	\end{array} \right] \in {\mathbb R}^{n+m}$,
		Eq. (\ref{system1}) can be expressed as
		\begin{equation}
		\resizebox{.88\hsize}{!}{$
			{\bf \dot{z}}=
			\left[ \begin{array}{cc} 
				\beta\left[ {\bf I}_{n} - {\rm diag}\, {\bf x} \right] {\bf A}_{P}({\bf y})-\gamma {\bf I}_{n} & {\bf 0}_{n\times m} \\
				{\bf 0}_{m\times n} & \beta\left[ {\bf I}_{m} - {\rm diag}\, {\bf y} \right] {\bf A}_{D}({\bf x})-\gamma{\bf I}_{m} \\
			\end{array} \right]
			{\bf z}.$}
			\label{system2}
		\end{equation}
		The variables ${\bf x}$ and ${\bf y}$ can be regained from ${\bf z}$ by means of the following two relations
		${\bf x}={\bf P}_{n}{\bf z}\coloneqq \left[{\bf I}_{n}|{\bf 0}_{n \times m}\right]{\bf z}$ and ${\bf y}={\bf Q}_{m}{\bf z}\coloneqq \left[{\bf 0}_{m \times n}|{\bf I}_{m}\right]{\bf z}$, so that we can write
		\begin{equation}
			\footnotesize
			\begin{split}
				{\bf \dot{z}}&=
				\left[ \begin{array}{cc} 
					\beta\left[ {\bf I}_{n} - {\rm diag} ( {\bf {\bf P}_{n}{\bf z}}) \right] {\bf A}_{P}({\bf z})-\gamma {\bf I}_{n} & {\bf 0}_{n\times m} \\
					{\bf 0}_{m\times n} & \beta\left[ {\bf I}_{m} - {\rm diag}({\bf {\bf Q}_{m}{\bf z}}) \right] {\bf A}_{D}({\bf z})-\gamma{\bf I}_{m} \\
				\end{array} \right]
				{\bf z}\\
				&=\beta
				\left[ \begin{array}{cc} 
					\left[ {\bf I}_{n} - {\rm diag} ( {\bf {\bf P}_{n}{\bf z}}) \right]  & {\bf 0}_{n\times m} \\
					{\bf 0}_{m\times n} & \left[ {\bf I}_{m} - {\rm diag}({\bf {\bf Q}_{m}{\bf z}}) \right] \\
				\end{array} \right]
				\cdot
				\left[ \begin{array}{cc} 
					{\bf A}_{P}({\bf z}) & {\bf 0}_{n\times m} \\
					{\bf 0}_{m\times n} & {\bf A}_{D}({\bf z}) \\
				\end{array} \right]{\bf z}-\gamma {\bf z}.
			\end{split}
			\label{system3}
		\end{equation}
		Now, let us define the two matrices 
		\begin{equation}
			{\bf G}({\bf z})\coloneqq
			\left[ \begin{array}{cc} 
				{\bf A}_{P}({\bf z}) & {\bf 0}_{n\times m} \\
				{\bf 0}_{m\times n} & {\bf A}_{D}({\bf z}) \\
			\end{array} \right]
			\label{matrixG}
		\end{equation}
		and
		\begin{equation}
			{\bf H}({\bf z})\coloneqq
			\beta
			\left[ {\bf I}_{n+m} - {\rm diag}\, {\bf z} \right]
			{\bf G}({\bf z})
			-\gamma{\bf I}_{n+m}.
			\label{matrixH}
		\end{equation}
		The self-adaptive SIS model is finally expressed by the ordinary differential equation
		\begin{equation}
			{\bf \dot{z}}={\bf H}({\bf z}){\bf z}.
			\label{ODE1}
		\end{equation}
		
		\begin{remark}
			It is worth focusing on the initial values of the adjacency matrices in Eq. (\ref{ApAd_continuos}).
			At time $t=0$, we set the initial attributes ${\bf x}(0)={\bf x}_{0}=p{\bf u}_{n}$ and ${\bf y}(0)={\bf y}_{0}=p{\bf u}_{m}$, where $p\in {\mathbb R}, p\in (0,1]$ represents the initial probability of being infected, uniformly distributed across nodes in network $G_P$ and nodes in network $G_D$.\footnote{The model allows for more general assumptions about initial probabilities. Here we assume that the initial probabilities are the same on the primary and dual networks. This allows us to obtain closed solutions in the case of some synthetic graphs. However, in numerical simulations nothing prevents the use of different values on the two networks  $G_{P}$ and $G_{D}$ or even different values on individual nodes. Similarly, in the numerical simulations, it is possible to implement different values of $\beta$ and $\gamma$ on the two networks $G_{P}$ and $G_{D}$.} 
			We denote by $q$ the initial probability of being susceptible: $q=1-p$. The initial values of the two matrices ${\bf A}_{P}(t)$ and ${\bf A}_{D}(t)$ in Eq. (\ref{ApAd_continuos}) are then
			\begin{equation}
				\left\{ 
				\begin{array}{l}
					{\bf A}_{P}(0)=p{\bf E} {\bf E}^{T}-p\, {\rm diag}({\bf E}{\bf u}_{m})=p {\bf B}_{P}\\
					\hfill \\
					{\bf A}_{D}(0)=p{\bf E}^{T} {\bf E}-p\, {\rm diag}({\bf E}{\bf u}_{n})=p {\bf B}_{D}\\
				\end{array}
				\right.
				\label{initialmatrices}
			\end{equation}
			where ${\bf B}_{P}$ and ${\bf B}_{D}$ are the original binary adjacency matrices of the two networks, containing the information about their topological structure.
		\end{remark}
		
		Since a real network $G_{P}$ is often originally edge-weighted, we now discuss how to incorporate the original weights in the process described earlier.
		
		Let ${\bf W}_{P}$ be the weighted adjacency matrix of the primary network $G_{P}$ obtained by ${\bf B}_{P}$ by adding weights to edges. Our aim is to re-modulate this matrix with the probabilities produced as the process evolves. To do this, we modify only the matrix ${\bf A}_{P}(t)$ in Eq. (\ref{ApAd_continuos}) as follows
		\begin{equation}
			{\bf A}_{P}(t)={\bf W}_{P} \odot \left( {\bf E}\, {\rm diag}\, {\bf y}(t) {\bf E}^{T}-{\rm diag}({\bf E}{\bf y}(t))\right)
		\end{equation}
		where $\odot$ is the Hadamard (i.e. element by element) product between the two matrices. Now the matrix ${\bf A}_{P}(t)$ can be understood as a weighted matrix that encompasses in itself both the original topological properties of the network (through ${\bf W}_{P}$) and the probabilities induced by the evolution of the process (through ${\bf y}$).
		It is worth stressing that we do not need to modify the expression of the matrix ${\bf A}_{D}(t)$. The construction of a line graph from an edge-weighted graph does not produce an edge-weighted dual graph, therefore the dual network is always structurally conceived as a binary network (that is ${\bf W}_{D}={\bf B}_{D}$), and weights on the edges in $G_D$ are only due to the effect of the evolving process. This raises no issue for the model since this network serves exclusively as an auxiliary network to trigger the process.
		
		\subsection{Reinforcement factor}
		\label{Reinforcement factor}
		We introduce here a parameter that allows a smooth transition from the standard SIS model in Eq. \eqref{SIS3} to the ASIS model in Eq. \eqref{continuos_eqs}. In particular, we can modulate the weights of the adjacency matrices updated at each time, weighing the level of \textit{self-adaptivity} that we want to apply. Let $e\in [0,1]$ and let us define
		\begin{align}
			{\bf x}_{e}(t) &= e{\bf x}(t) + (1-e)p{\bf u}_{n}, \label{xstrenght} \\
			{\bf y}_{e}(t) &= e{\bf y}(t) + (1-e)p{\bf u}_{m}. \label{ystrenght}
		\end{align}
		
		If we replace, in Eq. (\ref{system2}), ${\bf A}_{P}({\bf y}(t))$ by ${\bf A}_{P}({\bf y}_{e}(t))$ and ${\bf A}_{D}({\bf x}(t))$ by ${\bf A}_{D}({\bf x}_{e}(t))$, we reshape the weights of the adjacency matrices by quantities varying between the initial fixed probabilities of the model $p$ ($e=0$) and the actual probabilities of the nodes and edges at time $t$ ($e=1$).
		For $e=0$, we get two parallel and disentangled SIS processes on the primary and dual networks. For $e=1$, we get the \textit{fully} self-adaptive SIS model, described above. Thus, for any value $0<e<1$, we obtain a general model that includes the standard and the fully self-adaptive model as extremal and special cases. As a consequence, this general model yields perturbed solutions between the two extreme ones. We call the scalar parameter $e$ \textit{reinforcement factor} because it conveys the intensity of the mutual reinforcement between the primary and dual processes. This parameter, which is a measure of the level of self-adaptivity of the epidemic model, will be used to calibrate the reciprocal reinforcement action that typically takes place in social networks and discussed in the introduction. In the following, where not explicitly specified, by ASIS model we will mean the case $e=1$.
				
		\subsection{Application to synthetic graphs}
		
		In this section we present some analytical results about the steady state solutions of the ASIS model for some specific classes of binary networks (cycle, regular, complete and star networks). We report here only results concerning the cycle and the complete graph. We refer to Appendix \ref{appendixA} for the proofs of the theorems, and for the general case of regular graphs and star graphs. 
		
		Let $G_P$ be a cycle with $n$ nodes, $n$ edges and adjacency matrix ${\bf B}$. In this case, $G_D$ is also a cycle with $n$ nodes, $n$ edges and same adjacency matrix ${\bf B}$, then $G_P=G_D=C_n$, and the probabilities of all nodes in both graphs are identical. Since
		${\rm diag}\, {\bf x}(t)=x(t){\bf I}_{n}$,
		${\rm diag}\, {\bf y}(t)=y(t){\bf I}_{n}$, and
		${\rm diag}({\bf E}{\bf u}_{n})={\rm diag}({\bf E}^{T}{\bf u}_{n})=2{\bf I}_{n}$, Eq. (\ref{ApAd_continuos}) reduces to
		\begin{equation}
		\left\{ 
			\begin{array}{l}
				{\bf A}_{P}(t)=y(t){\bf B}\\
				\hfill \\
				{\bf A}_{D}(t)=x(t){\bf B}\\
			\end{array}
			\right.
			.
		\end{equation}
		Moreover, $x(t)=y(t)$ as we focus on the case in which the infectivity and recovery rates are the same for both $G_P$ and $G_D$.
		Therefore, for every node and edge in the cycle, Eq. (\ref{continuos_eqs}) reduces to a single equation:
		\begin{equation}
		\resizebox{.88\hsize}{!}{$
			\begin{split}
				\dot{x}(t)=&\ \beta\left[1-x(t) \right]\sum_{h=1}^{n} ({\bf A}_{P})_{ih}\, x_{h}(t)-\gamma x(t)\\
				=&\ \beta\left[1-x(t) \right] 2 y(t) x(t)-\gamma x(t)
				= -2\beta x^{3}(t)+2\beta x^{2}(t)-\gamma x(t).
				\label{eqs_cycles}
			\end{split}$}
		\end{equation}
		
		The steady states of the nonlinear mapping in Eq. (\ref{eqs_cycles}) are characterized by the following
		
		\begin{theorem}
			\label{theorem1_cycle}
			The stable equilibrium points of the ASIS model on the cycle $C_n$ and its dual network, described by Eq. (\ref{eqs_cycles}), are given by
			\begin{equation}
				\label{equilibrium_cycle_graph}
				\left\{ 
				\begin{array}{lll}
					x^{\star}=0 & {\rm if} & \mathcal{R}< \tau_{c}\\
					\hfill \\
					x^{\star}=\frac{1}{2}\left( 1+ \sqrt{1-\frac{2}{\mathcal{R}}} \right) & {\rm if} & \mathcal{R}> \tau_{c}
				\end{array}
				\right.
			\end{equation}
			where
			\begin{equation}
				\label{threshold_cycle_graph}
				\tau_{c}=
				\left\{ 
				\begin{array}{lll}
					\frac{1}{2p(1-p)} & {\rm if} & 0<p<\frac{1}{2}\\
					2 & {\rm if} & \frac{1}{2}\leq p<1
				\end{array}
				\right.
			\end{equation}
			is the threshold of the epidemic dynamics on cycles.
		\end{theorem}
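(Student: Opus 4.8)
The plan is to treat Eq.~\eqref{eqs_cycles} as the scalar autonomous cubic ODE $\dot x=f(x)$ with $f(x)=-x\,(2\beta x^{2}-2\beta x+\gamma)$ and carry out an elementary phase-line analysis, keeping in mind that the relevant datum is the prescribed initial condition $x(0)=p$ (recall that on the cycle $G_P=G_D=C_n$ and $x(t)=y(t)$, so the same scalar equation governs both the primary and the dual state). First I would check that $[0,1]$ is forward invariant: $f(0)=0$ so the boundary point $0$ is itself an equilibrium, and $f(1)=-\gamma\le 0$ so the flow points inward at $x=1$; hence every trajectory issued from $p\in(0,1]$ stays bounded and, being governed by a one-dimensional autonomous field, converges monotonically to an equilibrium. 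The equilibria are $x^{\star}=0$ together with the roots of the quadratic factor $2\beta x^{2}-2\beta x+\gamma$, whose discriminant $4\beta^{2}-8\beta\gamma$ is non-negative exactly when $\mathcal R=\beta/\gamma\ge 2$; in that regime the roots are $x_{\mp}=\tfrac12\bigl(1\mp\sqrt{1-2/\mathcal R}\bigr)$, and one checks directly that both lie in $[0,1]$.

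Next I would read off the signs of $f$ and the local stability. Since $f'(0)=-\gamma<0$, the disease-free state $x^{\star}=0$ is \emph{always} locally asymptotically stable; this is the structural feature distinguishing the ASIS cycle from the standard SIS model, and it is precisely why the threshold will end up depending on $p$. When $\mathcal R\le 2$ the quadratic factor is strictly positive on $(0,\infty)$, so $f<0$ there and every trajectory decreases monotonically to $0$. When $\mathcal R>2$ we have $f<0$ on $(0,x_{-})$, $f>0$ on $(x_{-},x_{+})$ and $f<0$ on $(x_{+},\infty)$; hence $x_{-}$ is unstable, $x_{+}=\tfrac12\bigl(1+\sqrt{1-2/\mathcal R}\bigr)$ is asymptotically stable, the basin of $x^{\star}=0$ is $[0,x_{-})$, and the basin of $x_{+}$ is $(x_{-},+\infty)$.

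It then remains to locate $p$ relative to the separatrix $x_{-}$. Because $p(1-p)\le\tfrac14$, the claimed threshold always satisfies $\tau_{c}\ge 2$, with equality at $p=\tfrac12$; in particular, whenever $\mathcal R\le 2$ one has $\mathcal R\le\tau_{c}$ and the trajectory converges to $0$, in agreement with the statement. For $\mathcal R>2$ the trajectory from $p$ reaches the endemic value $x_{+}$ if and only if $p>x_{-}$, i.e.\ iff $1-2p<\sqrt{1-2/\mathcal R}$. If $p\ge\tfrac12$ the left-hand side is $\le 0$, so the inequality holds automatically and the endemic state is attained precisely when $\mathcal R>2=\tau_{c}$. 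If $p<\tfrac12$ both sides are positive; squaring gives $(1-2p)^{2}<1-2/\mathcal R$, which rearranges to $\mathcal R>\dfrac{1}{2p(1-p)}=\tau_{c}$ (and this already forces $\mathcal R>2$). Collecting the two cases yields exactly Eqs.~\eqref{equilibrium_cycle_graph}--\eqref{threshold_cycle_graph}.

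The analysis is essentially routine; the only points requiring a little care are the squaring step, which is legitimate only after the sign of $1-2p$ has been fixed, and the consistency checks that $\tau_{c}\ge 2$ and that the two branches of $\tau_{c}$ agree at $p=\tfrac12$, so that the threshold is well defined and continuous across $p=\tfrac12$. I would also note explicitly that for $\mathcal R>2$ \emph{both} $0$ and $x_{+}$ are stable equilibria, so the dichotomy in the theorem should be read as recording which of the two is selected by the trajectory starting from the uniform initial probability $p$.
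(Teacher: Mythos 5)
Your proposal is correct and follows essentially the same route as the paper's proof: a phase-line analysis of the scalar cubic $\dot x = -x(2\beta x^{2}-2\beta x+\gamma)$, identification of the roots $x_{\mp}=\tfrac12\bigl(1\mp\sqrt{1-2/\mathcal R}\bigr)$, and translation of the basin condition $p>x_{-}$ into the $p$-dependent threshold $\tau_c$ via the same case split on the sign of $1-2p$. Your added remarks on forward invariance of $[0,1]$, the legitimacy of the squaring step, and the bistability for $\mathcal R>2$ are consistent with, and slightly more explicit than, the paper's argument.
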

		
	Notice that the threshold $\tau_{c}(p)$ is a nonincreasing function of $0<p<1$; in particular, it is always greater than or equal to $2$ and such that $\tau_{c}(p)\to +\infty$ when $p\to 0$. Moreover, when ${\mathcal R}\to \tau_{c}^{+}$ and $0<p<\frac{1}{2}$, it is easy to show by simple calculations that $x^{\star}\to 1-p$, i.e., the asymptotic probability of the endemic state is equal to the initial probability that a node is susceptible. If ${\mathcal R}\to \tau_{c}^{+}$ and $\frac{1}{2}<p<1$, then $x^{\star}\to \frac{1}{2}$, i.e., the endemic asymptotic state stabilizes on an equal distribution of infected and susceptible cases.\footnote{The threshold of the standard SIS model and its asymptotic endemic state on the cycle, with the same initial conditions, are equal to $\tau=\frac{1}{2p}$ and $x^{\star}=1-\frac{1}{2p\mathcal{R}}$.} For $\mathcal{R}=\tau_{c}(p)$, $x(t)=p,\ \forall t$. Observe that, when $\gamma =0$, we are in the case of a self-adaptive SI model; being $\beta>0$ and $p>0$, the only stable asymptotic solution reduces to $x^{\star}=1$, as in any SI model.
	The result obtained for the cycle $C_n$ can be generalized to regular graphs $K_n^d$ with $n$ vertices of degree $k_i=d,\  \forall i=1,...,n$. We report here the result for the complete graph $K_n$ and we refer the reader to appendix \ref{appendixA} for the detailed proof of the case $K_n^d$.
	If $G_P$ is the complete graph $K_n$ of $n$ nodes, its dual $G_D$ is a regular graph of degree $2(n-2)$.\footnote{$K_{n}$ has $n$ vertices, $m=\frac{1}{2}n(n-1)$ edges, and degree $d=n-1$. The line graph of $K_n$ has $m=\frac{1}{2}n(n-1)$ vertices and $\frac{1}{2}n(n-1)(n-2)$ edges}  The following theorem provides the values of the steady states and the expression of the corresponding threshold on the complete graph.
		\begin{theorem}
			\label{theorem2_complete}
			The stable equilibrium points of the ASIS model with reproductive ratio $\mathcal{R}$ on the complete graph $K_{n}$  
			are given by
			\begin{equation}
				\resizebox{.88\hsize}{!}{$
				\label{solutions_complete_graph}
				\left\{ 
				\begin{array}{lll}
					x^{\star}=0 & {\rm if} & \mathcal{R}<\tau_{\rm compl}\\
					\hfill \\
					x^{\star}=\frac{1}{2}\left(1-\frac{n-3}{2(n-1)(n-2)\mathcal{R}}+ \frac{\sqrt{\xi}}{2(n-1)(n-2)\mathcal{R}}\right) & {\rm if} & \mathcal{R}>\tau_{\rm compl}
				\end{array}
				\right.$}
			\end{equation}
			where \scalebox{0.88}{$\xi=\left[ (n-3)-2(n-1)(n-2)\mathcal{R} \right]^{2}-8(n-1)^2(n-2)\mathcal{R}$} and
			\begin{equation}
				\resizebox{.88\hsize}{!}{$
				\label{threshold_complete_graph}
				\tau_{\rm compl}=
				\left\{ 
				\begin{array}{lll}
					\frac{(n-1)+(n-3)p}{2(n-1)(n-2)}\cdot \frac{1}{p(1-p)} & {\rm if} & 0<p<\frac{1}{1+\sqrt{\frac{2(n-2)}{n-1}}}\\
					\left[ \frac{1}{\sqrt{n-1}}+\frac{1}{\sqrt{2(n-2)}}\right]^{2} & {\rm if} & \frac{1}{1+\sqrt{\frac{2(n-2)}{n-1}}}\leq p<1
				\end{array}
				\right.$}
			\end{equation}
			is the epidemic threshold on complete graphs.
		\end{theorem}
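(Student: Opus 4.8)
The plan is to exploit the symmetry of $K_{n}$ to reduce the $(n+m)$-dimensional system \eqref{ODE1} to a planar one, analyse its equilibria and their stability, and then locate the threshold by generalising the argument used for the cycle. Because $K_{n}$ is vertex- and edge-transitive, the uniform data ${\bf x}(0)=p{\bf u}_{n}$, ${\bf y}(0)=p{\bf u}_{m}$ keep $x_{i}(t)\equiv x(t)$ and $y_{j}(t)\equiv y(t)$ for all $t$, and one has ${\rm diag}({\bf E}{\bf u}_{m})=(n-1){\bf I}_{n}$, ${\rm diag}({\bf E}^{T}{\bf u}_{n})=2{\bf I}_{m}$, so \eqref{ApAd_continuos} collapses to ${\bf A}_{P}(t)=y(t){\bf B}_{P}$ and ${\bf A}_{D}(t)=x(t){\bf B}_{D}$, with ${\bf B}_{D}$ the $2(n-2)$-regular line-graph adjacency matrix. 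Substituting into \eqref{continuos_eqs} and using $\sum_{h}({\bf B}_{P})_{ih}=n-1$, $\sum_{h}({\bf B}_{D})_{jh}=2(n-2)$ yields
\[
\dot x=\beta(n-1)(1-x)xy-\gamma x,\qquad \dot y=2\beta(n-2)(1-y)xy-\gamma y .
\]
I would first check that $[0,1]^{2}$ is forward invariant (the axes are invariant, the boundary pieces $x=1$ and $y=1$ point inward) and that on $[0,1]^{2}$ the Jacobian is a Metzler matrix with positive off-diagonal entries in the interior, so the system is strongly \emph{cooperative}; this monotonicity is the structural fact that drives the rest.

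For the equilibria: $(0,0)$ is always one, with Jacobian $-\gamma{\bf I}_{2}$, hence linearly stable for every $\mathcal R$. A nontrivial equilibrium satisfies $\beta(n-1)(1-x^{\star})y^{\star}=\gamma$ and $2\beta(n-2)(1-y^{\star})x^{\star}=\gamma$; eliminating $y^{\star}=[\mathcal R(n-1)(1-x^{\star})]^{-1}$ gives
\[
2(n-1)(n-2)\mathcal R\,(x^{\star})^{2}-\bigl[2(n-1)(n-2)\mathcal R-(n-3)\bigr]x^{\star}+(n-1)=0,
\]
whose discriminant is exactly $\xi$. Thus endemic equilibria exist iff $\xi\ge 0$, i.e. iff $\mathcal R\ge\bigl[\tfrac{1}{\sqrt{n-1}}+\tfrac{1}{\sqrt{2(n-2)}}\bigr]^{2}$ (set $\xi=0$ and simplify with the factorisation $n-3=(\sqrt{2(n-2)}-\sqrt{n-1})(\sqrt{2(n-2)}+\sqrt{n-1})$), and the larger root is precisely the $+\sqrt{\xi}$ expression in \eqref{solutions_complete_graph}. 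To decide stability I would evaluate the Jacobian at such a point and, using the two equilibrium relations, reduce it to the matrix with diagonal entries $-\gamma x^{\star}/(1-x^{\star})$, $-\gamma y^{\star}/(1-y^{\star})$ and off-diagonal entries $\gamma x^{\star}/y^{\star}$, $\gamma y^{\star}/x^{\star}$; its trace is negative and its determinant has the sign of $x^{\star}+y^{\star}-1$. Since $x^{\star}+y^{\star}$ is increasing along the endemic branch, the larger root has $x^{\star}+y^{\star}>1$ (stable node/focus) and the smaller one has $x^{\star}+y^{\star}<1$ (a saddle), which singles out the $+\sqrt{\xi}$ root as the stable endemic state.

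It then remains to pin down the threshold. The planar system is bistable — $(0,0)$ and the endemic node both attract, separated by the stable manifold $W^{s}$ of the saddle $(x_{-}^{\star},y_{-}^{\star})$ — and by the classical theory of monotone planar systems every orbit converges to an equilibrium, the outcome for $(p,p)$ being dictated by its position relative to $W^{s}$. Imitating the cycle, the plan is: (a) below the saddle-node value there is no endemic state and the infection dies out; (b) above it, strong monotonicity traps the orbit from $(p,p)$ strictly below the saddle — hence forces convergence to $(0,0)$ — as long as $p\le x_{-}^{\star}$ (the companion bound $p\le y_{-}^{\star}$ being automatic in this range), and the borderline $p=x_{-}^{\star}$, substituted into the quadratic, gives $\mathcal R=\frac{(n-1)+(n-3)p}{2(n-1)(n-2)p(1-p)}$; (c) once $\mathcal R$ exceeds that value the orbit from $(p,p)$ lies in the endemic basin, so $x^{\star}$ takes the value found above. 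Equating the two threshold expressions and simplifying with the same factorisation of $n-3$ yields the crossover $p^{\star}=\tfrac{1}{1+\sqrt{2(n-2)/(n-1)}}$, hence the two branches of \eqref{threshold_complete_graph}; for $n=3$ everything degenerates to Theorem \ref{theorem1_cycle}. The main obstacle is step (c): certifying that as soon as $(p,p)$ stops being dominated by the saddle it genuinely sits in the endemic basin. Unlike the one-dimensional cycle, here $(p,p)$ need not be order-comparable to the saddle for $\mathcal R$ just above $\frac{(n-1)+(n-3)p}{2(n-1)(n-2)p(1-p)}$, so one must use the full planar cooperative structure — monotonicity of $W^{s}$, the fact that the basin of $(0,0)$ is a lower set and that of the endemic node an upper set, monotone dependence of the orbit on $\mathcal R$, and the invariance of the coordinate axes inside the basin of $(0,0)$ — rather than the elementary trapping argument of step (b).
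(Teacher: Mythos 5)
Your proposal follows the same skeleton as the paper's proof (given in Appendix \ref{appendixregular} for general $d$-regular graphs and specialized to $d=n-1$): symmetry reduction to the planar system $\dot x=\beta(n-1)(1-x)xy-\gamma x$, $\dot y=2\beta(n-2)(1-y)xy-\gamma y$, the same quadratic $2(n-1)(n-2)\mathcal R\,x^2+[(n-3)-2(n-1)(n-2)\mathcal R]x+(n-1)=0$ with discriminant $\xi$, and the same identification of the threshold from the condition $p>x^{\star}_{1}$ (the smaller root), whose resolution in $\mathcal R$ produces the two branches of $\tau_{\rm compl}$ with crossover at $p=\bigl(1+\sqrt{2(n-2)/(n-1)}\bigr)^{-1}$. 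Where you genuinely diverge is the stability analysis: the paper argues via the sign of $\dot x$ around $x^{\star}_{1}$ and $x^{\star}_{2}$, effectively treating the problem as one-dimensional, whereas you linearize the planar system, reduce the Jacobian at an equilibrium to the matrix with entries $-\gamma x^{\star}/(1-x^{\star})$, $\gamma x^{\star}/y^{\star}$, $\gamma y^{\star}/x^{\star}$, $-\gamma y^{\star}/(1-y^{\star})$, and read off stability from the sign of $x^{\star}+y^{\star}-1$; since $x^{\star}_{2}+y^{\star}_{2}=1+\sqrt{\xi}/\bigl(2(n-1)(n-2)\mathcal R\bigr)>1$ and $x^{\star}_{1}+y^{\star}_{1}<1$, this cleanly certifies the larger root as a stable node and the smaller as a saddle, which is more rigorous than the paper's argument. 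The gap you flag in step (c) --- showing that once $(p,p)$ is no longer dominated by the saddle it actually lies in the endemic basin, which in the plane requires the cooperative-system machinery rather than a trapping argument --- is real, but it is a gap the paper's own proof does not close either; the paper simply asserts that $p>x^{\star}_{1}$ implies convergence to $x^{\star}_{2}$. One small imprecision: you write that endemic equilibria exist iff $\xi\ge 0$, but $\xi\ge 0$ also holds for $\mathcal R\le\tau_{1}=\bigl(\tfrac{1}{\sqrt{n-1}}-\tfrac{1}{\sqrt{2(n-2)}}\bigr)^{2}$, where both roots are nonpositive and hence not admissible; the paper handles this case explicitly, and you should too, though it does not affect your final formulas.
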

		In Fig. \ref{fig1new}, panels (a) and (b), we illustrate the evolution of the ASIS model above and below the threshold $\tau_{c}$, compared with the standard SIS model for a cycle with $n=5$ and $p=0.2$. In both panels, the threshold is $\tau_{c}(p)=3.125$. In panel (a), ${\mathcal R}=5$, and, in panel (b), ${\mathcal R}=1.333$. The stable asymptotic solution above the threshold, in panel (a), for the ASIS model is $x^{\star}=\frac{1}{2}\left( 1+ \sqrt{1-\frac{2}{\mathcal{R}}} \right)=0.8872983$, while for the standard SIS model is $x^{\star}=1-\frac{1}{2p{\mathcal R}}=0.5$. In Fig. \ref{fig1new}, panel (c), we represent the same evolution for the complete graph $K_n$ with $n=6$ and $p=\frac{1}{6}$, and with threshold $\tau_{\rm compl}(p)=0.990$. Then, for $\mathcal{R}=2$, the endemic state in Eq. (\ref{solutions_complete_graph}) reduces to $x^{\star}=0.892$.
		\begin{figure}[H]
			\centering
			\subfloat[]{\includegraphics[width=0.40\textwidth]{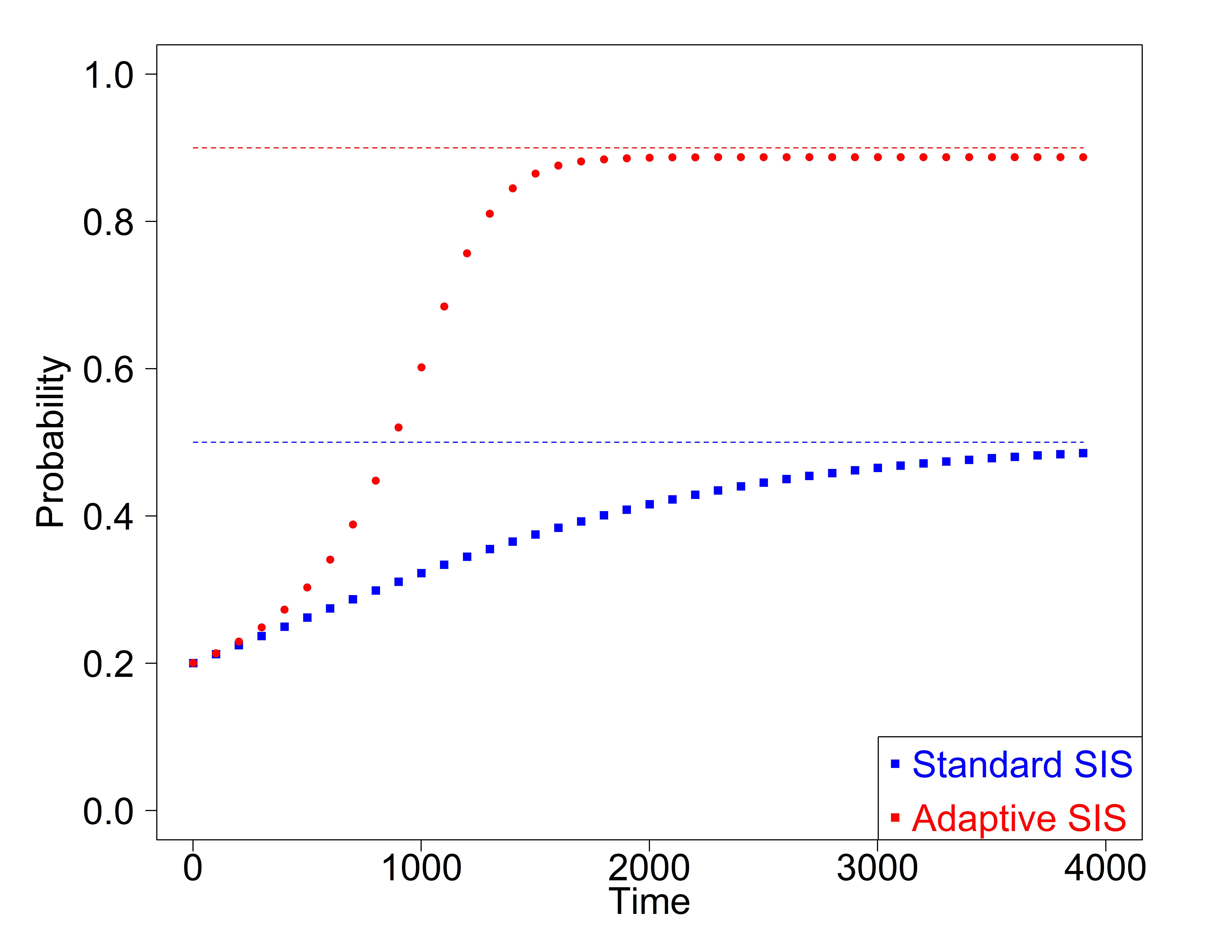}}
			\\ \vspace{-2mm}
			\subfloat[]{\includegraphics[width=0.40\textwidth]{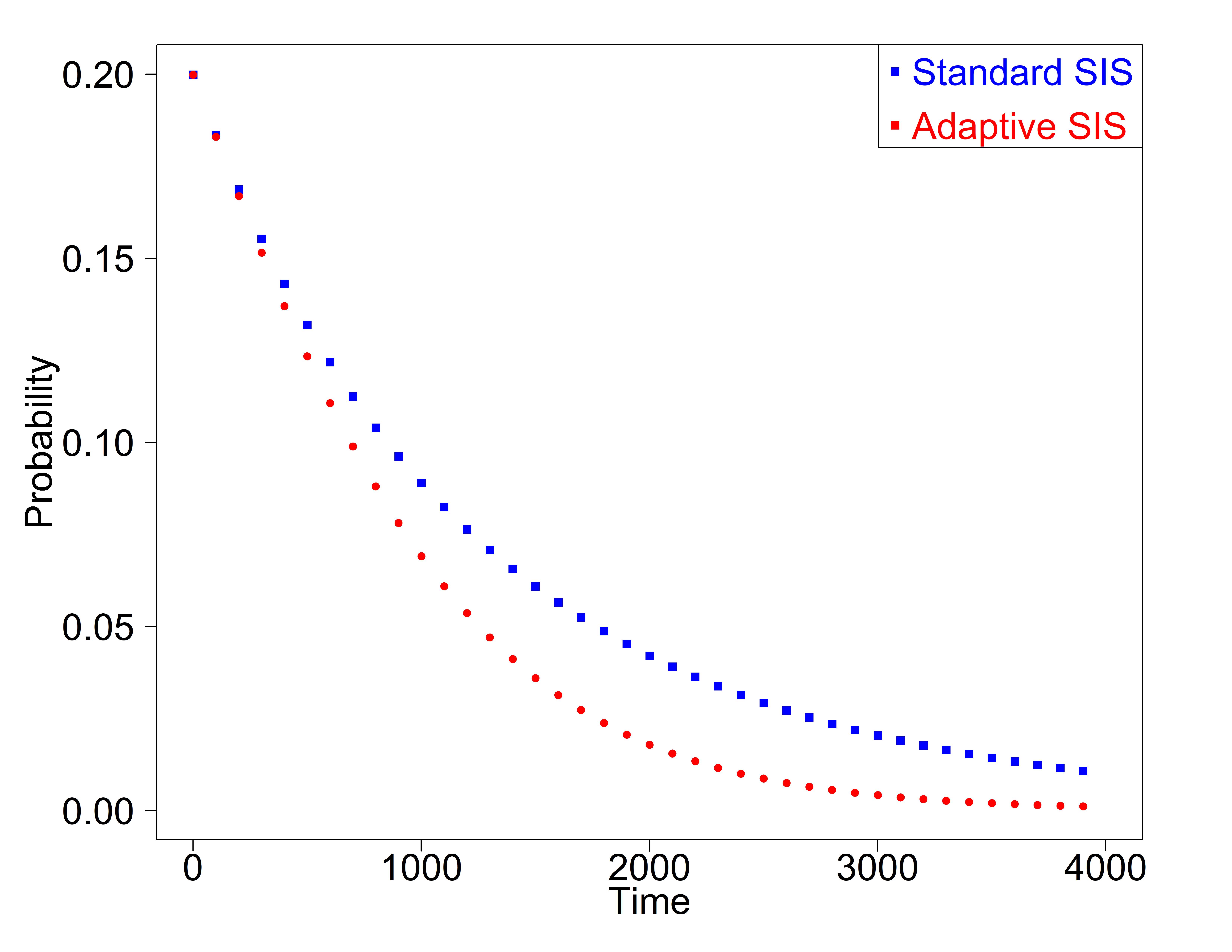}}
			\\ \vspace{-2mm}
			\subfloat[]{\includegraphics[width=0.40\textwidth]{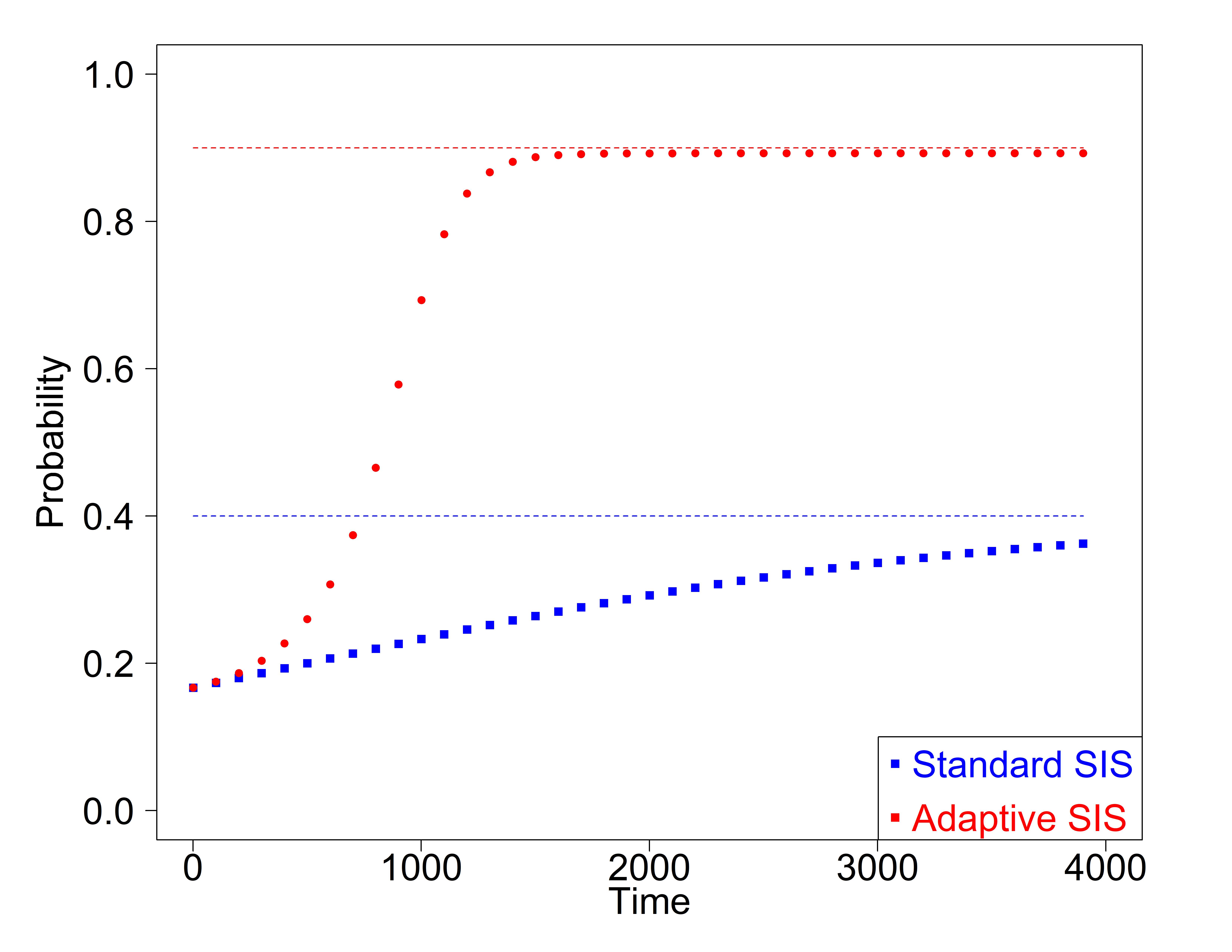}}
			\caption{Probability $x(t)$ for the self-adaptive SIS model (in red circle points) and for the standard SIS model (in blue square points) for (a) cycle graph with $n=5$, $p=1/5$, ${\mathcal R}=5$ ($\beta=0.005$ and $\gamma=0.001$); (b) cycle graph with $n=5$, $p=1/5$, ${\mathcal R}=1.333$ ($\beta=0.002$ and $\gamma=0.0015$); (c) complete graph with $n=6$, $p=1/6$, ${\mathcal R}=2$ ($\beta=0.002$ and $\gamma=0.001$).}
			\label{fig1new} 
		\end{figure}
				
\section{General steady states analysis}
\label{General steady states analysis}
		Our goal now is to study the general steady states of the ASIS model described by Eq. (\ref{ODE1}), 
		${\bf \dot{z}}(t)={\bf H}({\bf z}(t)){\bf z}(t)$, where ${\bf G}\left({\bf z}(t)\right)$ and ${\bf H}\left( \mathbf{z}(t)\right)$ are defined in Eq. (\ref{matrixG}) and (\ref{matrixH}).
		
		\subsection{Nonlinear eigenproblem}
		\label{Nonlineareigenproblem}
		First, we observe that, for all $t$, $\left[ {\bf I} - {\rm diag}\, {\bf z}(t) \right]{\bf G}({\bf z}(t))$ is the product of two symmetric matrices, the first of which is a diagonal matrix with nonnegative entries.
		Therefore, although the product is not a necessarily symmetric matrix, it has only real eigenvalues and its eigenvectors can always be chosen with real components.\footnote{Recall that for any symmetric matrix ${\bf G}$, ${\bf D}{\bf G}$ has the same eigenvalues as ${\bf D}^{1/2}{\bf G}{\bf D}^{1/2}$ for any diagonal matrix ${\bf D}$ with positive entries.}
		
		The identification of the endemic steady states of Eq. (\ref{ODE1}) can be interpreted as a nonlinear eigenproblem. In fact, by setting ${\bf H}({\bf z}^{\star}){\bf z}^{\star}= {\bf 0}$,
	we get\footnote{From now on, we set ${\bf I}_{n+m}={\bf I}$}:
	\begin{equation}
		{\bf z}^{\star}=
		{\mathcal R} \left[ {\bf I} - {\rm diag}\, {\bf z}^{\star} \right] {\bf G}\left({\bf z}^{\star}\right){\bf z}^{\star}.
		\label{eigenvalue_problem}
	\end{equation}
	Therefore, the vector representing the steady state is an eigenvector of the non-symmetric matrix ${\bf M}({\bf z}^{\star})\coloneqq {\mathcal R} \left[ {\bf I} - {\rm diag}\, {\bf z}^{\star} \right] {\bf G}\left({\bf z}^{\star}\right)$. The problem belongs to a peculiar class of eigenvalue problems in which the nonlinearity is produced by the matrix itself depending on and containing the eigenvector being pursued. Although problems of this type have received less attention in the literature than nonlinear eigenproblems, where the nonlinearity is only related to the eigenvalues, some iterative methods for obtaining the dominant eigenvector have been proposed. However, they rely heavily on specific assumptions required on the matrix ${\bf M}({\bf z})$ (see \citet{Meyer1997} and \citet{Jarlebring2014}).
	
	To the best of our knowledge, no effective algorithm has been proposed to find the dominant eigenvector of non-symmetric nonlinear problems like the one in Eq. (\ref{eigenvalue_problem}). 
	The approach we propose to fill this gap is inspired by the two above-mentioned contributions existing in the literature. In particular, \citet{Meyer1997} proposes a nonlinear eigenvector algorithm to show the global convergence for problems of the form ${\bf R}({\bf z}){\bf z}=\lambda {\bf S}({\bf z}){\bf z}$ where ${\bf R}({\bf z})$ and ${\bf S}({\bf z})$ are real symmetric block-diagonal matrices. The basic idea is to start with some arbitrary vector ${\bf z}_{0}$, and fixed matrices ${\bf R}({\bf z}_{0})$ and ${\bf S}({\bf z}_{0})$, and solve an ordinary generalized eigenproblem to find out the eigenvector ${\bf z}_{1}$ corresponding to the largest eigenvalue. Then the matrices are updated to ${\bf R}({\bf z}_{1})$ and ${\bf S}({\bf z}_{1})$, treated as fixed, and another eigenproblem is solved, and so on until the procedure converges. 
	We stress that the entire procedure is applied to matrices ${\bf R}({\bf z})$ and ${\bf S}({\bf z})$ that are symmetric, while, in our case,
	the matrix ${\bf M}({\bf z^*})$ is a real block-diagonal matrix but it is not symmetric.
	 
	An alternative iterative method has been proposed in \citet{Jarlebring2014} for scale invariant matrices, that is matrices ${\bf M}({\bf z})$ such that ${\bf M}(\alpha{\bf z})={\bf M}({\bf z})$, $\forall \alpha \in {\mathbb R}$. This inverse algorithm is based on the Jacobian matrix ${\bf J}({\bf z})$ of the problem and the iteration takes the form
	\begin{equation}
		{\bf z}_{k+1}=\frac{\left( {\bf J}-\sigma {\bf I} \right)^{-1}{\bf z}_{k}}{||\left( {\bf J}-\sigma {\bf I} \right)^{-1}{\bf z}_{k}||}
		\label{iterative1}
	\end{equation}
	where $\sigma\in {\mathbb R}$ is called \textit{shift} and controls to which pair of eigenvalue and eigenvector the iteration converges. An ${\bf M}$-version is also discussed, in which the Jacobian matrix is replaced by the matrix ${\bf M}({\bf z})$, at the cost of losing some convergence properties.
	Even neglecting that we have no explicit expression of the Jacobian matrix, however, again, this approach relies dramatically on the invariance property of the matrix ${\bf M}({\bf z})$, which is not the case of the matrix involved in our model.
	
	We will retain the basic idea of the algorithm proposed by \citet{Meyer1997} in the discretization of the ASIS problem that will be proposed shortly, and we will use the ${\bf M}$-version of Eq. \eqref{iterative1} to numerically compute the dominant eigenvector. In fact, we aim at providing an appropriate discretization of the model which can be interpreted as an algorithm for constructing a non-normalized version of the dominant eigenvectors.
	
	Let us first introduce the rescaled time variable $t'=\gamma t$ so that Eq. (\ref{ODE1}) becomes
	\begin{equation}
		\dot{\bf z}={\mathcal R}\left[ {\bf I}-{\rm diag}\, {\bf z} \right] {\bf G}({\bf z}){\bf z}-{\bf z}
	\end{equation}
	equivalent to
	\begin{equation}
		\left\{ 
		\begin{array}{l}
			\dot{\bf x} ={\mathcal R}\left[ {\bf I}_{n} - {\rm diag}\, {\bf x} \right] {\bf A}_{P}({\bf y})\, {\bf x} -{\bf x}\\
			\hfill \\
			\dot{\bf y} ={\mathcal R}\left[ {\bf I}_{m} - {\rm diag}\, {\bf y} \right] {\bf A}_{D}({\bf x})\, {\bf y} -{\bf y}\\
		\end{array}
		\right.
		.
		\label{continuous_eqs_rescaled}
	\end{equation}
	Let us now consider the following forward discretization of the two processes in Eq. (\ref{continuous_eqs_rescaled}). Let $\{t_{k}\}, \ k\in \mathbb{N}$, such that the step size is assumed, for the sake of simplicity, to be constant and equal to $1$: $t_{k+1}-t_{k}=1$. Let us set ${\bf z}_{k}={\bf z}(t_{k})$ and similar expressions for ${\bf x}$ and ${\bf y}$.
	Eq. (\ref{continuous_eqs_rescaled}) transforms into
	the following set of discrete-time Markovian
	equations:
	\begin{equation}
		\left\{ 
		\begin{array}{l}
			{\bf x}_{k+1}={\mathcal R}\left[ {\bf I}_{n} - {\rm diag}\, {\bf x}_{k} \right]{\bf A}_{P}({\bf y}_{k})\, {\bf x}_{k}\\
			\hfill \\
			{\bf y}_{k+1}={\mathcal R} \left[ {\bf I}_{m} - {\rm diag}\, {\bf y}_{k} \right] {\bf A}_{D}({\bf x}_{k})\, {\bf y}_{k}\\
		\end{array}
		\right.
		\label{discrete_solution}
	\end{equation}
	which iteratively update matrices ${\bf A}_{P}({\bf y}_{k})$ and ${\bf A}_{D}({\bf x}_{k})$ and compute the new vectors ${\bf x}_{k+1}$ and ${\bf y}_{k+1}$. The process ends when a stopping tolerance $\varepsilon$ is reached. The discretized ASIS model is illustrated in the Algorithm \ref{algorithm}.
	
	\begin{algorithm}[!ht]
		\DontPrintSemicolon
		\KwIn{Incidence matrix $\bf E$; initial probabilities ${\bf x}_{0}$ and ${\bf y}_{0}$; stopping tolerance $\varepsilon$}
		\KwOut{Steady state probabilities ${\bf x}^{\star}$ and ${\bf y}^{\star}$}
		${\bf x}_{0}=p{\bf u}_{n}$ and ${\bf y}_{0}=p{\bf u}_{m}$\;
		\Repeat{
			$||{\bf x}_{k+1}-{\bf x}_{k}||/||{\bf x}_{k}||+||{\bf y}_{k+1}-{\bf y}_{k}||/||{\bf y}_{k}||<\varepsilon$;
		}{
			${\bf A}_{P}({\bf y}_{k}) \gets {\bf E}\, {\rm diag}({\bf y}_{k}) {\bf E}^{T}-{\rm diag}({\bf E}{\bf y}_{k})$ \;
			${\bf A}_{D}({\bf x}_{k}) \gets {\bf E}^{T}\, {\rm diag}({\bf x}_{k}) {\bf E}-{\rm diag}({\bf E}^{T}{\bf x}_{k})$ \;
			${\bf x}_{k+1} \gets {\mathcal R}  \left[ {\bf I}_{n} - {\rm diag}\, {\bf x}_{k} \right] {\bf A}_{P}({\bf y}_{k})\, {\bf x}_{k}$ \;
			${\bf y}_{k+1} \gets {\mathcal R}  \left[ {\bf I}_{m} - {\rm diag}\, {\bf y}_{k} \right] {\bf A}_{D}({\bf x}_{k})\, {\bf y}_{k}$ \;
		}
		\Return{${\bf x}^{\star}$, ${\bf y}^{\star}$}\;
		\caption{{\sc Self-Adaptive SIS Model}}
		\label{algorithm}
	\end{algorithm}
	
	This algorithm, although modified, traces the idea of the powers method and particularly that in the nonlinear case discussed above.
	Let us observe that, being $\left( {\bf M}({\bf z})-\sigma {\bf I}\right)^{-1}{\bf z}=(1-\sigma)^{-1}{\bf z}$, for $\sigma\in {\mathbb R}$, then matrix $\left( {\bf M}({\bf z})-\sigma {\bf I}\right)^{-1}$ has the same eigenvectors as ${\bf M}({\bf z})$. Therefore, we can devise an inverse iteration method which is similar to the one in Eq. (\ref{iterative1}), provided that we keep as the argument of the nonlinear matrix ${\bf M}$ the non-normalized version of the vector ${\bf z}$.
	It is worth noting that the largest eigenvalue of the block matrix ${\bf M}({\bf z}^{\star})$ is $\lambda_{\bf M}^{(1)}=1$ with multiplicity $2$, as it represents the adjacency matrix of a network with two disconnected components, the network $G_P$ and its line graph $G_D$. The corresponding dominant eigenvectors are ${\bf z}_{1}^*=\left[{\bf x}^{\star},{\bf 0}_{m}\right]^T$ and ${\bf z}_{2}^*=\left[{\bf 0}_{n},{\bf y}^{\star}\right]^T$ and the corresponding normalized eigenvectors are then $\psi_{\bf M}^{(1)}={\bf z}_1^{\star}/||{\bf z}_1^{\star}||$ and $\psi_{\bf M}^{(2)}={\bf z}_2^{\star}/||{\bf z}_2^{\star}||$.
	
	\vspace{3cm}
	
	\subsection{Stability of the general endemic and disease-free steady states}
	\label{Stability of the general endemic and disease-free steady states}
	
	We now turn to the problem of the stability of equilibrium solutions. We present first two preliminary results about the matrix ${\bf G}({\bf z})$ and the Jacobian matrix ${\bf J}({\bf z})$ of the general problem in Eq. \eqref{ODE1}.
	
	\begin{lemma}
		The linear operator ${\bf G}({\bf z}): {\mathbb R}^{n+m}\to {\mathbb R}^{n+m} $ is a homogeneous operator of degree $1$
		\begin{equation}
			{\bf G}(\alpha {\bf z})=\alpha {\bf G}({\bf z}), \quad \forall \alpha \in {\mathbb R}.
		\end{equation}
	\end{lemma}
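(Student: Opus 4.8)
The plan is to observe that the matrix-valued map $\mathbf{z}\mapsto\mathbf{G}(\mathbf{z})$ is assembled entirely out of linear ingredients, so that homogeneity of degree $1$ follows immediately; there is no genuine obstacle here, and I would simply record the argument in three short steps.

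First, I would note that $\mathbf{G}(\mathbf{z})$ depends on $\mathbf{z}$ only through the coordinate projections $\mathbf{x}=\mathbf{P}_{n}\mathbf{z}$ and $\mathbf{y}=\mathbf{Q}_{m}\mathbf{z}$, which enter the two diagonal blocks $\mathbf{A}_{P}(\mathbf{y})$ and $\mathbf{A}_{D}(\mathbf{x})$, cf.\ Eq.\ \eqref{matrixG} together with Eq.\ \eqref{system3}. Since $\mathbf{P}_{n}$ and $\mathbf{Q}_{m}$ are constant matrices, $\mathbf{P}_{n}(\alpha\mathbf{z})=\alpha\mathbf{x}$ and $\mathbf{Q}_{m}(\alpha\mathbf{z})=\alpha\mathbf{y}$ for every $\alpha\in\mathbb{R}$.

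Second, I would check that each block map is linear in its vector argument. The map $\mathbf{v}\mapsto\mathrm{diag}(\mathbf{v})$ is linear; left and right multiplication by the fixed incidence matrix $\mathbf{E}$ (resp.\ $\mathbf{E}^{T}$) is linear; and $\mathbf{v}\mapsto\mathbf{E}\mathbf{v}$ is linear, hence so is $\mathbf{v}\mapsto\mathrm{diag}(\mathbf{E}\mathbf{v})$. Reading off Eq.\ \eqref{ApAd_continuos}, this gives
\begin{equation*}
\mathbf{A}_{P}(\alpha\mathbf{v})=\mathbf{E}\,\mathrm{diag}(\alpha\mathbf{v})\,\mathbf{E}^{T}-\mathrm{diag}\bigl(\mathbf{E}(\alpha\mathbf{v})\bigr)=\alpha\bigl(\mathbf{E}\,\mathrm{diag}(\mathbf{v})\,\mathbf{E}^{T}-\mathrm{diag}(\mathbf{E}\mathbf{v})\bigr)=\alpha\,\mathbf{A}_{P}(\mathbf{v}),
\end{equation*}
and, by the identical computation with $\mathbf{E}$ and $\mathbf{E}^{T}$ interchanged, $\mathbf{A}_{D}(\alpha\mathbf{v})=\alpha\,\mathbf{A}_{D}(\mathbf{v})$.

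Third, I would combine the two steps using the block structure: for any $\alpha\in\mathbb{R}$,
\begin{equation*}
\mathbf{G}(\alpha\mathbf{z})=\begin{bmatrix}\mathbf{A}_{P}(\mathbf{Q}_{m}(\alpha\mathbf{z})) & \mathbf{0}\\ \mathbf{0} & \mathbf{A}_{D}(\mathbf{P}_{n}(\alpha\mathbf{z}))\end{bmatrix}=\begin{bmatrix}\mathbf{A}_{P}(\alpha\mathbf{y}) & \mathbf{0}\\ \mathbf{0} & \mathbf{A}_{D}(\alpha\mathbf{x})\end{bmatrix}=\alpha\begin{bmatrix}\mathbf{A}_{P}(\mathbf{y}) & \mathbf{0}\\ \mathbf{0} & \mathbf{A}_{D}(\mathbf{x})\end{bmatrix}=\alpha\,\mathbf{G}(\mathbf{z}),
\end{equation*}
which is the asserted identity. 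The only point deserving a second's thought is that the ``degree'' subtractions $\mathrm{diag}(\mathbf{E}\mathbf{y})$ and $\mathrm{diag}(\mathbf{E}^{T}\mathbf{x})$ — which make $\mathbf{A}_{P}$ and $\mathbf{A}_{D}$ Laplacian-type objects rather than plain weighted adjacency matrices — are themselves linear in $\mathbf{y}$ and $\mathbf{x}$, so they scale with $\alpha$ exactly like the adjacency terms and homogeneity is preserved; note moreover that the identity holds for \emph{all} real $\alpha$ (not merely $\alpha>0$), so $\mathbf{G}$ is in fact fully homogeneous of degree $1$, as stated.
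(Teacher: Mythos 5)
Your proof is correct and takes essentially the same route as the paper, which simply notes that by Eq.~\eqref{ApAd_continuos} one has ${\bf A}_{P}(\alpha {\bf y})=\alpha {\bf A}_{P}({\bf y})$ and ${\bf A}_{D}(\alpha {\bf x})=\alpha {\bf A}_{D}({\bf x})$ for all $\alpha\in\mathbb{R}$; you merely spell out the underlying linearity of ${\rm diag}(\cdot)$ and of multiplication by the fixed incidence matrix, and then assemble the blocks.
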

	\begin{proof}
		By definitions (\ref{ApAd_continuos}), ${\bf A}_{P}(\alpha {\bf y}) =\alpha {\bf A}_{P}( 
		{\bf y})$ and ${\bf A}_{D}(\alpha {\bf x}) =\alpha {\bf A}_{D}({\bf x}), \forall \alpha \in {\mathbb R}$ . 
	\end{proof}
	\begin{lemma}
		The Jacobian matrix ${\bf J}({\bf z})$ of the system in Eq. (\ref{ODE1}) satisfies the following relation
		\begin{equation}
			{\bf J}({\bf z}){\bf z}=\left[ \beta \left( 2\mathbf{I}-3\, {\rm diag}\, {\bf z} \right) {\bf G}\left( \mathbf{z}\right) -\gamma \mathbf{I}\right] {\bf z}.
		\end{equation}
		\label{lemma2}
	\end{lemma}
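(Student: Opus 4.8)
The plan is to exploit the homogeneity structure of the right–hand side of Eq.~(\ref{ODE1}) together with Euler's identity for homogeneous maps, so that no explicit differentiation of ${\bf G}$ is ever needed. First I would rewrite the vector field as
\begin{equation*}
{\bf F}({\bf z}):={\bf H}({\bf z}){\bf z}=\beta\,{\bf G}({\bf z}){\bf z}-\beta\,{\rm diag}({\bf z})\,{\bf G}({\bf z}){\bf z}-\gamma{\bf z},
\end{equation*}
so that ${\bf J}({\bf z})$ is precisely the Jacobian of ${\bf F}$ and the claim amounts to evaluating ${\bf J}({\bf z}){\bf z}$.

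Next I would record the degree of homogeneity of each of the three summands. By the previous lemma, ${\bf G}$ is homogeneous of degree $1$; moreover, by Eq.~(\ref{ApAd_continuos}) every entry of ${\bf G}({\bf z})$ is \emph{linear} in ${\bf z}$, so each summand is a polynomial (hence smooth) map of ${\bf z}$. It follows that ${\bf g}({\bf z}):={\bf G}({\bf z}){\bf z}$ is homogeneous of degree $2$ (since ${\bf G}(\alpha{\bf z})(\alpha{\bf z})=\alpha^{2}{\bf g}({\bf z})$), that ${\bf p}({\bf z}):={\rm diag}({\bf z})\,{\bf G}({\bf z}){\bf z}$ is homogeneous of degree $3$, and that ${\bf z}$ is trivially homogeneous of degree $1$. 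Euler's identity — if a differentiable map ${\bf h}$ is homogeneous of degree $d$ then $(D{\bf h})({\bf z})\,{\bf z}=d\,{\bf h}({\bf z})$ — then applies termwise, giving, by linearity of the derivative,
\begin{equation*}
{\bf J}({\bf z}){\bf z}=2\beta\,{\bf G}({\bf z}){\bf z}-3\beta\,{\rm diag}({\bf z})\,{\bf G}({\bf z}){\bf z}-\gamma{\bf z}=\left[\beta\left(2{\bf I}-3\,{\rm diag}\,{\bf z}\right){\bf G}({\bf z})-\gamma{\bf I}\right]{\bf z},
\end{equation*}
which is the assertion. If one prefers not to invoke Euler's theorem as a black box, the identical computation can be carried out coordinatewise: each component $F_i({\bf z})$ is a polynomial all of whose monomials have degree $1$, $2$, or $3$, and for a monomial $m$ of degree $d$ one has $\sum_j z_j\,\partial_{z_j} m=d\,m$; grouping the monomials of $F_i$ by degree and summing over $i$ reproduces exactly the coefficients $2$, $3$, and $1$ above.

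The only point requiring care — and the "hard part", such as it is — is the bookkeeping that identifies the degree-$2$ part of ${\bf F}$ with $\beta\,{\bf G}({\bf z}){\bf z}$, the degree-$3$ part with $-\beta\,{\rm diag}({\bf z})\,{\bf G}({\bf z}){\bf z}$, and the degree-$1$ part with $-\gamma{\bf z}$; this is immediate from Eqs.~(\ref{matrixG})--(\ref{matrixH}) and the homogeneity of ${\bf G}$, and it is what makes the differentiation of the nonlinear matrix ${\bf M}({\bf z})$ unnecessary, since ${\bf G}$ enters ${\bf F}$ only through the homogeneous combinations ${\bf g}$ and ${\bf p}$.
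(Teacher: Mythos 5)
Your proof is correct and is essentially the paper's argument in different clothing: the paper computes the Gateaux derivative $\lim_{\varepsilon\to 0}\varepsilon^{-1}\left[{\bf H}((1+\varepsilon){\bf z})((1+\varepsilon){\bf z})-{\bf H}({\bf z}){\bf z}\right]$ and extracts the coefficients $2$ and $3$ from the linear terms of $(1+\varepsilon)^{2}$ and $(1+\varepsilon)^{3}$, which is exactly Euler's identity for the degree-$2$ and degree-$3$ homogeneous summands you isolate, both resting on the same Lemma asserting that ${\bf G}$ is homogeneous of degree $1$. Your termwise Euler formulation is a clean repackaging of the same radial directional-derivative computation, with no substantive difference in content.
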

	\begin{proof}
		By definition,
		\begin{equation*}
				\resizebox{1.00\hsize}{!}{$
			\begin{split}
				{\bf J}({\bf z}){\bf z}&=
				\lim_{\varepsilon \to 0}
				\frac{{\bf H}({\bf z+\varepsilon \bf z})({\bf z+\varepsilon \bf z})-{\bf H}({\bf z})({\bf z})}{\varepsilon}\\
				&= \lim_{\varepsilon \to 0}
				\frac{\beta \left[ \left( {\bf I}-\, {\rm diag} \left( {\bf z+\varepsilon z}\right) \right)  {\bf G}\left( {\bf z+\varepsilon z}
					\right) -\gamma {\bf I} \right] \left( {\bf z+\varepsilon z}\right) -\beta \left[ \left( {\bf I}-\, {\rm diag} \left( {\bf z} \right) \right) {\bf G}\left( {\bf z}\right) 
					-\gamma {\bf I}\right] {\bf z}}{\varepsilon } \\
				&= \lim_{\varepsilon \to 0}
				\frac{\beta (1+\varepsilon)^{2} \left[ \left( {\bf I}-(1+\varepsilon)\, {\rm diag} \left( {\bf z}\right) \right)  {\bf G}({\bf z}) {\bf z} \right] -\beta \left[ \left( {\bf I}-{\rm diag} ( {\bf z}) \right)  {\bf G}({\bf z}){\bf z}\right]-\gamma\varepsilon {\bf z}}{\varepsilon } \\
				&= \lim_{\varepsilon \to 0}
				\frac{\beta \left[  \left( (1+\varepsilon)^{2} -1\right){\bf I}-\left( (1+\varepsilon)^{3}-1\right) \, {\rm diag} ({\bf z}) \right]  {\bf G}({\bf z}) {\bf z}  -\gamma\varepsilon {\bf z}}{\varepsilon } \\
				&= \lim_{\varepsilon \to 0}
				\frac{\beta \left[  \left( 2\varepsilon+\varepsilon)^{2}\right){\bf I}-\left( 3\varepsilon+3\varepsilon^{2}+\varepsilon^{3}\right) \, {\rm diag} ({\bf z}) \right]  {\bf G}({\bf z}) {\bf z}  -\gamma\varepsilon {\bf z}}{\varepsilon } \\
				&=\left[ \beta \left( 2{\bf I}-3\, {\rm diag}\, {\bf z} \right) {\bf G}( {\bf z}) -\gamma {\bf I}\right] {\bf z}    
			\end{split}$}
		\end{equation*}
		\end{proof}
	It is important to note that we cannot provide an explicit expression of the Jacobian matrix ${\bf J}({\bf z})$. However, through Lemma (\ref{lemma2}), we are able to describe the action of this matrix, evaluated in a general vector ${\bf z}$, on the same vector ${\bf z}$.
	We now turn to the main Theorem.

	\begin{theorem}
		Given an undirected, weighted and connected network, a non-null equilibrium solution $\bf z^{\star}$ of Eq. (\ref{ODE1}) represents a stable endemic steady state for the ASIS model if $z^{\star}_i\geq 1-\frac{\sqrt 2}{2}$, $\forall i=1,...,n+m$.
		\label{theorem3}
	\end{theorem}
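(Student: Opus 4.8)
The plan is to prove local (exponential) asymptotic stability by linearisation, reducing everything to a spectral bound on the Jacobian ${\bf J}({\bf z}^{\star})$. First I would record the shape of ${\bf J}({\bf z})$. Since ${\bf G}$ is linear in its argument (Lemma \ref{lemma1} -- here I mean the homogeneity lemma), ${\bf G}({\bf z}){\bf z}$ is quadratic and, differentiating (or combining Lemma \ref{lemma2} with the incidence-matrix formulas for ${\bf A}_{P},{\bf A}_{D}$), one gets ${\bf J}({\bf z})=\beta\left[{\bf I}-{\rm diag}\,{\bf z}\right]\bigl({\bf G}({\bf z})+{\bf K}({\bf z})\bigr)-\beta\,{\rm diag}\bigl({\bf G}({\bf z}){\bf z}\bigr)-\gamma{\bf I}$, where ${\bf K}({\bf z})$ is the block anti-diagonal coupling matrix whose only nonzero blocks are $\partial\bigl({\bf A}_{P}({\bf y}){\bf x}\bigr)/\partial{\bf y}={\bf E}\,{\rm diag}({\bf E}^{T}{\bf x})-{\rm diag}({\bf x}){\bf E}$ and its dual analogue, i.e. the derivatives of each adjacency matrix with respect to the attributes living on the other graph. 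Two facts about ${\bf K}$ will be used: $\bigl({\bf G}({\bf z})+{\bf K}({\bf z})\bigr)$ is entrywise nonnegative with zero diagonal whenever ${\bf z}\ge{\bf 0}$, and Lemma \ref{lemma2} is equivalent to $\bigl({\bf G}({\bf z}^{\star})+{\bf K}({\bf z}^{\star})\bigr){\bf z}^{\star}=2\,{\bf G}({\bf z}^{\star}){\bf z}^{\star}$.

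Next I would exploit the equilibrium identity. Writing ${\bf w}\coloneqq{\bf G}({\bf z}^{\star}){\bf z}^{\star}$, the relation ${\bf H}({\bf z}^{\star}){\bf z}^{\star}={\bf 0}$ reads $\beta(1-z^{\star}_{i})w_{i}=\gamma z^{\star}_{i}$; with $\gamma>0$ this forces $0<z^{\star}_{i}<1$ (positivity being automatic under the hypothesis $z^{\star}_{i}\ge1-\tfrac{\sqrt2}{2}$), so ${\bf D}\coloneqq\left[{\bf I}-{\rm diag}\,{\bf z}^{\star}\right]^{1/2}$ is a positive diagonal matrix and $-\beta\,{\rm diag}({\bf w})-\gamma{\bf I}=-\gamma{\bf D}^{-2}$. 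Conjugating, $\widetilde{\bf J}\coloneqq{\bf D}^{-1}{\bf J}({\bf z}^{\star}){\bf D}=\beta{\bf D}\bigl({\bf G}({\bf z}^{\star})+{\bf K}({\bf z}^{\star})\bigr){\bf D}-\gamma{\bf D}^{-2}$ has the same spectrum as ${\bf J}({\bf z}^{\star})$, so it suffices to show that its symmetric part is negative definite, because then ${\rm Re}(\mu)\le\lambda_{\max}(\widetilde{\bf J}_{s})<0$ for every eigenvalue $\mu$. Setting ${\bf S}\coloneqq{\bf G}({\bf z}^{\star})+\tfrac12\bigl({\bf K}({\bf z}^{\star})+{\bf K}({\bf z}^{\star})^{T}\bigr)$ (a nonnegative \emph{symmetric} matrix, since ${\bf G}$ is symmetric) the goal becomes $\beta{\bf D}{\bf S}{\bf D}\prec\gamma{\bf D}^{-2}$. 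The decisive structural input is that $\beta{\bf D}{\bf G}({\bf z}^{\star}){\bf D}$ is a nonnegative symmetric matrix admitting the \emph{strictly positive} eigenvector ${\bf D}^{-1}{\bf z}^{\star}$ with eigenvalue exactly $\gamma$ --- a one-line check from $\beta{\bf D}{\bf w}=\gamma{\bf D}^{-1}{\bf z}^{\star}$ --- hence its Perron eigenvalue is $\gamma$ and $\beta{\bf D}{\bf G}({\bf z}^{\star}){\bf D}\preceq\gamma{\bf I}$. Therefore $\widetilde{\bf J}_{s}\preceq\tfrac{\beta}{2}{\bf D}\bigl({\bf K}({\bf z}^{\star})+{\bf K}({\bf z}^{\star})^{T}\bigr){\bf D}-\gamma\bigl({\bf D}^{-2}-{\bf I}\bigr)$, and, after a congruence by ${\bf D}^{-1}$, it is enough to establish $\tfrac{\beta}{2}\bigl({\bf K}({\bf z}^{\star})+{\bf K}({\bf z}^{\star})^{T}\bigr)\prec\gamma\,{\rm diag}\!\bigl(z^{\star}_{i}/(1-z^{\star}_{i})^{2}\bigr)$.

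This last inequality is the heart of the matter, and it is where the threshold enters: the hypothesis $z^{\star}_{i}\ge1-\tfrac{\sqrt2}{2}$ is exactly $2(1-z^{\star}_{i})^{2}\le1$, the quadratic condition that makes the right-hand diagonal large enough to dominate the node-to-edge weights $x^{\star}_{j}$ (and edge-to-node weights $y^{\star}_{\ell}$) that constitute ${\bf K}({\bf z}^{\star})$; I would carry this out by an edgewise/nodewise (diagonally-scaled) estimate on the nonnegative symmetric matrix ${\bf K}({\bf z}^{\star})+{\bf K}({\bf z}^{\star})^{T}$, comparing each coupling weight with the relevant diagonal entry $z^{\star}_{i}/(1-z^{\star}_{i})^{2}$ and invoking $0<z^{\star}_{i}<1$ together with $2(1-z^{\star}_{i})^{2}\le1$, using the exact computations for the cycle and the complete graph (Theorems \ref{theorem1_cycle}--\ref{theorem2_complete}) as a consistency check. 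I expect this coupling estimate to be the main obstacle, precisely because ${\bf K}({\bf z}^{\star})$ is not symmetric and, unlike ${\bf G}({\bf z}^{\star})$, does not inherit a convenient Perron structure from the equilibrium relation, so the bound must be extracted from its explicit entries rather than from a clean eigenvector identity. As a lower-tech fallback one can instead use that ${\bf J}({\bf z}^{\star})$ is a Metzler matrix (its off-diagonal entries are nonnegative since $0\le z^{\star}_{i}<1$ and ${\bf G}+{\bf K}\ge{\bf 0}$), for which Hurwitz stability is equivalent to the existence of a positive vector ${\bf v}$ with ${\bf J}({\bf z}^{\star}){\bf v}<{\bf 0}$; Lemma \ref{lemma2} gives the clean value $\bigl({\bf J}({\bf z}^{\star}){\bf z}^{\star}\bigr)_{i}=\gamma z^{\star}_{i}(1-2z^{\star}_{i})/(1-z^{\star}_{i})$, which already yields stability under the cruder sufficient condition $z^{\star}_{i}>\tfrac12$, and the improvement down to $z^{\star}_{i}\ge1-\tfrac{\sqrt2}{2}$ is exactly what the sharper quadratic-form argument above is designed to recover.
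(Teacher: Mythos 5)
Your proposal sets up a legitimate and in fact more systematic linearisation than the paper's, and several of your intermediate claims check out: the decomposition ${\bf J}({\bf z})=\beta\left[{\bf I}-{\rm diag}\,{\bf z}\right]\left({\bf G}({\bf z})+{\bf K}({\bf z})\right)-\beta\,{\rm diag}\left({\bf G}({\bf z}){\bf z}\right)-\gamma{\bf I}$ is correct, it is consistent with Lemma \ref{lemma2} via ${\bf K}({\bf z}){\bf z}={\bf G}({\bf z}){\bf z}$, the identity $\beta\,{\rm diag}({\bf w})+\gamma{\bf I}=\gamma{\bf D}^{-2}$ follows from the equilibrium relation, and the Perron observation that ${\bf D}^{-1}{\bf z}^{\star}$ is a positive eigenvector of $\beta{\bf D}{\bf G}({\bf z}^{\star}){\bf D}$ with eigenvalue $\gamma$ is a genuinely nice way to dispose of the ${\bf G}$ part. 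But the proof is not complete: the entire content of the theorem --- why the threshold is $1-\frac{\sqrt 2}{2}$ rather than something else --- is concentrated in the inequality $\tfrac{\beta}{2}\left({\bf K}({\bf z}^{\star})+{\bf K}({\bf z}^{\star})^{T}\right)\prec\gamma\,{\rm diag}\left(z^{\star}_i/(1-z^{\star}_i)^2\right)$, and you explicitly defer it ("I would carry this out by an edgewise/nodewise estimate", "I expect this coupling estimate to be the main obstacle"). You give no argument, not even a sketch, for why the entries of ${\bf K}({\bf z}^{\star})$ (which are the values $x^{\star}_j$, $y^{\star}_{\ell}$ of \emph{other} components, not of $z^{\star}_i$ itself) are dominated by the diagonal $z^{\star}_i/(1-z^{\star}_i)^2$ under the stated hypothesis on a general topology; your own consistency check only covers the cycle, where all components coincide. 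The fallback Metzler argument is rigorous but proves a strictly weaker statement ($z^{\star}_i>\tfrac12$ instead of $z^{\star}_i\geq 1-\tfrac{\sqrt 2}{2}\approx 0.293$), so as it stands the claimed threshold is not established.

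For comparison, the paper takes a very different route: it never touches the full Jacobian. It uses Lemma \ref{lemma2} only for the action of ${\bf J}({\bf z})$ on ${\bf z}$ itself, obtaining $({\bf J}({\bf z}){\bf z})_i=\gamma z_i(1-2z_i)/(1-z_i)$, reduces the linearised dynamics to $n+m$ decoupled scalar ODEs of the form $\dot z_i=\gamma\left[c_i+(1-c_i)z_i-2z_i^2\right]/(1-z_i)$, and integrates each one explicitly; the quadratic $2z^{\star 2}-4z^{\star}+1$ whose root is $1-\frac{\sqrt 2}{2}$ appears as the quantity $\alpha$ governing the sign of the exponential in the implicit solution. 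That argument effectively tests perturbations along the direction of ${\bf z}^{\star}$ only, so your full-Jacobian program is, in principle, the stronger one --- but to count as a proof it must actually deliver the coupling bound, which is precisely the step you have left open.
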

	
	\begin{proof}
		Let us show that an endemic stable steady state exists, by a constructive proof. The steady state of the general problem in Eq. (\ref{ODE1}) is defined by the nonlinear eigenvalue problem ${\bf H}({\bf z}){\bf z}={\bf 0}$, equivalent to $\beta\,( {\bf{I}-\rm diag}\,{\bf z}){\bf G}({\bf z}){\bf z}=\gamma {\bf z}$. Then, the steady state has to satisfy the equality
		\begin{equation}
			\beta{\bf G}({\bf z}){\bf z}=\gamma ( {\bf{I}-\rm diag}\,{\bf z})^{-1}{\bf z} 
		\end{equation} 
		for $z_{i}\neq 1, \forall i=1,\dots, n+m$. 
		Conversely, in the steady state, by Lemma (\ref{lemma2}), each component $i$ of the vector ${\bf J}({\bf z}){\bf z}$ satisfies: 
		\begin{equation}
			\begin{split}
				({\bf J}({\bf z}){\bf z})_i&=
				(\left[ \beta \left( 2{\bf I}-3\, {\rm diag}\, {\bf z} \right) {\bf G}( {\bf z})-\gamma {\bf I}\right] {\bf z})_i \\
				&= (\left( 2{\bf I}-3\, {\rm diag}\, {\bf z} \right) \beta {\bf G}( {\bf z}){\bf z})_i-\gamma z_i \\
				& = \gamma (\left( 2{\bf I}-3\, {\rm diag}\, {\bf z} \right)({\bf{I}-\rm diag}({\bf z}))^{-1}{\bf z})_i-\gamma z_i \\
				& = \gamma \left( \frac{2-3z_i}{1-z_i}\right){z}_i-\gamma z_i = \gamma \left( \frac{1-2z_i}{1-z_i}\right){z}_i. \\
			\end{split}
		\end{equation}
		We analyze the behavior around a stationary solution ${\bf z}^{\star}$. Let us define the error $\Delta{\bf z}(t)={\bf z}(t)-{\bf z}^{\star}$. By linearizing around ${\bf z}^{\star}$ (see \citet{Medio2001}), we get
		\begin{equation}
			\dot{\Delta}{\bf z}(t)={\bf J}({\bf z}^{\star}){\Delta}{\bf z}(t)={\bf J}({\bf z}^{\star}){\bf z}(t)+\gamma {\bf c}
		\end{equation}
		where ${c}_i=-\frac{1}{\gamma}\left({\bf J}({\bf z}^{\star}){\bf z}^{\star}\right)_i=\left( \frac{2z^{\star}_i-1}{1-z^{\star}_i}\right){z}^{\star}_i$. Now, since $\dot{\Delta}{\bf z}(t)=\dot{\bf z}(t)$, we have
		\begin{equation}
			\dot{\bf z}(t)={\bf J}({\bf z}^{\star}){\bf z}(t)+\gamma{\bf c}.
		\end{equation}
		In general, we do not have the explicit expression of the Jacobian matrix ${\bf J}({\bf z}^{\star})$ but, for ${\bf z}(t) \to {\bf z}^{\star}$, in the neighborhood of ${\bf z}^{\star}$:
		\begin{equation}
			\dot{\bf z}(t)\sim {\bf J}({\bf z}){\bf z}(t)+\gamma{\bf c}
		\end{equation}
		which is approximated by the $n+m$ nonlinear differential equations $\dot{z}_i= \gamma \left( \frac{1-2z_i}{1-z_i}\right){z}_i+\gamma {c}_i$, that is
		\begin{equation}
			\dot{z}_i= \gamma \left[ \frac{{c}_i+(1-{c}_i){ z}_i-2{z}_i^{2}}{1-{z}_i} \right] .
			\label{stability_equation1}
		\end{equation}
		For the sake of simplicity, we set $z_i=z$ and $c_i=c$.
		Eq. (\ref{stability_equation1}) is equivalent to
		\begin{equation}
			\int \frac{1-z}{c+(1-c)z-2z^2}dz=\gamma t+K, \quad {K}\in{\mathbb R}.
			\label{stability_equation2}
		\end{equation}
		For any $0<z^{\star}<1$, the denominator $c+(1-c)z-2z^2$ has two real distinct roots
		\begin{equation}
			\label{roots}
			\left\{ 
			\begin{array}{l}
				\tilde{z}_{1}=-\frac{1}{4}\left[ (c-1)+\sqrt{c^2+6c+1}\right ] =\frac{2z^{\star}-1}{2z^{\star}-2}\\
				\tilde{z}_{2}=-\frac{1}{4}\left[ (c-1)-\sqrt{c^2+6c+1}\right ] =z^{\star}\\
			\end{array}
			\right.
		\end{equation}
		then, by computing the integral: 
		
		\begin{equation}
			\begin{split}
			{\cal I}
			&=\int \frac{1-z}{c+(1-c)z-2z^2}dz
			=\frac{1}{2}\int \frac{z-1}{(z-\tilde{z}_{1})(z-\tilde{z}_{2})}dz \\
			&=\frac{1-z^{\star}}{|2z^{\star 2}-4z^{\star}+1|} \log \frac{|z-\tilde{z}_{1}|^{1+\tilde{z}_{1}}}{|z-\tilde{z}_{2}|^{1+\tilde{z}_{2}}}.
			\label{integral1}
			\end{split}
		\end{equation}
		Therefore, Eq. (\ref{stability_equation2}) becomes
		\begin{equation}
			\frac{\left| z-\frac{2z^{\star}-1}{2z^{\star}-2}\right|^{\frac{4z^{\star}-3}{2z^{\star}-2}} }{\left| z-z^{\star} \right|^{1+z^{\star}}}=\kappa\cdot \exp\left( \frac{|2z^{\star 2}-4z^{\star}+1|}{1-z^{\star}} \gamma t\right), \quad \kappa \in {\mathbb R}^{+}.
			\label{stability_solution1}
		\end{equation}
		Moreover, since $\Delta z=z-z^{\star}$, Eq. (\ref{stability_solution1}) can be rewritten in terms of $\Delta z$ as:
		\begin{equation}
			\frac{\left| \Delta z+\frac{2z^{\star 2}-4z^{\star}+1}{2z^{\star}-2}\right|^{\frac{4z^{\star}-3}{2z^{\star}-2}} }{\left| \Delta z \right|^{1+z^{\star}}}=\kappa\cdot \exp\left( \frac{2z^{\star 2}-4z^{\star}+1}{z^{\star}-1} \gamma t\right).
			\label{stability_solution2}
		\end{equation}
		Let us call $\alpha=\frac{2z^{\star 2}-4z^{\star}+1}{z^{\star}-1} $.  Thus, we have:
		\begin{equation}
			\frac{\left| \Delta z+\frac{\alpha}{2}\right|^{\frac{4z^{\star}-3}{2z^{\star}-2}} }{\left| \Delta z \right|^{1+z^{\star}}}=\kappa\cdot e^{\alpha \gamma t}.
			\label{stability_solution3}
		\end{equation}
		Let us study the two cases, $\alpha>0$ and $\alpha<0$, separately:
		\begin{itemize}
			\item $\alpha >0$, that is $1-\frac{\sqrt 2}{2} < z^{\star}< 1$.
			It is useful to further distinguish, in Eq. \eqref{stability_solution3}, two cases according to the sign of the exponent $\frac{4z^{\star}-3}{2z^{\star}-2}$:
			\begin{equation}
				\label{stability_solution4}
				\left\{ 
				\begin{array}{l}
					\frac{\left| \Delta z+\frac{\alpha}{2}\right|^{\left| \frac{4z^{\star}-3}{2z^{\star}-2}\right|} }{\left| \Delta z \right|^{1+z^{\star}}}=\kappa\cdot e^{\alpha \gamma t} \ \ \qquad {\rm for} \quad 1-\frac{\sqrt 2}{2}<z^{\star}\leq\frac{3}{4}\\
					\frac{\left| \Delta z+\frac{\alpha}{2}\right|^{-\left| \frac{4z^{\star}-3}{2z^{\star}-2}\right|} }{\left| \Delta z \right|^{1+z^{\star}}}=\kappa\cdot e^{\alpha \gamma t} \qquad {\rm for} \quad \frac{3}{4}<z^{\star}<1
				\end{array}
				\right.
			\end{equation}
			
			In both cases, if $t \to +\infty$, then $e^{\alpha \gamma t} \to +\infty$.
			 
			In the first case, if it were $\Delta z\to +\infty$, then the left-hand side would be of order 
			$|\Delta z|^{\frac{4z^{\star}-3}{2z^{\star}-2}-(1+z^{\star})}=|\Delta z|^{\frac{2z^{\star 2}-4z^{\star}+1}{2(1-z^{\star})}}=|\Delta z|^{-\frac{\alpha}{2}}$. The exponent would be negative and the left-hand side would go to $0$, in contrast to the right-hand side going to $+\infty$. In the second case, the exponent of the term in the numerator is already negative and still the error can go neither to a finite nonzero value nor to infinity. Then the only possibility is that $\Delta z \to 0$. \\
			In particular, in this second case, namely for $\frac{3}{4}<z^{\star}<1$, we are able to compute explicitly the Lyapunov exponent. Indeed, when $|\Delta z|$ vanishes, we have
			\begin{equation}
				\label{stability_solution5}
				\frac{\left|\frac{\alpha}{2}\right|^{-\left| \frac{4z^{\star}-3}{2z^{\star}-2}\right|} }{\left| \Delta z \right|^{1+z^{\star}}}\sim \kappa\cdot e^{\alpha \gamma t}.
			\end{equation}
			For $t=0$, we have $\Delta z(0)= \Delta z_{0}$ and, by \eqref{stability_solution5},
			$
			\kappa\approx\frac{\left|\frac{\alpha}{2}\right|^{-\left| \frac{4z^{\star}-3}{2z^{\star}-2}\right|} }{\left| \Delta z_{0}\right|^{1+z^{\star}}}
			$
			so that
			\begin{equation}
				\label{stability_solution6}
				|\Delta z|^{1+z^{\star}} \sim
				\left| \Delta z_{0}\right|^{1+z^{\star}}
				\cdot e^{-\alpha \gamma t}
			\end{equation}
			which, solved for $\Delta z$, gives, for $t \to +\infty$:
			\begin{equation}
				\label{stability_solution7}
				|\Delta z| \sim
				\left| \Delta z_{0}\right|
				\cdot
				e^{-\frac{(-2z^{\star 2}+4z^{\star}+1)}{1-z^{\star 2}}\gamma t}\ 
				\to 0.
			\end{equation}
			We can identify $\zeta=-\frac{(-2z^{\star 2}+4z^{\star}+1)}{1-z^{\star 2}}\gamma$ as the Lyapunov exponent of the dynamical system. In particular, $\zeta$ is always negative, it is equal to $-2$ for $z^{\star}=\frac{3}{4}$ and it goes to $-\infty$ as $z^{\star}\to 1$.

			\item $\alpha< 0$, that is $0\leq z^{\star}< 1-\frac{\sqrt 2}{2}$. In the ratio
			\begin{equation*}
				\frac{\left| \Delta z+\frac{\alpha}{2}\right|^{\frac{4z^{\star}-3}{2z^{\star}-2}} }{\left| \Delta z \right|^{1+z^{\star}}}=k\cdot e^{\alpha \gamma t}
			\end{equation*}
			the right-hand side goes to $0$ for $t \to +\infty$. This implies that either $\Delta z\to \left| \frac{\alpha}{2} \right|$ or $\Delta z\to \infty$. If $\Delta z\to \left| \frac{\alpha}{2} \right|$, the numerator goes to $0$ and the exponent is positive, so that this a consistent solution. If $\Delta z\to \infty$, then ${\frac{\alpha}{2}}$ is negligible, and the ratio is asymptotic again to $|\Delta z|^{-\frac{\alpha}{2}}$, but, since $\alpha<0$, this quantity goes to infinity. 
			Therefore, the only consistent possibility is the first one, where the error tends to a finite value, equal to $\left|\frac{\alpha}{2}\right|=\left|\frac{2z^{\star 2}-4z^{\star}+1}{2(z^{\star}-1)}\right|$.
			Therefore, in the interval $0\leq z^{\star}< 1-\frac{\sqrt 2}{2}$, the error is positive and finite. Specifically, $|\frac{\alpha}{2}|$ is a decreasing function of $z^{\star}$ and varies from $\frac{1}{2}$ to $0$. This implies that, in this interval, we cannot find any stable solution.
		\end{itemize}
		Finally for $\alpha=0$, by solving the integral, Eq. (\ref{stability_equation2}) becomes: $\frac{1}{2\sqrt{2} \Delta z}+\frac{1}{2}\log (2\Delta z)=\gamma t+k$. Therefore, $\Delta z\to 0$ as $t \to \infty$ and the solution is stable.
	\end{proof}
	
	\begin{remark}
		By Theorem \ref{theorem3}, it follows that $1-\frac{\sqrt{2}}{2}$ represents a critical value for the stability of the asymptotic solution. This role is further confirmed by the following observation that applies to the cycle graph. In Theorem \ref{theorem1_cycle} and its proof, we found that the unstable and stable solution for the cycle are $x_{1}^{\star}=\frac{1}{2}\left( 1- \sqrt{1-\frac{2}{\mathcal{R}}} \right)$ and $x_{2}^{\star}=\frac{1}{2}\left( 1+ \sqrt{1-\frac{2}{\mathcal{R}}} \right)$, respectively.
		If we set the initial probability equal to $p=1-\frac{\sqrt 2}{2}$, then the threshold of the model is $\tau_{c}=\frac{1}{2p(1-p)}=1+\sqrt{2}$. Above this threshold, that is for $\mathcal{R}\geq 1+\sqrt{2}$, the unstable solution lies exactly in the instability interval claimed by Theorem \ref{theorem3}, that is $0<x_{1}^{\star}\leq 1-\frac{\sqrt 2}{2}$. Moreover, the stable one lies in the range $\frac{\sqrt 2}{2}<x_{2}^{\star}< 1$. Furthermore, as will be shown in Appendix \ref{appendixA}, for general regular graphs, the unstable solution 
		$x^{\star}_{1}=\frac{1}{2}\left(1-\frac{d-2}{2d(d-1)\mathcal{R}}- \frac{\sqrt{\xi}}{2d(d-1)\mathcal{R}}\right)$ lies below $1-\frac{\sqrt 2}{2}$ exactly for $0< \mathcal{R} \leq {\tau}_{1} \cup \mathcal{R} \geq {\tau}_{2}$ (see  Eq. (\ref{conditions}) in  Appendix \ref{appendixregular}). 
	\end{remark}
	
	Let us conclude with a theorem that characterizes the existence and stability of the disease-free steady state for general topology.
	\begin{theorem}
		If ${\mathcal R}<\frac{1}{p\lambda_1}$, the disease-free steady state of the ASIS model exists and it is stable 
		\label{theorem4}
	\end{theorem}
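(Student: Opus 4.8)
The plan is to take $\mathbf z^{\star}=\mathbf 0$ as the disease-free state and to couple a linearization remark with a monotone-comparison argument that exploits the fact that all initial weights equal $p$. Existence is immediate: by the homogeneity $\mathbf G(\alpha\mathbf z)=\alpha\mathbf G(\mathbf z)$ one has $\mathbf G(\mathbf 0)=\mathbf 0$, hence $\mathbf H(\mathbf 0)\mathbf 0=\beta[\mathbf I-\mathbf 0]\mathbf 0-\gamma\mathbf 0=\mathbf 0$, so $\mathbf z^{\star}=\mathbf 0$ solves $\dot{\mathbf z}=\mathbf H(\mathbf z)\mathbf z$. I would first record that for $\mathbf z(0)\in[0,1]^{n+m}$ --- in particular for the standard datum $\mathbf z(0)=p\,\mathbf u_{n+m}$ --- the cube $[0,1]^{n+m}$ is forward invariant and $\mathbf z(t)\geq\mathbf 0$: on the face $z_i=0$ one gets $\dot z_i=\beta(\mathbf G(\mathbf z)\mathbf z)_i\geq 0$ since, for $\mathbf z\geq\mathbf 0$, both $\mathbf A_P(\mathbf y)$ and $\mathbf A_D(\mathbf x)$ are entrywise nonnegative with zero diagonal, while the face $z_i=1$ is handled as in the classical network SIS model. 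On this cube the vector field of Eq. (\ref{ODE1}) is quasi-monotone (each $\partial f_i/\partial z_j$, $j\neq i$, is $\geq 0$), which is what licenses the comparison estimates below.

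For stability, note first that $\mathbf G(\mathbf z)\mathbf z$ is bilinear in $(\mathbf x,\mathbf y)$, hence homogeneous of degree two, so it does not enter the linearization of Eq. (\ref{ODE1}) at $\mathbf z^{\star}=\mathbf 0$: the Jacobian there equals $-\gamma\mathbf I_{n+m}$, and $\mathbf 0$ is locally exponentially stable for \emph{every} $\mathcal{R}$. The content of the hypothesis $\mathcal{R}<\tfrac{1}{p\lambda_1}$ is thus a global one --- it guarantees that the trajectory issued from $\mathbf z(0)=p\,\mathbf u_{n+m}$ is actually attracted to $\mathbf 0$ --- and the mechanism is the control of the effective weights. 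From $[\mathbf I-{\rm diag}\,\mathbf z]\leq\mathbf I$ and $\mathbf G(\mathbf z)\mathbf z\geq\mathbf 0$ one obtains the componentwise inequality $\dot{\mathbf z}\leq\beta\,\mathbf G(\mathbf z)\mathbf z-\gamma\mathbf z$; and since $\mathbf A_P(\mathbf y)$ is entrywise nondecreasing in $\mathbf y$ (and $\mathbf A_D(\mathbf x)$ in $\mathbf x$), a bound $\mathbf x(t)\leq p\,\mathbf u_n$ and $\mathbf y(t)\leq p\,\mathbf u_m$ yields $\mathbf G(\mathbf z(t))\leq p\,\mathbf B$, where $\mathbf B$ denotes the block-diagonal matrix with blocks $\mathbf B_P,\mathbf B_D$ (the adjacency matrix of the disjoint union $G_P\sqcup G_D$), hence $\dot{\mathbf z}\leq(\beta p\,\mathbf B-\gamma\mathbf I)\mathbf z$. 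The comparison principle for cooperative systems then gives $\mathbf 0\leq\mathbf z(t)\leq e^{(\beta p\,\mathbf B-\gamma\mathbf I)t}\,\mathbf z(0)$, and the right-hand side tends to $\mathbf 0$ precisely when the spectral abscissa $\beta p\,\lambda_1-\gamma$ is negative, i.e. iff $\mathcal{R}<\tfrac{1}{p\lambda_1}$; here $\lambda_1$ is the largest eigenvalue of $\mathbf B$, in line with the spectral remarks following Eq. (\ref{eigenvalue_problem}).

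The step I expect to be the main obstacle is precisely the self-consistent weight bound $\mathbf z(t)\leq p\,\mathbf u_{n+m}$: the two processes feed each other through $\mathbf A_P$ and $\mathbf A_D$, so one cannot bound $\mathbf y$ without bounding $\mathbf x$ and conversely. I would try to establish it by positive invariance of the box $[0,p]^{n+m}$: on the face $z_{i_0}=p$ with $\mathbf z\leq p\,\mathbf u_{n+m}$, using $\mathbf A_P(\mathbf y)\leq p\,\mathbf B_P$, one finds $\dot z_{i_0}\leq\beta(1-p)\,p\,(\mathbf B\,\mathbf z)_{i_0}-\gamma p$, and it remains to show that subcriticality makes this $\leq 0$. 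Since $(\mathbf B\,\mathbf z)_{i_0}$ is naturally controlled by a vertex degree rather than by $\lambda_1$, the sharp constant $\tfrac{1}{p\lambda_1}$ is reached by running the argument in the norm weighted by the Perron eigenvector $\psi$ of $\mathbf B$ (monitoring $\max_i z_i/\psi_i$ and using $\mathbf B\psi=\lambda_1\psi$): this is transparent when $G_P$ is regular, where $\psi$ is constant and $[0,p]^{n+m}$ is genuinely invariant, whereas for heterogeneous topologies the box need not be exactly invariant and one must instead show that the overshoot of $\mathbf z(t)$ above $p\,\mathbf u_{n+m}$ stays small enough for the effective growth rate $\beta p\,\lambda_1-\gamma$ to remain strictly negative, so that the excursion is reabsorbed --- this is the delicate point. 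Once the weight bound is in hand, the comparison estimate of the preceding paragraph forces $\mathbf z(t)\to\mathbf 0$, so the disease-free state is asymptotically stable and attracts the standard configuration. As a consistency check, on the cycle $\lambda_1=2$ and the theorem gives stability for $\mathcal{R}<\tfrac{1}{2p}$, which is more conservative than the sharp cycle threshold $\tau_c$ of Theorem \ref{theorem1_cycle}, exactly as a uniform sufficient condition should be.
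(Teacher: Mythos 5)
Your argument is, at its core, the same as the paper's: both proofs freeze the adjacency weights at their initial value $p$, compare the ASIS dynamics with the subcritical linear(ized) system governed by $\beta p\,\mathbf{B}-\gamma\mathbf{I}$ (the paper's Eq.~(\ref{matrix6}) with the block matrix (\ref{blocmatrix})), and then try to bootstrap: once the probabilities drop below $p$, the effective weights drop too, the instantaneous threshold $\frac{1}{\max_i z_i(t)\,\lambda_1}$ only grows, and the process stays subcritical. Your write-up adds two genuine refinements the paper does not state: the observation that $\mathbf{G}(\mathbf{z})\mathbf{z}$ is quadratic, so the Jacobian at $\mathbf{0}$ is $-\gamma\mathbf{I}$ and local exponential stability of the disease-free state holds for \emph{every} $\mathcal{R}$ (the hypothesis $\mathcal{R}<\frac{1}{p\lambda_1}$ is really about attracting the specific initial datum $p\,\mathbf{u}_{n+m}$), and the explicit cooperative-comparison bound $\mathbf{0}\leq\mathbf{z}(t)\leq e^{(\beta p\mathbf{B}-\gamma\mathbf{I})t}\mathbf{z}(0)$.

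The step you flag as delicate --- the self-consistent bound $\mathbf{z}(t)\leq p\,\mathbf{u}_{n+m}$ --- is precisely the step the paper asserts without justification (``the process starts reducing the individual probabilities \dots so that, at a later time $t_1>t_0$, we have $z_i(t_1)<p$, $\forall i$''), and your worry is well founded. At $t=0$ one computes $\dot z_i(0)=p\left[\beta(1-p)p\,k_i-\gamma\right]$, which is negative for all $i$ only if $\mathcal{R}<\frac{1}{p(1-p)k_{\max}}$; for heterogeneous graphs this is \emph{not} implied by $\mathcal{R}<\frac{1}{p\lambda_1}$. For the star $S_n$, for instance, $\lambda_1=n-2$ (the Perron root of the dual block) while the hub has degree $n-1$, so for small $p$ there is a nonempty window $\frac{1}{p(1-p)(n-1)}<\mathcal{R}<\frac{1}{p(n-2)}$ in which the hub's probability initially \emph{increases} and the box $[0,p]^{n+m}$ is not forward invariant. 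So neither your argument nor the paper's closes this step: both reduce the theorem to showing that the transient overshoot above $p$ is reabsorbed (e.g.\ by running the comparison in the Perron-weighted norm, as you suggest, and quantifying the excursion). The difference is that you identify the gap explicitly, whereas the paper's proof passes over it; your version is the more honest account of what is actually established, but it is not a complete proof either.
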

	
	\begin{proof}
		We show that if the model in Eq. (\ref{ODE1}) is initially below the threshold of the standard SIS model (\ref{SIS3}), it remains below that threshold throughout the process.
		In other words, the existence of the extinction steady state is determined by the initial conditions alone.
		
		At $t=0$, the values of the matrices ${\bf A}_{P}(t)$ and ${\bf A}_{D}(t)$ are given by Eq. (\ref{initialmatrices}), that is ${\bf A}_{P}(0)=p{\bf B}_{P}$ and ${\bf A}_{D}(0)=p{\bf B}_{D}$, where ${\bf B}_{P}$ and ${\bf B}_{D}$ represent the original numerical adjacency matrices of the network and its line graph. Therefore, the early stages of the process are governed by the equation
		\begin{equation}
			{\bf \dot{z}}=
			\beta p
			\left[ {\bf I} - {\rm diag}\, {\bf z} \right]
			\left[ \begin{array}{cc} 
				{\bf B}_{P} & {\bf 0}_{n\times m} \\
				{\bf 0}_{m\times n} & {\bf B}_{D} \\
			\end{array} \right]{\bf z} -\gamma {\bf z}.
			\label{matrix6}
		\end{equation}
		According to \citet{Kiss2017} (see Theorem (3.8), \textit{ibidem}), the process described by Eq. (\ref{matrix6}) exhibits a transcritical bifurcation at the critical value ${\mathcal R}=\frac{1}{p\lambda_{1}}$, where $\lambda_{1}$ is the largest eigenvalue of the block matrix\footnote{Note that the block matrix in Eq. (\ref{blocmatrix}) has two greatest eigenvalues corresponding to the Perron-Frobenius eigenvalues of the two separate matrices ${\bf B}_{P}$ and ${\bf B}_{D}$. In a regular graph, for instance, the first eigenvalue $2(d-1)$  of the matrix ${\bf B}_{D}$ is greater than the corresponding eigenvalue $d$ of the matrix ${\bf B}_{P}$.}
		\begin{equation}
			\left[ \begin{array}{cc} 
				{\bf B}_{P} & {\bf 0}_{n\times m} \\
				{\bf 0}_{m\times n} & {\bf B}_{D} \\
			\end{array} \right].
			\label{blocmatrix}
		\end{equation}
		In particular, the value $\frac{1}{p\lambda_1}$ represents a lower bound for the epidemic threshold $\tau$ of the standard process, that is $\frac{1}{p\lambda_1}<\tau$.
		The key point is the presence of the initial value $p$ in the denominator of this lower bound for the threshold, while the term $\lambda_1$ is fixed and determined only by the original topological structure of the network.
		If, at $t_{0}=0$, we have ${\mathcal R}<\frac{1}{p\lambda_1}<\tau$, then the process starts reducing the individual probabilities $x_{i}(t)$ and  $y_{i}(t)$, on both the graph and the line graph, so that, at a later time ${t_1}>{t_0}$, we have $z_{i}({t_1})<p$, $\forall i=1,\dots, n+m$. Now, by Algorithm \ref{algorithm}, we replace the original weights in the adjacency matrices by the new values $z_{i}({t_1})$.
		This step can be replicated at each subsequent time $0={t_0}<{t_1}<{t_2}<\dots <{t_k}<\dots$, so that $z_{i}({t_k})<z_{i}({t_{k-1}})$ and
		\begin{equation}
			\footnotesize
			{\mathcal R}<\frac{1}{p\lambda_1}<\frac{1}{\max_{i}{z_{i}({t_1})}\lambda_1}<\frac{1}{\max_{i}{z_{i}({t_2})}\lambda_1}<\dots <\frac{1}{\max_{i}{z_{i}({t_k})}\lambda_1}.
		\end{equation}
		Then, the process, at every step, remains below the corresponding threshold of the standard SIS model.
		Since, under these assumptions, the latter has a stable null asymptotic solution, the solution $z_{i}=0$, $\forall i=1,\dots, n+m$ of the ASIS model exists and is asymptotically stable.
	\end{proof}
	Let us observe that Theorem \ref{theorem4} implies that, as far as the extinction steady state is concerned, controlling the initial stages of the process, means controlling the whole process.
	
	\section{Self-adaptive eigenvector centrality}
	\label{Self-adaptive eigenvector centrality}
	We want to show now that the components of the eigenvectors $\psi_{\bf M}^{(1)}$ and $\psi_{\bf M}^{(2)}$ introduced in Section \ref{Nonlineareigenproblem}, and, therefore, the values of the stationary probabilities appropriately normalized, can be interpreted as nonlinear eigenvector centralities.
	
	The idea stems from the observation that, in the limit ${\bf z}\to {\bf 0}$, the matrix ${\bf M}({\bf z})$ approaches ${\mathcal R} {\bf G}\left({\bf z}\right)$, so that
	Eq. (\ref{eigenvalue_problem}) has the typical implicit form that defines an eigenvector centrality. Given a weighted adjacency matrix, we search for the dominant eigenvector whose components are interpreted as a score in which the importance of a node is proportional to that of its  \textit{neighboring elements}, typically adjacent nodes.
	Two aspects distinguish Eq. (\ref{eigenvalue_problem}) from a usual equation defining eigenvector centrality: the presence of matrices that depend on the eigenvectors themselves, as already discussed, and the trade-off between the centralities of the nodes and those of the edges. In fact, Eq. (\ref{eigenvalue_problem}) implies that the centrality of a node is a function of the centrality of the edges it belongs to and the centrality of an edge is a function of the centrality of its extreme nodes.
	
	A similar idea has already been proposed by \citet{Tudisco2021} within a more general but static setting. A generalization of their approach emerges here within a dynamic setting in a quite natural way. Let us observe that the authors define a node and edge score such that the importance $y_{j}$ of an edge $e_{j}\in E$ is a nonnegative number proportional to the importance of the nodes in $e_{j}$, and the importance $x_{i}$ of a node $v_{i}\in V$ is a nonnegative number proportional to the importance of the edges it participates in. In a notation consistent with our paper, their centralities are given by the following equations
	\begin{equation}
		\left\{ 
		\begin{array}{l}
			\lambda {\bf x}={\bf E}\, {\rm diag}({\bf y}_{0}) {\bf y}\\
			\hfill \\
			\mu {\bf y}={\bf E}^{T}\, {\rm diag}({\bf x}_{0}) {\bf x}\\
		\end{array}
		\right.
	\end{equation}
	which are equivalent to
	\begin{equation}
		\left\{ 
		\begin{array}{l}
			{\bf x}=\rho  \left[ {\bf A}_{P}({\bf y}_{0})+{\bf K}_{P}({\bf y}_{0})  \right]  {\rm diag}({\bf x}_{0}) {\bf x}\\
			\hfill \\
			{\bf y}=\rho  \left[ {\bf A}_{D}({\bf x}_{0})+{\bf K}_{D}({\bf x}_{0}) \right] {\rm diag}({\bf y}_{0}) {\bf y}\\
		\end{array}
		\right.
		\label{HighamTudisco}
	\end{equation}
	where $\rho=1/\mu \lambda$. 
	The authors compute the Perron eigenvectors ${\bf x}^{\star}$ and ${\bf y}^{\star}$ of diagonally perturbed adjacency matrices of the graph and the line graph and interpret their components as eigenvector scores for the nodes and the edges, respectively.
	Eq. (\ref{eigenvalue_problem}), namely
	\begin{equation}
		\left\{ 
		\begin{array}{l}
			{\bf x}^{\star} ={\mathcal R}\left[ {\bf I}_{n} - {\rm diag}\, {\bf x}^{\star} \right] {\bf A}_{P}({\bf y}^{\star})\, {\bf x}^{\star} \\
			\hfill \\
			{\bf y}^{\star} ={\mathcal R}\left[ {\bf I}_{m} - {\rm diag}\, {\bf y}^{\star} \right] {\bf A}_{D}({\bf x}^{\star})\, {\bf y}^{\star}\\
		\end{array}
		\right.
		,
		\label{eigenvalue_problem2}
	\end{equation}
	play the same role of Eq. \eqref{HighamTudisco}.
	In this perspective, our model leads to a new centrality measure that we call
	\textit{self-adaptive eigenvector centrality}.
	Such a measure weights the score of an element, either a node or an edge, as proportional to the score of all the elements, nodes and edges, to which it is connected.
	
	Our centrality measure is similar to the one defined by \citet{Tudisco2021}, but with some remarkable differences.
	First, in Eq. (\ref{HighamTudisco}), the matrices are all evaluated at initial fixed values, which correspond to the topological weights of the edges in the graph and in the line graph and that we identified, in our notation, with the initial values ${\bf x}_{0}$ and ${\bf y}_{0}$. Conversely, in Eq. (\ref{eigenvalue_problem2}), matrices dynamically update with the weights computed on the basis of an evolutionary process. Second, the dependence of the elements of the matrices on the scores to be attributed to nodes and edges has a retroactive effect on the meaning of these scores. Let us consider, for instance, a node $i$ in the network $G_P$. Its score turns out to be proportional to $\sum_{j}{\bf A}_{P}({\bf y})_{ij}{x}_{j}$, that is the sum of the products between the score of its neighboring nodes and the score of the corresponding edges connecting them to node $i$. Hence, in our model, the centrality of a node does not depend on the importance of neighboring nodes alone or adjacent edges alone, but on the joint effect of both these elements.

\section{Illustrative example}
\label{Illustrative example}
Let us examine the implementation of the ASIS model through the example illustrated in Fig. \ref{Cartoon}.
The adjacency matrices of the network $G_P$ and $G_D$ are, respectively,
	\begin{equation*}
		{\bf B}_{P}=\left[ 
		\begin{array}{cccc}
			0 & 1 & 1 & 0 \\ 
			1 & 0 & 1 & 0 \\ 
			1 & 1 & 0 & 1 \\ 
			0 & 0 & 1 & 0
		\end{array}
		\right],
		\qquad
		{\bf B}_{D}=\left[ 
		\begin{array}{cccc}
			0 & 1 & 1 & 0 \\ 
			1 & 0 & 1 & 1 \\ 
			1 & 1 & 0 & 1 \\ 
			0 & 1 & 1 & 0
		\end{array}
		\right].
	\end{equation*}
	Matrices in Eq. (\ref{ApAd_continuos}), at $t=0$, are then ${\bf A}_{P}(0)=p{\bf B}_{P}$ and ${\bf A}_{D}(0)=p{\bf B}_{D}$, for $0<p<1$. By introducing the variable ${\bf z}\in {\mathbb R}^{8}$,
	matrix ${\bf G}({\bf z})$ in Eq. (\ref{matrixG}) and vector ${\bf H}({\bf z}){\bf z}$ in Eq. (\ref{matrixH}) take the form
	\begin{equation*}
		{\bf G}({\bf z})=\left[ 
		\begin{array}{cccccccc}
			0 & y_{1} & y_{2} & 0 & 0 & 0 & 0 & 0 \\ 
			y_{1} & 0 & y_{3} & 0 & 0 & 0 & 0 & 0 \\ 
			y_{2} & y_{3} & 0 & y_{4} & 0 & 0 & 0 & 0 \\ 
			0 & 0 & y_{4} & 0 & 0 & 0 & 0 & 0 \\
			0 & 0 & 0 & 0 & 0 & x_{1} & x_{2} & 0 \\
			0 & 0 & 0 & 0 & x_{1} & 0 & x_{3} & x_{3} \\
			0 & 0 & 0 & 0 & x_{2} & x_{3} & 0 & x_{3} \\
			0 & 0 & 0 & 0 & 0 & x_{3} & x_{3} & 0
		\end{array}
		\right]
	\end{equation*}
	\begin{equation*}
		{\bf H}({\bf z}){\bf z}=
		\left[ 
		\begin{array}{l}
			\beta (1-x_{1})(x_{2}y_{1}+x_{3}y_{2})-\gamma x_{1}\\
			\beta (1-x_{2})(x_{1}y_{1}+x_{3}y_{3})-\gamma x_{2}\\
			\beta (1-x_{3})(x_{1}y_{2}+x_{2}y_{3}+x_{4}y_{4})-\gamma x_{3}\\
			\beta (1-x_{4})\cdot x_{3}y_{4}-\gamma x_{4}\\
			\beta (1-y_{1})(x_{1}y_{2}+x_{2}y_{3})-\gamma y_{1}\\
			\beta (1-y_{2})(x_{1}y_{1}+x_{3}y_{3}+x_{3}y_{4})-\gamma y_{2}\\
			\beta (1-y_{3})(x_{2}y_{1}+x_{3}y_{2}+x_{3}y_{4})-\gamma y_{3}\\
			\beta (1-y_{4})(x_{3}y_{2}+x_{3}y_{3})-\gamma y_{4}\\
		\end{array}
		\right].
	\end{equation*}
	The nonlinear eigenproblem ${\bf z}^{\star}=
	{\mathcal R} \left[ {\bf I} - {\rm diag}\, {\bf z}^{\star} \right] {\bf G}\left({\bf z}^{\star}\right){\bf z}^{\star}$, described in Eq. ({\ref{eigenvalue_problem}}), that leads to the steady states solutions and to the self-adaptive eigenvector centralities is explicitly
	\begin{equation}
		\left[ 
		\begin{array}{l}
			x_{1}^{\star}\\
			x_{2}^{\star}\\
			x_{3}^{\star}\\
			x_{4}^{\star}\\
			y_{1}^{\star}\\
			y_{2}^{\star}\\
			y_{3}^{\star}\\
			y_{4}^{\star}\\
		\end{array}
		\right]
		=
		\left[ 
		\begin{array}{l}
			\mathcal{R} (1-x_{1}^{\star})(x_{2}^{\star}y_{1}^{\star}+x_{3}^{\star}y_{2}^{\star})\\
			\mathcal{R} (1-x_{2}^{\star})(x_{1}^{\star}y_{1}^{\star}+x_{3}^{\star}y_{3}^{\star})\\
			\mathcal{R} (1-x_{3}^{\star})x_{1}^{\star}y_{2}^{\star}+x_{2}^{\star}y_{3}^{\star}+x_{4}^{\star}y_{4}^{\star})\\
			\mathcal{R} (1-x_{4}^{\star})\cdot x_{3}^{\star}y_{4}^{\star}\\
			\mathcal{R} (1-y_{1}^{\star})(x_{1}^{\star}y_{2}^{\star}+x_{2}^{\star}y_{3}^{\star})\\
			\mathcal{R} (1-y_{2}^{\star})x_{1}^{\star}y_{1}^{\star}+x_{3}^{\star}y_{3}^{\star}+x_{3}^{\star}y_{4}^{\star})\\
			\mathcal{R} (1-y_{3}^{\star})(x_{2}^{\star}y_{1}^{\star}+x_{3}^{\star}y_{2}^{\star}+x_{3}^{\star}y_{4}^{\star})\\
			\mathcal{R} (1-y_{4}^{\star})(x_{3}^{\star}y_{2}^{\star}+x_{3}^{\star}y_{3}^{\star})\\
		\end{array}
		\right].
		\label{eigenexample}
	\end{equation}
	
	By Eq. (\ref{eigenexample}), it is clear that the centrality of a node is proportional to the sum of the products of the respective scores of nodes and edges connected to it.
	For example, the centrality $x_{1}^{\star}$ of the node $1$ is proportional to $(x_{2}^{\star}y_{1}^{\star}+x_{3}^{\star}y_{2}^{\star})$: the first term is the product of the score of node $2$ and the score of the edge connecting nodes $1$ and $2$; the second term is the product of the score of node $3$ and the score of the edge connecting nodes $1$ and $3$. 
	
	We now present some numerical experiments. In Fig. \ref{fig3}, panels (a-c), we show the prevalence of infected/adopted individuals in the network $G_P$, that is the cumulative probabilities $x_{i}(t)$ as functions of $t$, under different conditions. Nodes $1$ and $2$ are equivalent and the curves have the same color code as in Fig. \ref{Cartoon}. Fig. \ref{fig3}, panels (d-f), shows the incidence, that is the instantaneous increments $dx_{i}(t)$, under the same corresponding conditions. Node $3$, as expected, is the node with the highest asymptotic probability, being the most central. The opposite for node $4$.
	
	Fig. \ref{fig9} represents the contour plots of the mean prevalence for the network $G_P$, under different conditions and at different times. The mean prevalence in the plots is the arithmetic mean of the probabilities $x_{i}(t)$ in the network $G_P$. In Fig. \ref{fig9}, panel (a), we plot a snapshot at a fixed time of the mean prevalence as a function of the infection rate $\beta$ and recovery rate $\gamma$. In Fig. \ref{fig9}, panel (b), we plot the phase diagram at a fixed value of the infection rate $\beta$ as a function of $\gamma$ and $t$, and, in Fig. \ref{fig9}, panel (c), the phase diagram at a fixed value of the recovery rate $\gamma$ as a function of $\beta$ and $t$. The last two panels make it clear the presence of a transcritical bifurcation at a specific value of the reproductive number $\mathcal{R}$. The values of the parameters used to build the plots are specified in the caption of the figure.

{\onecolumngrid
	\vspace{\columnsep}
		\begin{center}
			\begin{figure}[H]
				\centering
				\subfloat[]{\includegraphics[width=0.40\textwidth]{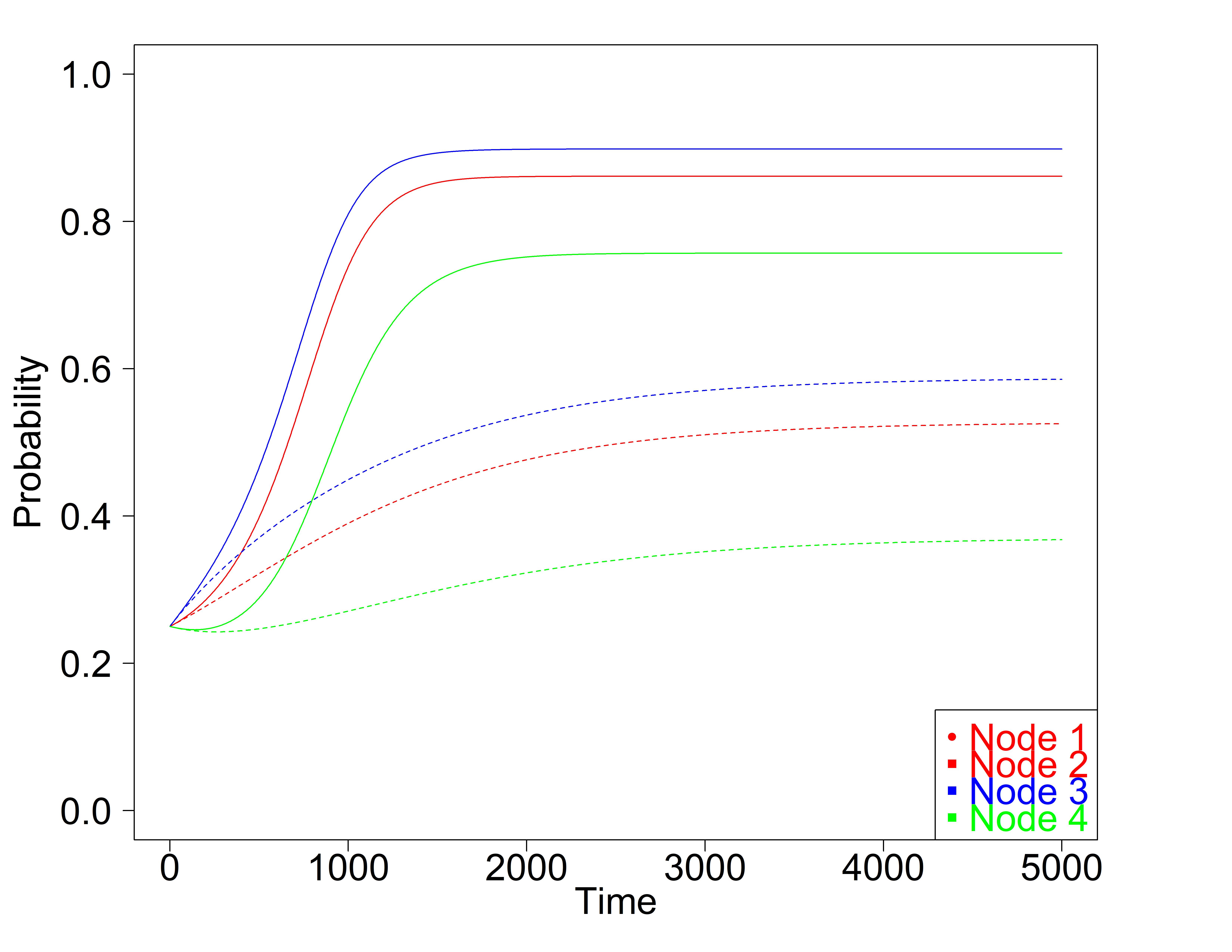}}
				\subfloat[]{\includegraphics[width=0.40\textwidth]{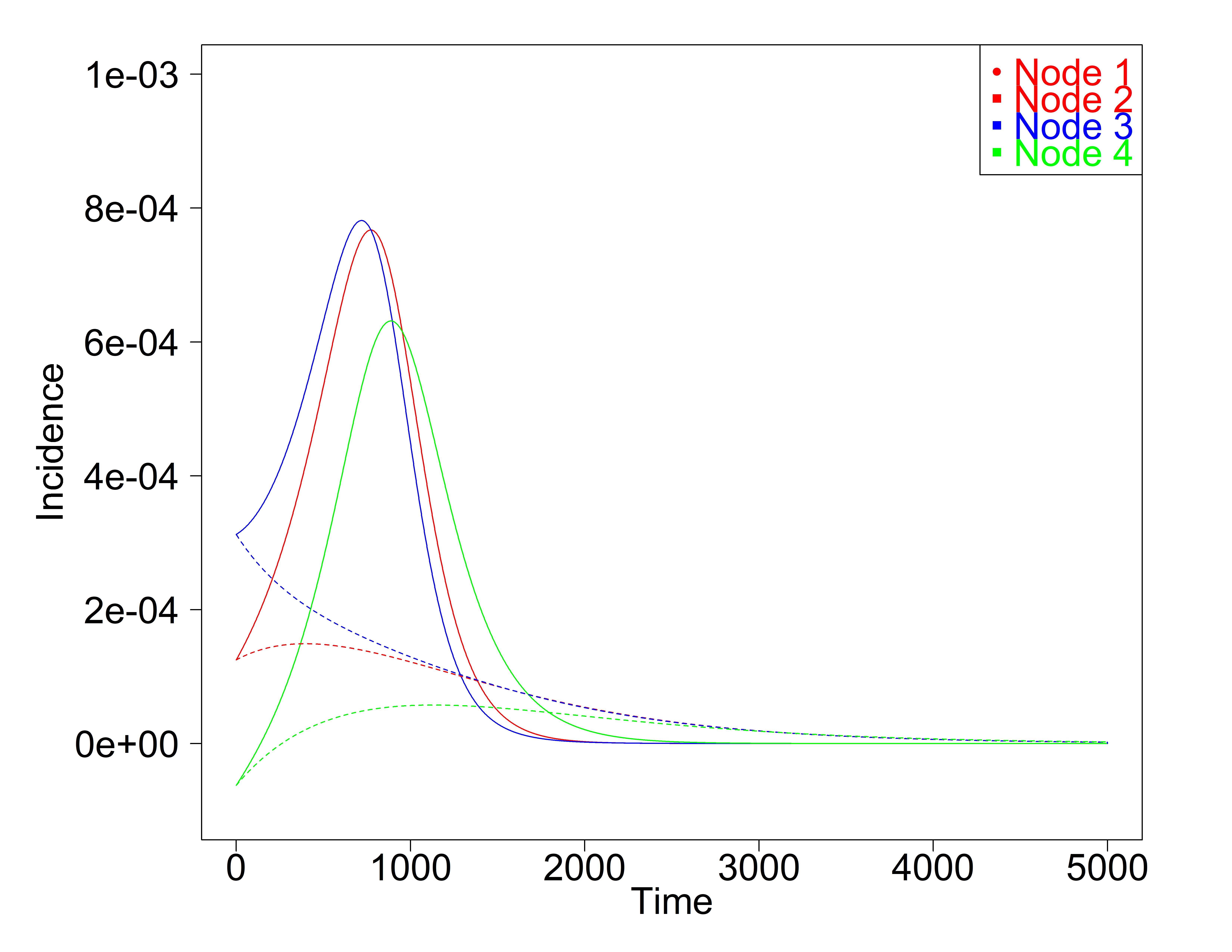}}\\
				\subfloat[]{\includegraphics[width=0.40\textwidth]{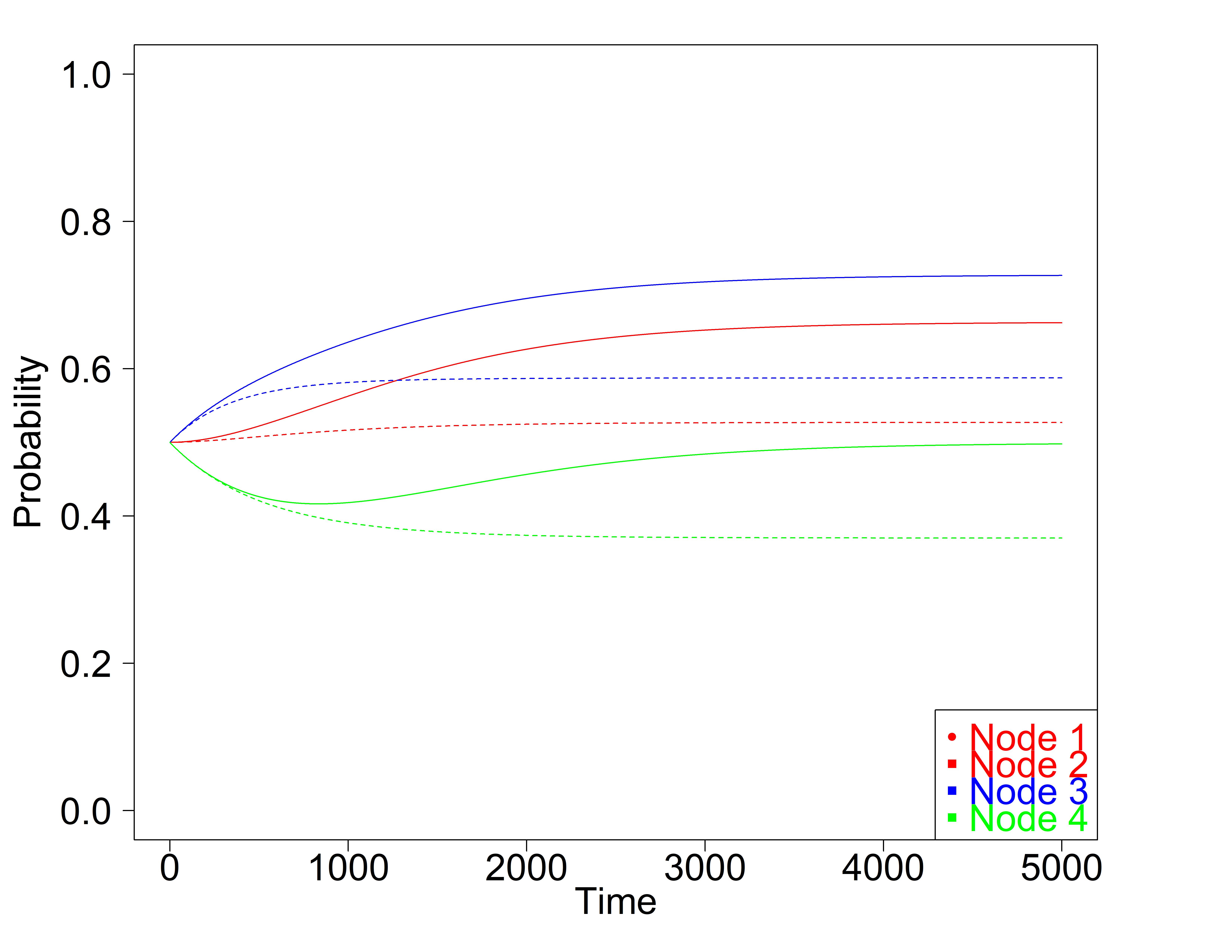}}
				\subfloat[]{\includegraphics[width=0.40\textwidth]{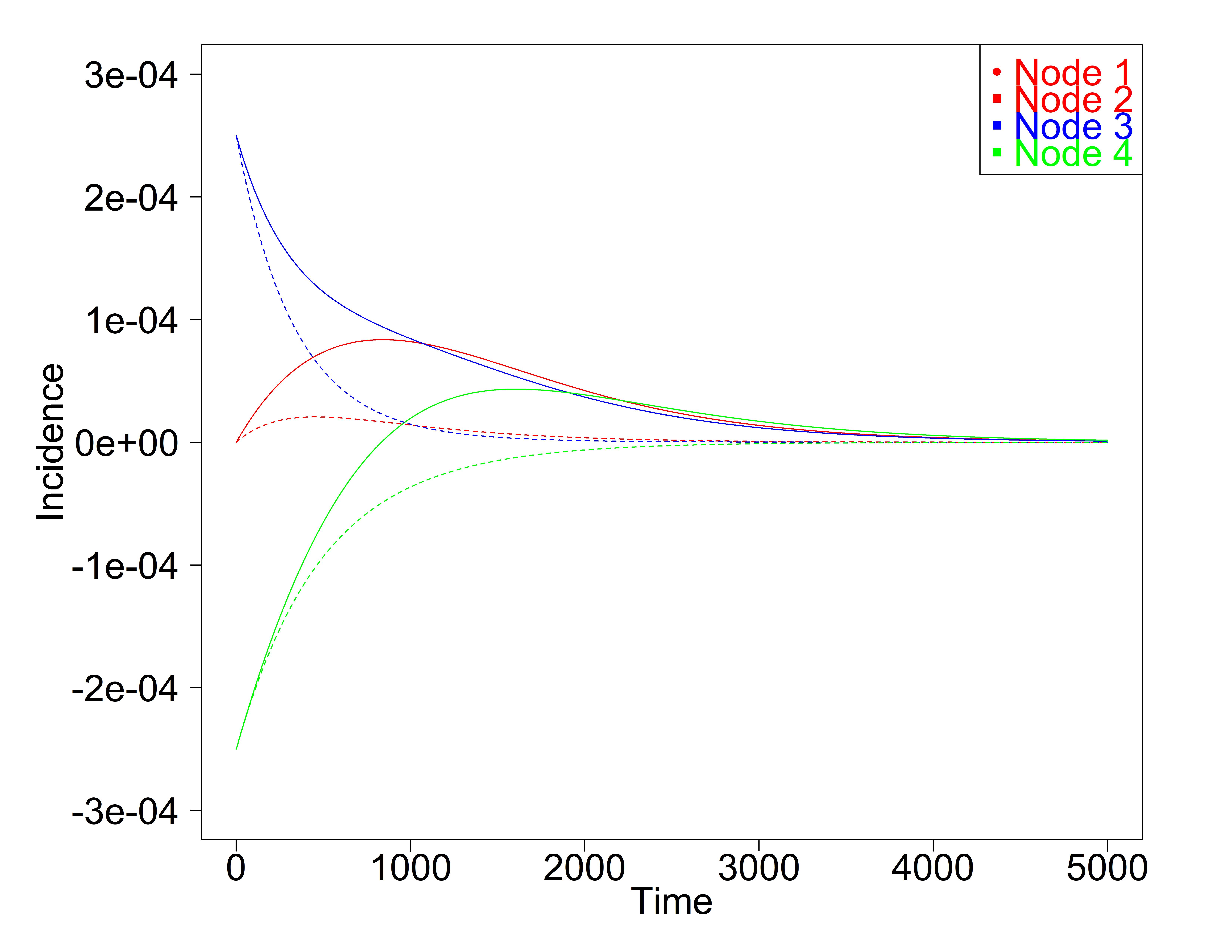}}\\
				\subfloat[]{\includegraphics[width=0.40\textwidth]{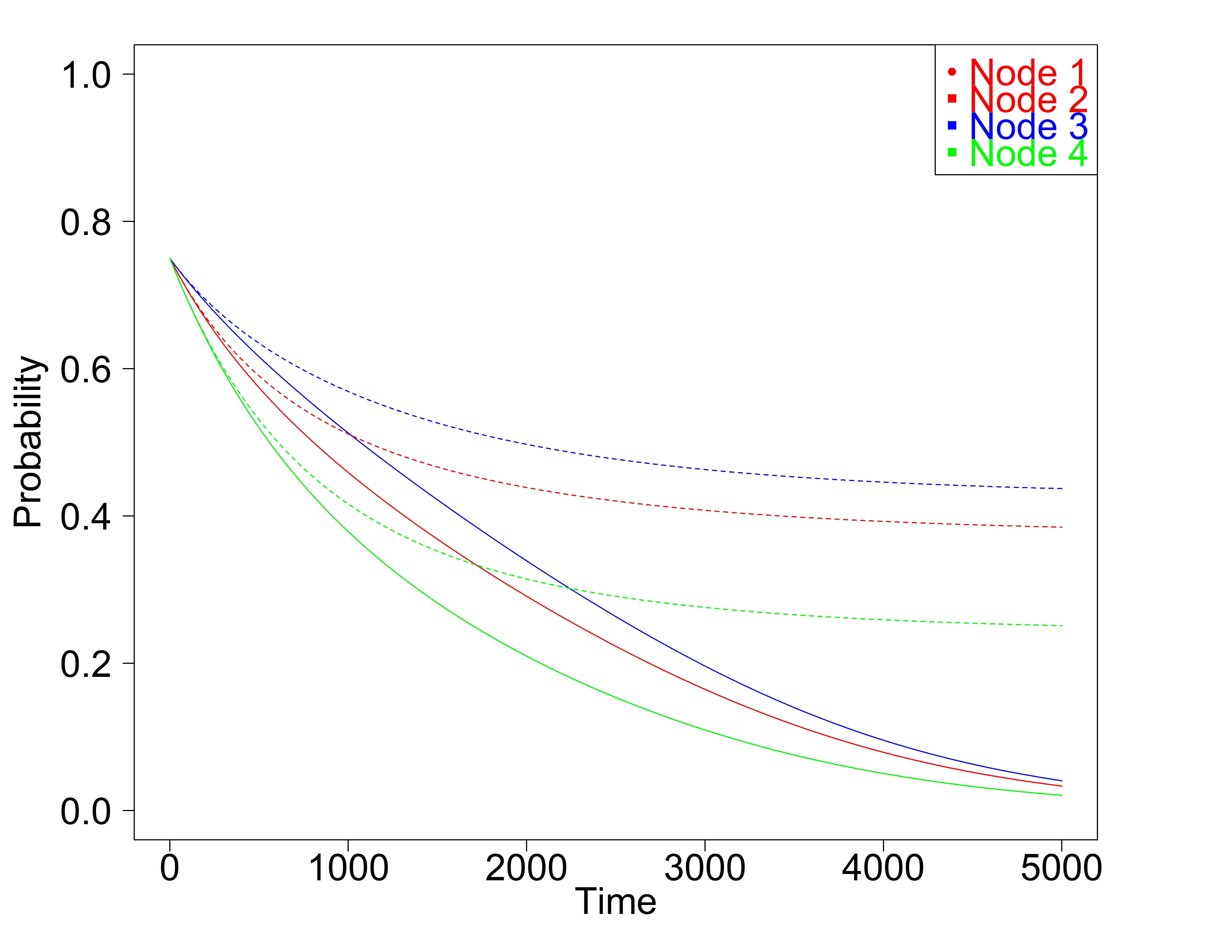}}
				\subfloat[]{\includegraphics[width=0.40\textwidth]{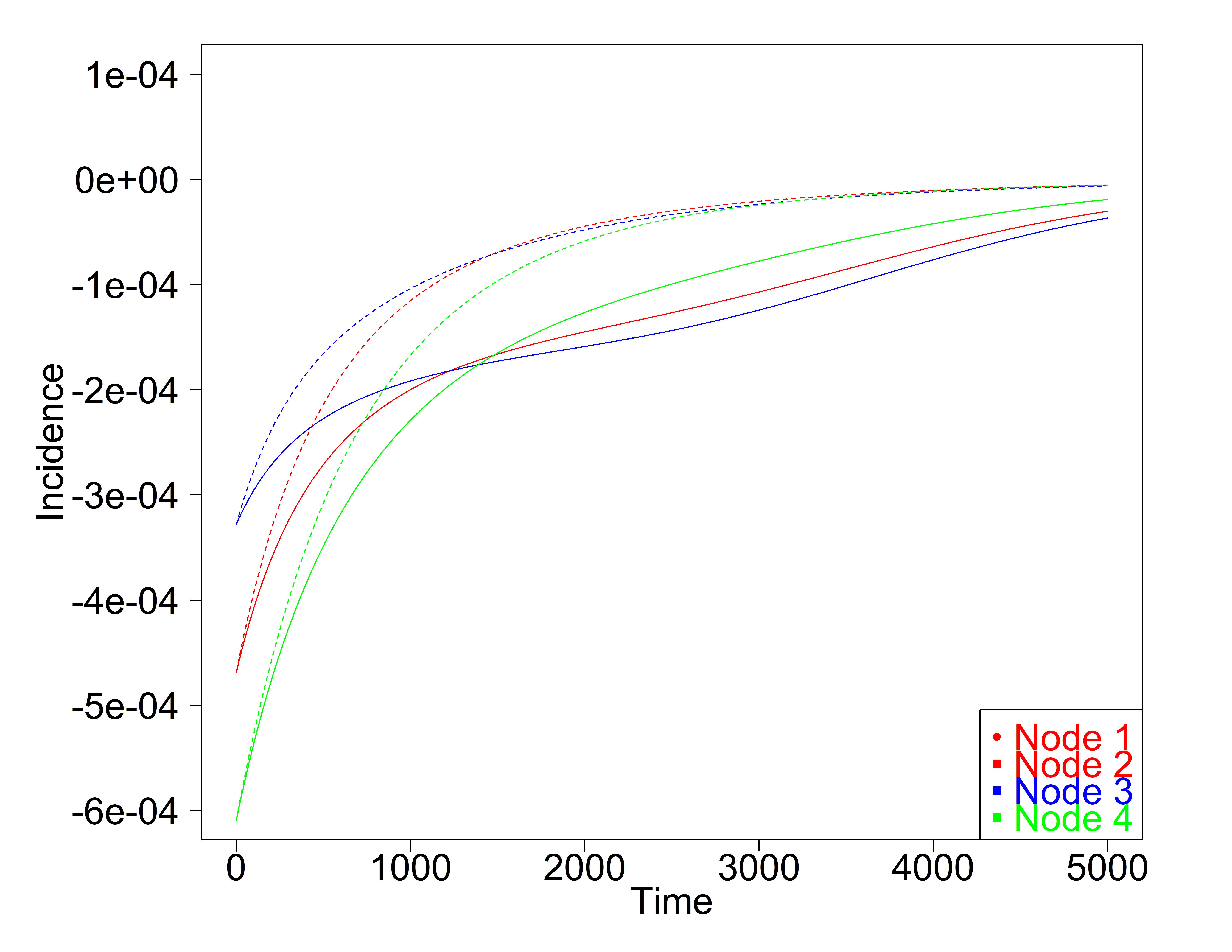}}
				\caption{Prevalence (panels (a), (c) and (e)) and incidence (panels (b),  (d) and (f)) for individual nodes in the examined network, under different conditions; (a) and (b): $p=0.25$, $\beta=0.004$ and $\gamma=0.001$; (c) and (d): $p=0.50$, $\beta=0.002$ and $\gamma=0.001$; (e) and (f) $p=0.75$, $\beta=0.001$ and $\gamma=0.001$. Solid lines represent the self-adaptive SIS model, dashed lines the standard SIS model.}
				\label{fig3} 
			\end{figure}
		\end{center}
	\vspace{\columnsep}
	\twocolumngrid
}

\begin{center}
	\begin{figure}[H]
		\centering
		\subfloat[]{\includegraphics[width=0.40\textwidth]{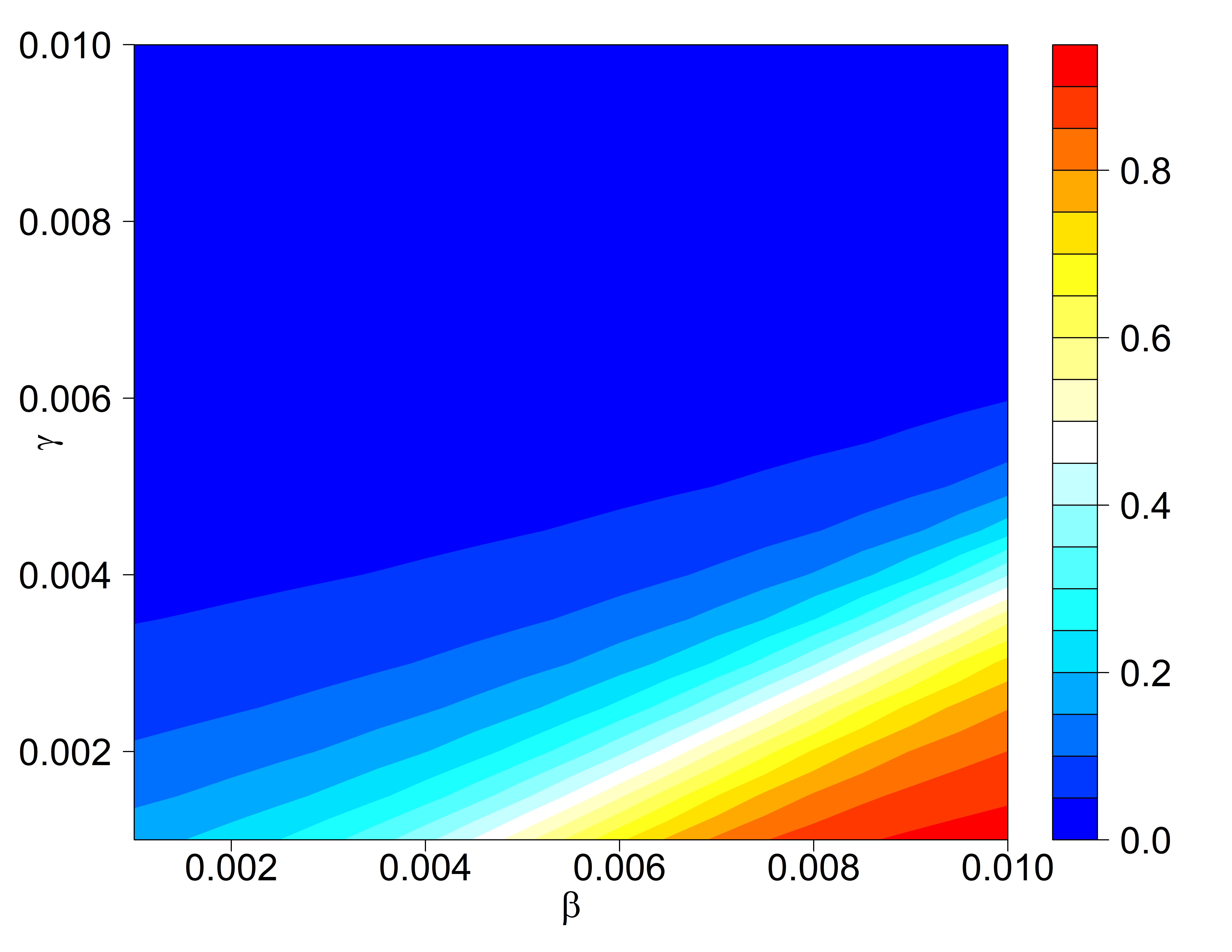}}\\
		\subfloat[]{\includegraphics[width=0.40\textwidth]{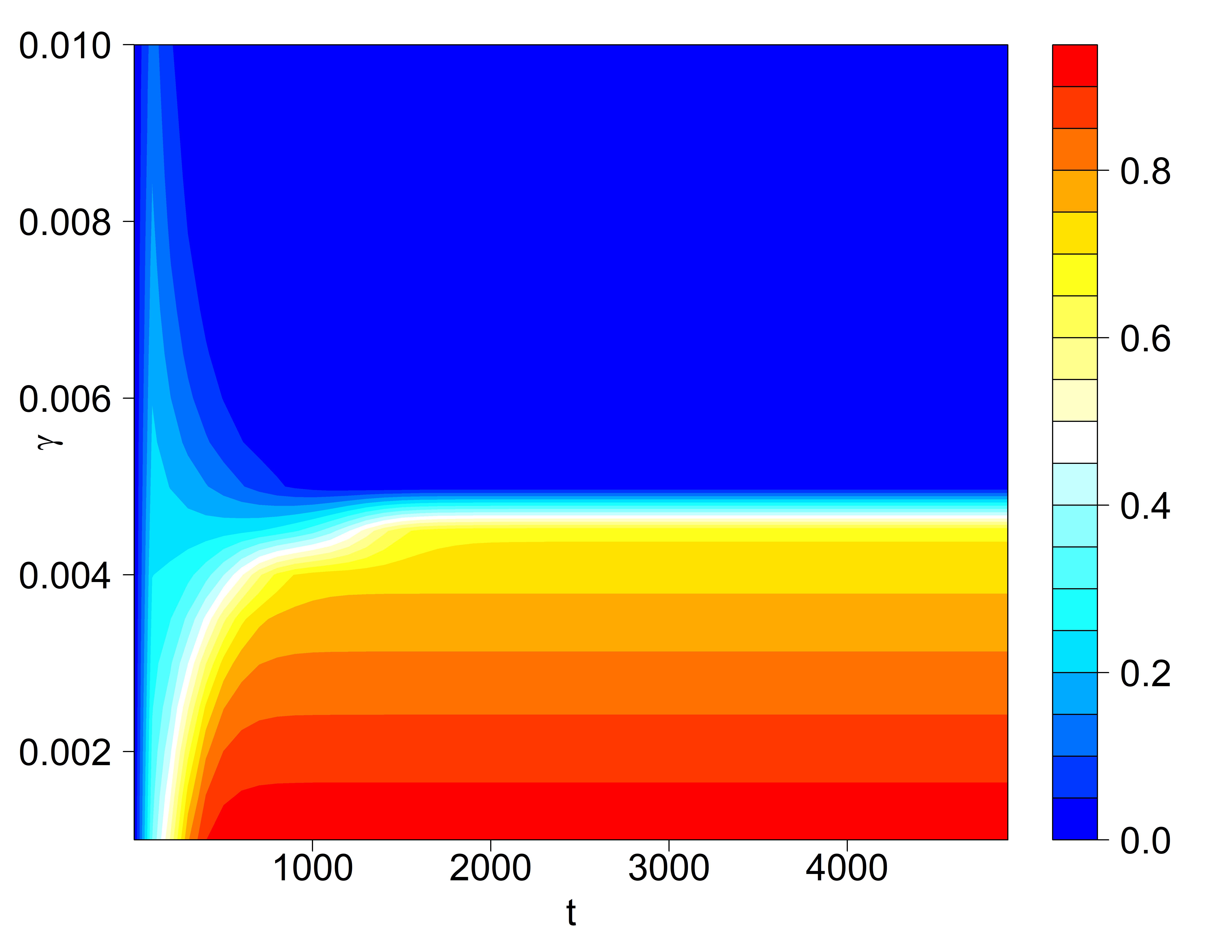}}\\
		\subfloat[]{\includegraphics[width=0.40\textwidth]{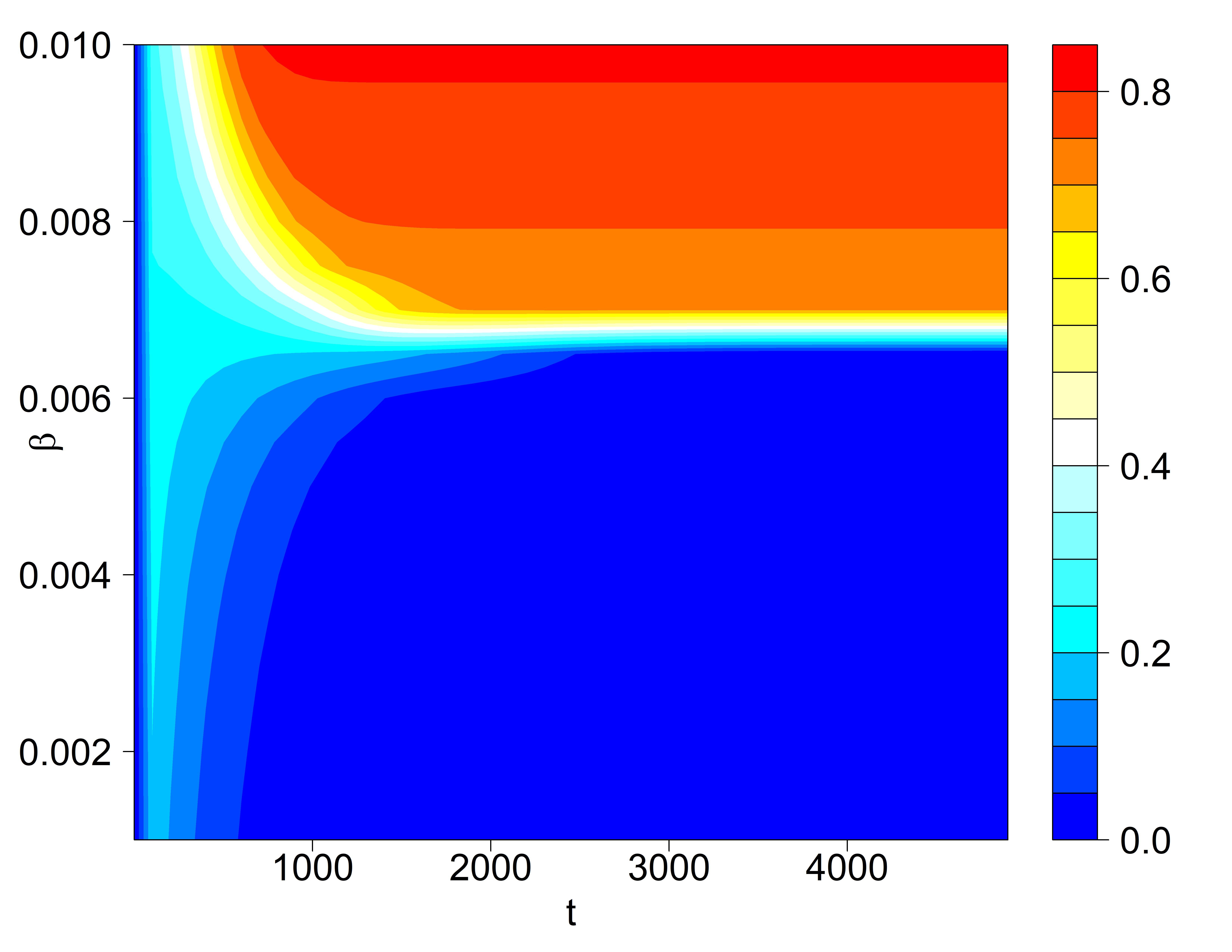}}
		\caption{Mean prevalence in the toy network, under different conditions on $\beta$ and $\gamma$ and at different times $t$. In panel (a) $t=500$; in panel (b) $\beta=0.010$; in panel (c) $\gamma=0.003$.}
		\label{fig9} 
	\end{figure}
\end{center}
 \vspace{-4mm}
Let us now focus on the particular case where $p=0.25$, $\beta=0.004$ and $\gamma=0.001$, which is represented in Fig. \ref{fig3}, panels (a) and (b). The steady states of the four nodes are: $x^{\star}_{1}=x^{\star}_{2}=0.8613893$, $x^{\star}_{3}=0.8984515$ and $x^{\star}_{4}=0.7569336$.\footnote{Similarly, we find for the edges $y^{\star}_{a}=0.8615711$ and $y^{\star}_{b}=y^{\star}_{c}=0.9031810$, and $y^{\star}_{d}=0.8665192$.}
The corresponding normalized eigenvectors are given by
\begin{align*}
\psi_{\bf M}^{(1)}=[0.508981,0.508981,0.530881,0.447260,0,0,0,0 ]^{T}, \\
\psi_{\bf M}^{(2)}=[0,0,0,0,0.487407,0.510946,0.510946,0.490206 ]^{T}. 
\end{align*}
The endemic steady state at the end of the ASIS process is then, as expected, the non-normalized dominant eigenvector of the matrix ${\bf M}({\bf z}^{\star})$, and the values of the final probabilities of each node are proportional to the components of the dominant normalized eigenvectors of the matrix ${\bf M}({\bf z}^{\star})$.
The components of the two eigenvectors $\psi_{\bf M}^{(1)}$ and $\psi_{\bf M}^{(2)}$ are therefore interpreted as the self-adaptive eigenvector centralities for nodes and edges defined in Section \ref{Self-adaptive eigenvector centrality}.
	
The stability of these solutions has been analyzed in subsection \ref{Stability of the general endemic and disease-free steady states}. The error $|\Delta z(t)|=|z(t)-z^{\star}|$ can be computed by Eq. (\ref{stability_solution7}), which predicts an exponential decay as a function of time. In Fig. \ref{fig5}, we illustrate in log-scale the exponential decay of the numerical error for the four nodes in the network under examination. Specifically, in the numerical simulation, we choose $|\Delta z_{0}|=|z(1000)-z^{\star}|$. 
\vspace{-2mm}
\begin{figure}[H]
	\centering
	{\includegraphics[width=0.40\textwidth]{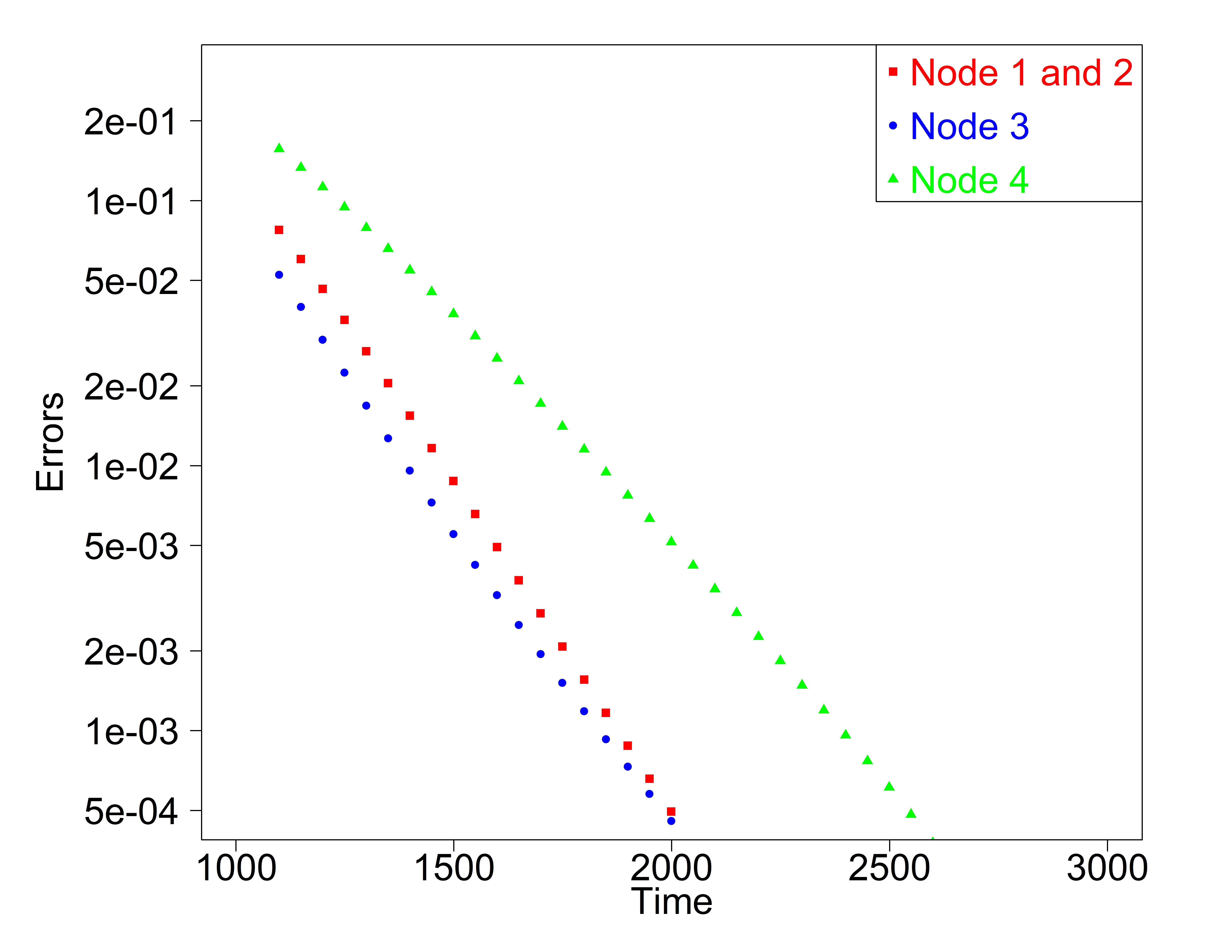}}
	\caption{Errors $\Delta z$ for the four nodes in the network example, for $p=0.25$, $\beta=0.004$ and $\gamma=0.001$.}
	\label{fig5} 
\end{figure}
\vspace{-2mm}
The stability of the solutions is in this case also guaranteed by the negative sign of the eigenvalues of the explicit Jacobian matrix ${\bf J}({\bf z})$, evaluated in the asymptotic solution.

\section{Numerical Experiments}
\label{Numerical Experiments}
In this section, we provide a numerical analysis to test the behavior of the proposed model. To this end, we consider three alternative classes of graphs: a random graph, based on \citet{ER1959}(ER) model (see also \citet{ER1960}), a small-world (SW) network, based on \citet{Watts1998}model and a \citet{Bar1999}(BA) model. Some sensitivity analyses have been explored by evaluating the effect of both the topological aspects of the network (as the number of nodes, density, etc.) and the parameters of the model.
	
We start focusing on ER graphs and testing the effect of the reinforcement factor $e\in [0,1]$, defined in Section \ref{Reinforcement factor}, on the diffusion process. To this end, we consider an ER model with $30$ nodes and an edge attachment probability of $0.5$.
In Fig. \ref{ER}, we show the average prevalence rates over time for both values of $\mathcal{R}$ above and below the threshold.
In Fig. \ref{ER}, panel (a), we observe that the more the factor $e$ tends to $1$, the faster the asymptotic level is reached. The ASIS model relies on the mutual reinforcement effect between the original network and the dual graph and this aspect can be noticed by the fact that, when $e=0$, the prevalence rates $x$ and $y$ tend to be farther apart than in the case of higher values of $e$. When $e=0$, we actually have two separate and independent SIS processes on the two networks.
In Fig. \ref{ER}, panel (b), we notice instead that, when $\mathcal{R}$ is below the epidemic threshold, for all values of $e$ the diffusion will die out and go to zero asymptotically. Differences between models seem smoothed in this case, although it is confirmed a slower convergence for the classical SIS model.
	
We now focus on the patterns of the prevalence rates obtained by applying an ASIS model with fixed parameters, and varying either the density (see Fig. \ref{ERd}) or the number of nodes (see Fig. \ref{ERn}) of the ER graphs. We notice that a higher density leads to a reduction of the heterogeneity between the prevalence of nodes. On the one hand, the structure of the network is more similar to the complete graph and hence the variability of prevalence rates between nodes is lower. On the other hand, a very fast convergence toward the steady state is observed. Vice versa, very sparse graphs lead to a higher heterogeneity between nodes as well as a lower convergence. In terms of the self-adaptive eigenvector centrality discussed in Section \ref{Self-adaptive eigenvector centrality}, this implies larger differences in the centralities of nodes and edges.
\begin{figure}[H]
		\centering
		\subfloat[]{\includegraphics[width=0.45\textwidth]{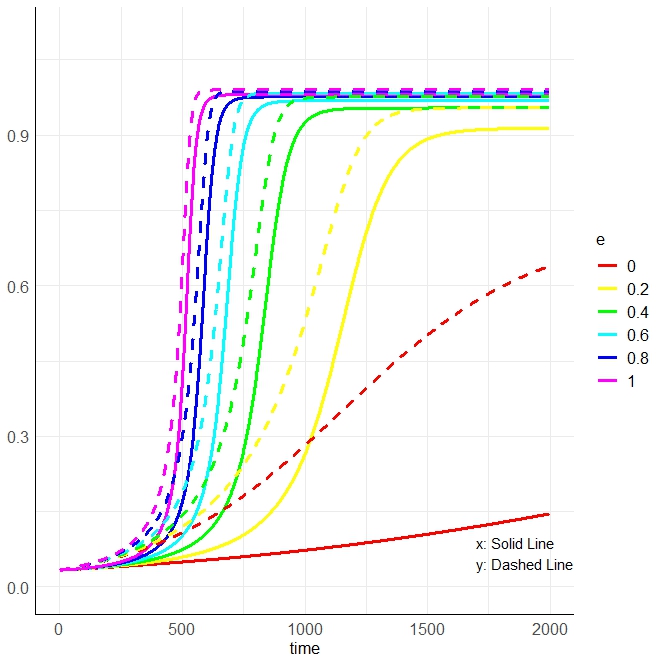}}\\
		\subfloat[]{\includegraphics[width=0.45\textwidth]{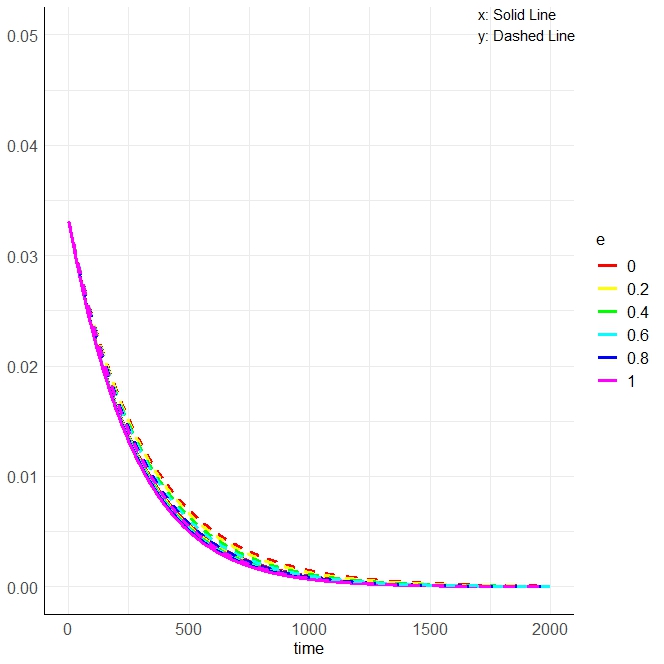}}
		\caption{Average prevalence rates in the ASIS model for different values of $e$. We consider a ER graph with $30$ nodes and density $0.5$ and we set $p=\frac{1}{30}$. Panel (a): $\beta=0.004$ and $\gamma=0.001$; panel (b): $\beta=0.001$ and $\gamma=0.004$.}
		\label{ER} 
	\end{figure}

\begin{figure}[H]
\centering
\subfloat[]{\includegraphics[width=0.45\textwidth]{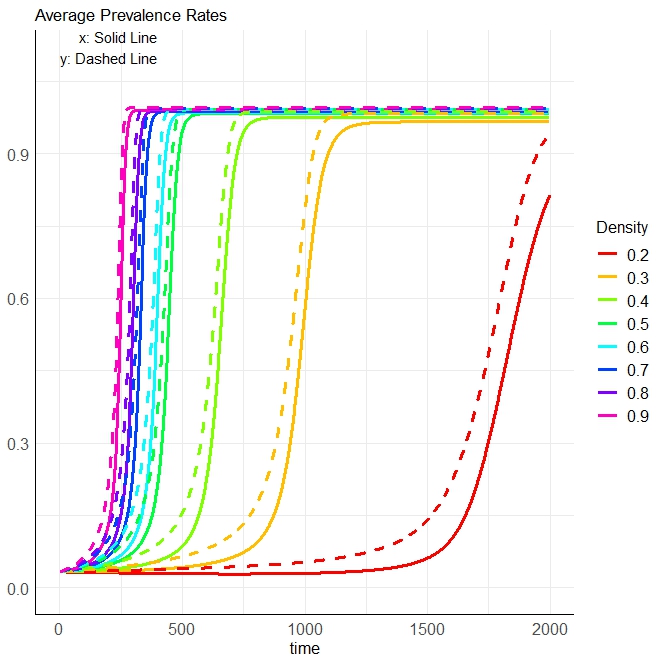}}\\
\subfloat[]{\includegraphics[width=0.45\textwidth]{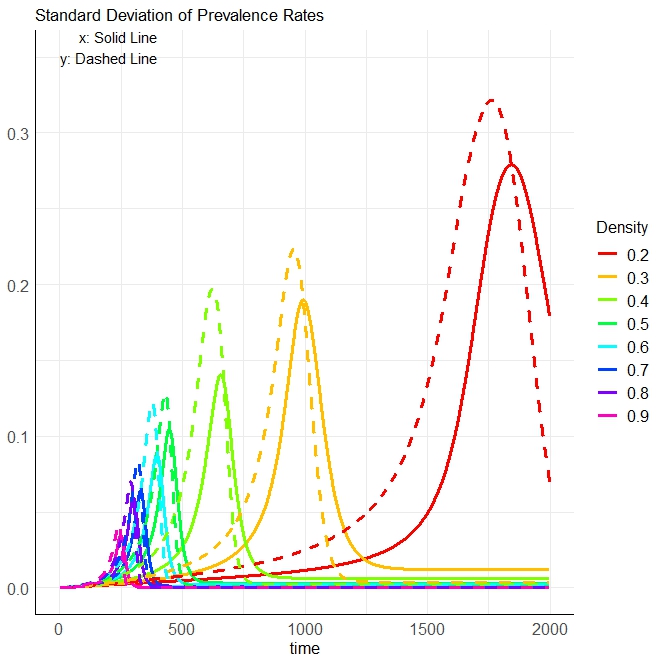}}
\caption{Average (panel (a)) and standard deviation (panel (b)) of prevalence rate distributions of the ASIS model obtained by considering a ER graph with 30 nodes and varying the density from 0.2 to 0.9 with steps of 0.1. We set $p=\frac{1}{30}$, $\beta=0.004$ and $\gamma=0.001$.}
\label{ERd} 
\end{figure}
	
\begin{figure}[H]
\centering
\subfloat[]{\includegraphics[width=0.45\textwidth]{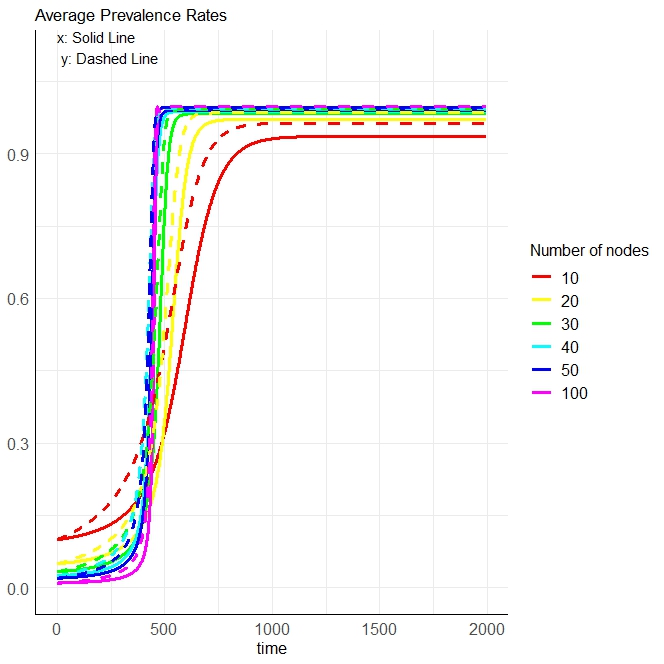}}\\
\subfloat[]{\includegraphics[width=0.45\textwidth]{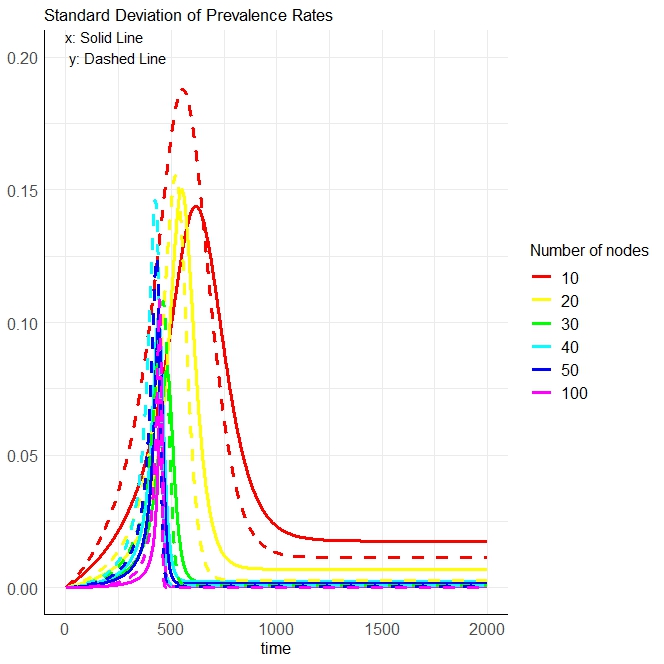}}
\caption{Average (panel (a)) and standard deviation (panel (b)) of prevalence rate distributions of the ASIS model obtained by considering a ER graph with a variable number of nodes and a density equal to 0.5. We set $p=\frac{1}{n}$, $\beta=0.004$ and $\gamma=0.001$.}
\label{ERn} 
\end{figure}

Moving on to consider the effect of the number of nodes (see Fig. \ref{ERn}) and assuming one node infected at the beginning of the process (i.e. $p=\frac{1}{n}$), we observe a slower propagation for smaller networks. In this case, the networks have a similar density, but when the order of the graph is higher, although a lower probability is observed at the beginning, the spreading dynamics increases and reaches the endemic steady state faster. It is also noteworthy that the size of the variability of the prevalence rates is not affected by the number of nodes. Indeed, in Fig. \ref{ERn}, panel (b), we observe that the heterogeneity between nodes and edges is similar, but the curve is shifted forward in time due to a slower process for smaller graphs.
	
We now focus on the ASIS diffusion on different graph models. In Fig. \ref{ERt}, we provide a comparison of the prevalence rates for diffusion processes above the threshold in the ER, SW and BA models.
According to the mean prevalence, we do not observe great differences between the models. On average, when the network has the same number of nodes and edges, the patterns are similar with a slightly lower endemic steady state for the BA model. However, the topological characteristics of the BA graph are caught in terms of a greater heterogeneity between nodes and edges (see Fig. \ref{ERt}, panel (b)). Indeed, a higher volatility among prevalence rates is noticeable for this model. This can be explained by the fact that the BA graph follows a power-law degree distribution, having few nodes with a significantly higher number of connections, while the majority of nodes have only a few connections. As a consequence, the BA graph is favorable for information cascades due to its scale-free nature. Influential nodes have indeed a higher chance of triggering large-scale information cascades. This means that information can  propagate quickly through the network, leading to widespread adoption or dissemination, and large differences between nodes are observed at the steady state. The differences in terms of variability between ER and SW graphs are less relevant, although we notice that the heterogeneity is a bit larger for SW. In SW graphs, the presence of strong local clustering and short path lengths allows for rapid containment of outbreaks within specific clusters, limiting the overall spread. However, when the infection bridges different clusters through long-range connections, it leads to a larger-scale epidemic. The ER graphs are more susceptible to disease spread due to the lack of strong clustering and more random connectivity. The absence of localized clusters hinders the containment of outbreaks and information spread more uniformly across the network providing a greater homogeneity between nodes.

\begin{figure}[H]
\centering
\subfloat[]{\includegraphics[width=0.45\textwidth]{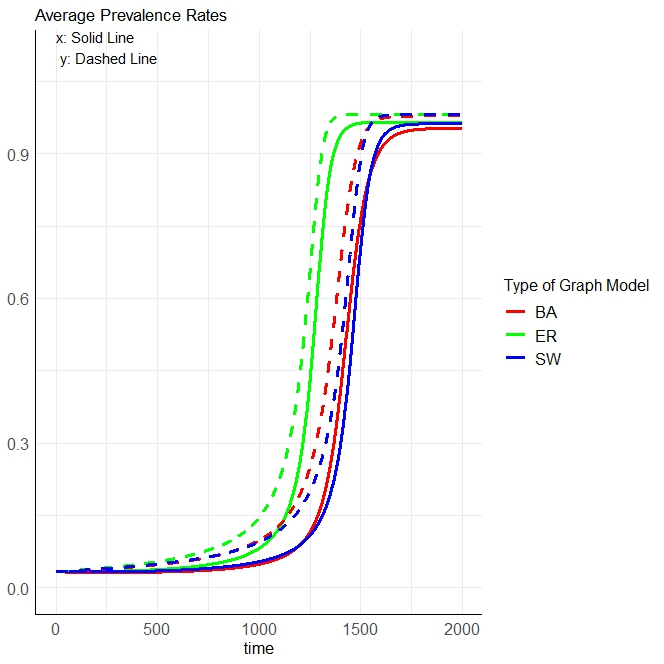}}\\
\subfloat[]{\includegraphics[width=0.45\textwidth]{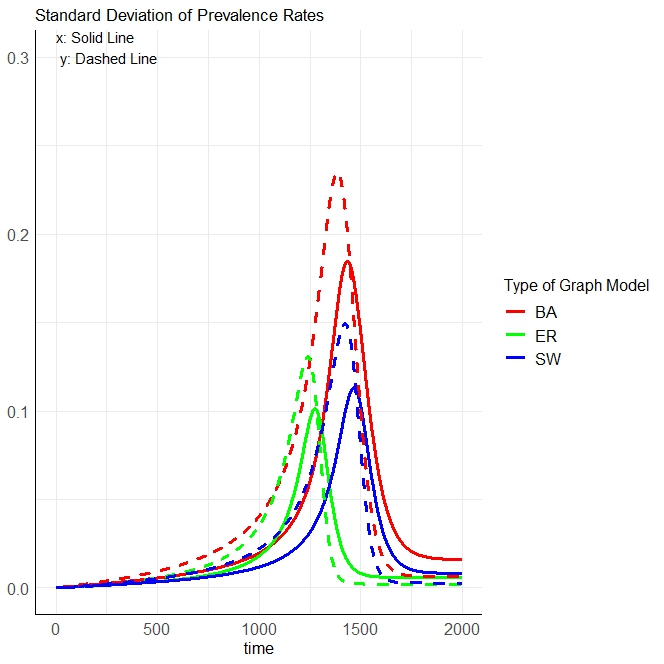}}
\caption{Average (panel (a)) and standard deviation (panel (b)) of prevalence rate distributions of the ASIS model on different graph models (ER, SW and BA, respectively). In all cases, networks have been generated considering $30$ nodes and a density equal to $0.5$. The same parameters have been used for all models: $p=\frac{1}{n}$, $\beta=0.002$ and $\gamma=0.001$.}
\label{ERt} 
\end{figure}
	
	
\clearpage
	
\section{Social reinforcement in lattice and random networks}
\label{Social reinforcement in lattice and random networks}

While epidemic models are employed to analyze the spread of opinions and behaviors, it is important to note that the dynamics of epidemic diffusion and of information dissemination differ in several key aspects.

The propagation of epidemics is primarily driven by biological factors such as transmission rates and incubation periods, while the dissemination of information or behavior is influenced by social and psychological factors such as individual beliefs, social status, and personal interests.

Moreover, the spread of disease requires physical contact, and in the absence of any policy, individuals usually exhibit passive behavior. Conversely, the dissemination of information, particularly in contemporary times, predominantly happens through online connections in addition to traditional face-to-face communication. In this context, individuals tend to take a more active role in making decisions, such as approving or disapproving behaviors.

By recognizing these differences, we can accurately capture the dynamics of each type of spread and devise effective interventions, including strategies to counteract disease propagation and misinformation.

Hence, it has been frequently emphasized in the literature that SIS and SIR models fail to explain the spread of information or behavior, for example in online social networks. \cite{Fortunato2009,Moro2009,Moro2011}

For instance, in 2010, Centola conducted an influential experiment on the spread of behavior in online social networks.\cite{Centola2010} The experiment showed the critical role that social reinforcement plays in the online spread of behavior. Social reinforcement refers to the typical condition in which an individual requires multiple prompts from peers before adopting an opinion or behavior. \cite{Peyton2009,Onnela2010} Indeed, the experiment showed that a single signal has a very weak effect on individuals' decision making, while redundant signals can increase the probability of approval and behavior adoption.

Specifically, among the six networks analyzed by Centola, three were regular networks and three were random networks of the same size and average degree. The primary outcome of the experiment challenges the prior assumption that random networks are more conductive to the propagation of behavior when compared to regular networks. In fact, behavior spreads faster and to a greater extent in highly clustered regular networks than in random ones, because in the former individuals receive more redundant signals.

This evidence prompts the search for diffusion models that involve reinforcing communication action between nodes.

In particular, Zheng \textit{et al.}\cite{Zheng2013} developed an interesting model where the primary diffusion rate $\beta$, representing the probability that an individual will adopt the behavior after receiving the information for the first time, incorporates the strength of social reinforcement. This reinforcement factor considers how many times an individual receives a specific piece of information. This aspect is particularly relevant in online social networks, where connections are often weaker compared to face-to-face communication. Their findings align with the online behavioral diffusion experiment. In fact, their model confirmed that when $\beta$ takes small/medium values, social reinforcement has an effect on the spreading process, and spreading is faster and further in regular networks than in random ones. For a large primary spreading rate, an individual who receives information about her neighbor's behavior for the first time has a higher probability to adopt it and to take the same action, so that the factor of social reinforcement becomes less influential.

The ASIS model proposed in this paper automatically incorporates a reinforcing action in the communication between nodes. It achieves this by continuously updating the edge weight based on the actual probability that a given node has adopted a behavior or opinion. In particular, it avoids the need to define extrinsic rules to update the infection rate values. Due to these characteristics, it is well suited to provide an accurate description of diffusion phenomena of the nature described above.  

In fact, it incorporates, in addition to the $\beta$ infection and $\gamma$ recovery parameters, the reinforcement factor $e$ defined in Section \ref{Reinforcement factor}. This parameter represents the intensity of the reinforcement action in the communication between nodes. When $e=0$, there is no reinforcement and the model is suitable for describing disease propagation (SIS model). When $e\neq 0$, it includes such a reinforcement and is suitable for describing the information dissemination, for example, in online social networks (ASIS model). As $e$ grows from $0$ to $1$ the intensity of the reinforcement grows accordingly.

Hence, we tested the hypothesis that the interaction reinforcement introduced in our model and measured by $e$ may favor information dissemination within a regular social network compared to a random network. To ensure computational efficiency, we performed a variety of numerical simulations on moderately sized binary networks. We defer to a subsequent dedicated paper the detailed analysis of an extended real-world network. In general, the numerical evidence supports the results of Centola's experiment and aligns with the model by Zheng \textit{et al.}.

We perform here a comparative analysis between random networks, specifically Erd\H{o}s-R\'{e}nyi networks, and various types of regular networks: square lattice with von-Neumann neighborhood, square lattice with Moore neighborhood, cycle and regular network with degree 3. In each pair of graphs under comparison, we maintained an identical number of nodes $n$ and the same density $\delta$ (and consequently the same size $m$). Results on random networks were averaged over $100$ different instances of Erd\H{o}s-R\'{e}nyi networks of type $G(n,m)$ with the same parameters.

In Fig. \ref{fig11}, we illustrate the behavior of the square lattice with $n=25$ and the corresponding random graph in the parameter space $(\beta,e)$, where $\beta$ is the infection rate and $e$ is the adaptive parameter of our model. In this simulation we specifically assumed: $\delta=0.1333$, $p=5/25$, $\gamma=0.02$, $0.02\leq \beta \leq 0.12$ and $1\leq t\leq 400$. Panels (a) and (b) show the contour plot of steady state probabilities in the plane $(\beta,e)$, averaged over the nodes in the network. We call these averaged values $X^{\star}_{\rm lattice}$ and $X^{\star}_{\rm random}$ for the two kinds of network. In all the contour plots, blue represents low probability values while red represents high values. The horizontal slice at $e=0$ represents the standard SIS model, while the horizontal slice at $e=1$ represents the fully adaptive SIS model. As $e$ increases from $0$ to $1$ (with step $0.1$), the reinforcement effect in the social interactions increases. In general, in both panels (a) for the lattice and (b) for the random network, in order to have the same asymptotic probability as $\beta$ increases, a lower reinforcement effect $e$ is sufficient. For a fixed value of $\beta$, the asymptotic probability increases with $e$. In Fig. \ref{fig11}, panel (c), we plot the difference $X^{\star}_{\rm lattice}-X^{\star}_{\rm random}$ between the asymptotic probability values on the lattice network and the random network. As observed, this difference can take on positive and negative values in the plane, contingent upon whether diffusion predominates in the lattice model (positive values, red in the plot) or in the random model (negative values, blue in the plot). Let us consider, for instance, the value $\beta=0.04$: as $e$ increases from $0$ to $1$, the difference increases by approximately $0.1$, thus showing a $10\%$ higher probability of diffusion in the lattice model compared to the random model with the introduction of the reinforcement effect. The difference proves to be significant within a specific range of small to medium values of $\beta$, up to about $0.06$. The standard SIS model ($e=0$) shows the widest interval in which the difference is negative and, therefore, the spreading range is much larger in random networks than in regular networks. When $\beta$ is beyond a certain value, the intensity of the infection process becomes such that it levels out any difference. When the reinforcement grows, the spreading range in the regular network tends to be greater than that in random network over a wider range of $\beta$. This evidence can be further confirmed by observing panel (d) in Fig. \ref{fig11}. In this panel we depicted the time evolution of the prevalence curves for a fixed value $\beta=0.04$ and for different values of the reinforcement parameter. The solid lines represent the lattice network, the dashed lines the random one. All solid and dashed curves are coupled with the same color. The color refers to the value of $e$, from the red one in the bottom ($e=0$) to the violet one in the top ($e=1$). As can be seen, for lower values of $e$ the dashed lines end up above the solid lines and the spreading is higher in the random network than in the regular one. Conversely, for higher values of $e$, the solid lines end up above the dashed one, indicating the dominance of the spreading process in the lattice network over the random one.

For $\beta$ values approximately above $0.06$, the process enters the overactive region where the infection rate is large enough that the social reinforcement strength does not affect the spreading range. In this case, the regular network appears to foster diffusion better than the random one.


Let us now explore the dependence of this behavior on network size and density.  Fig. \ref{fig12}, panels (a)-(d), replicates the aforementioned observations on a larger square grid ($n=64$) with a lower density $\delta=0.0556$. In this case, we assumed: $p=8/64$, $\gamma=0.02$,  $0.02\leq \beta \leq 0.10$ and $1\leq t\leq 400$. In particular, panel (c) confirms the earlier findings with some distinctions. There is a well-identified region in which diffusion on a random network dominates, for $\beta$ values below approximately 0.06. Again, by increasing the parameter $e$ with a fixed value of $\beta$, we transition toward regions where diffusion on regular networks dominates that on random networks, but it is notable that higher values of the parameter are needed for this shift. Essentially, as the density of the network decreases, a greater reinforcement parameter is necessary to transition from one regime to another. The dominance of diffusion on a random network at low values of the parameter is confirmed by cases where diffusion on a lattice leads to extinction while diffusion on a random network reaches a stationary state, as depicted in panel (d) for $\beta=0.04$. In general, our conclusion is consistent with Centola's expectation that, in sparse networks, reducing the network density can narrow the difference in $X^{\star}$ between regular and random networks.

\begin{figure}[H]
			\centering
			\subfloat[]{\includegraphics[width=0.30\textwidth]{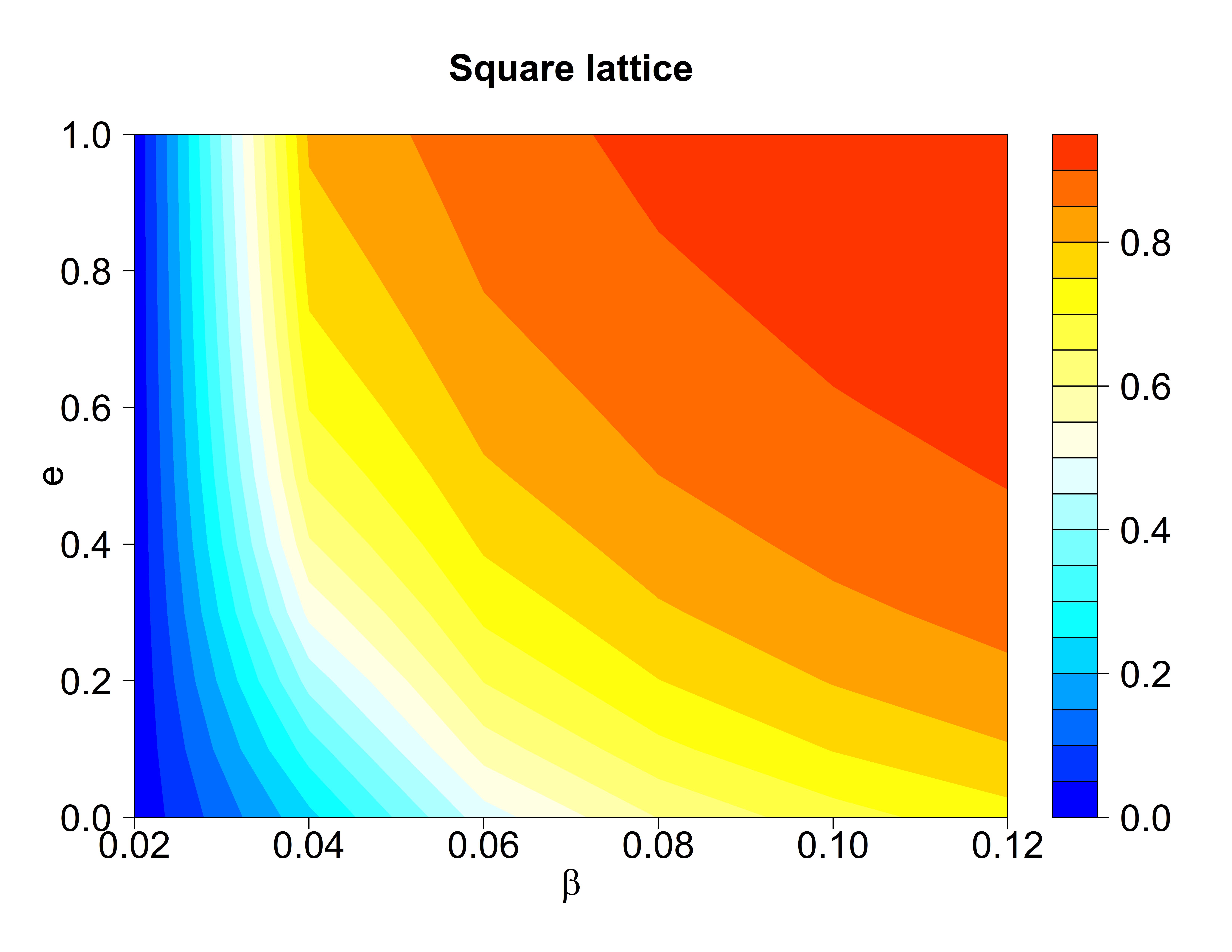}}\\ \vspace{-4mm}
			\subfloat[]{\includegraphics[width=0.30\textwidth]{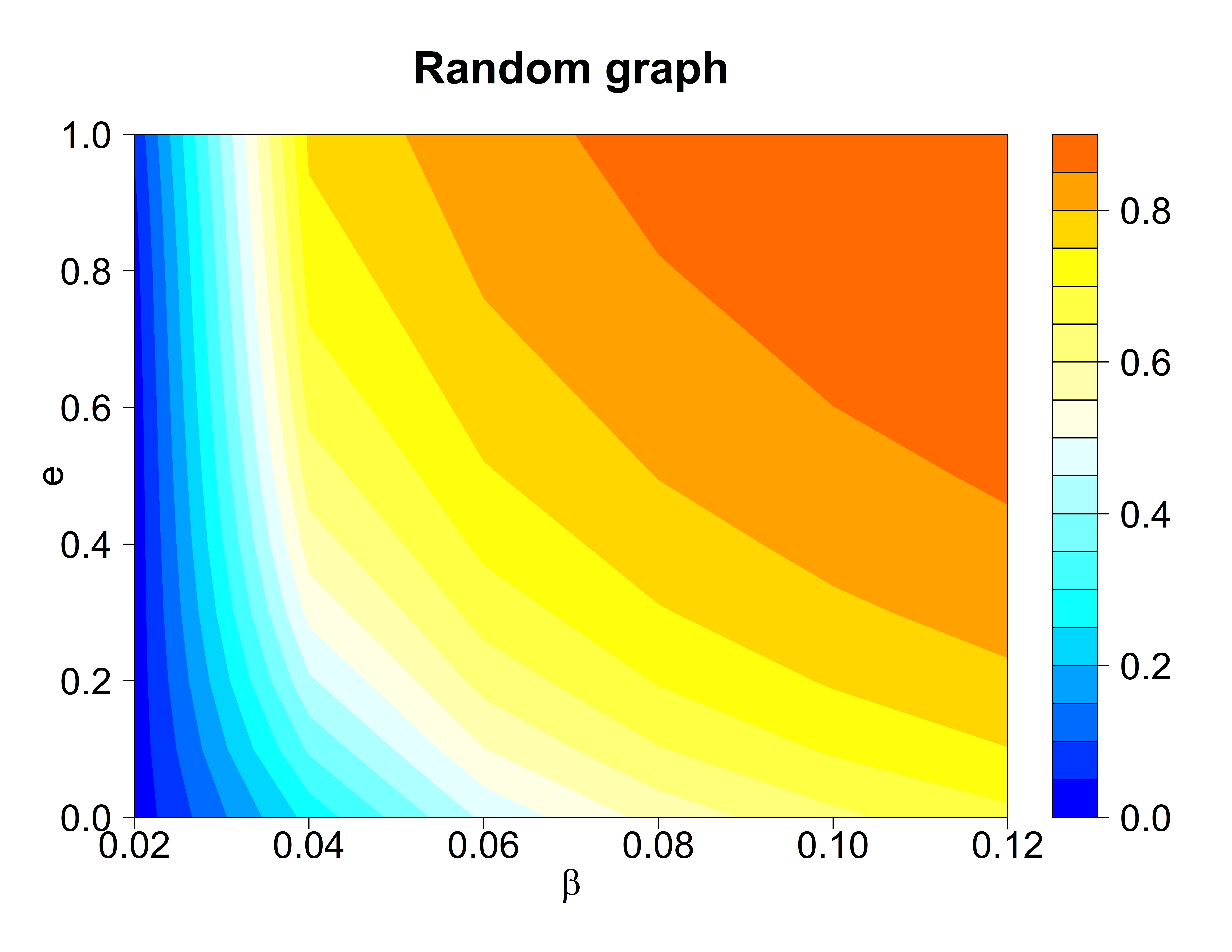}}\\ \vspace{-4mm}
			\subfloat[]{\includegraphics[width=0.30\textwidth]{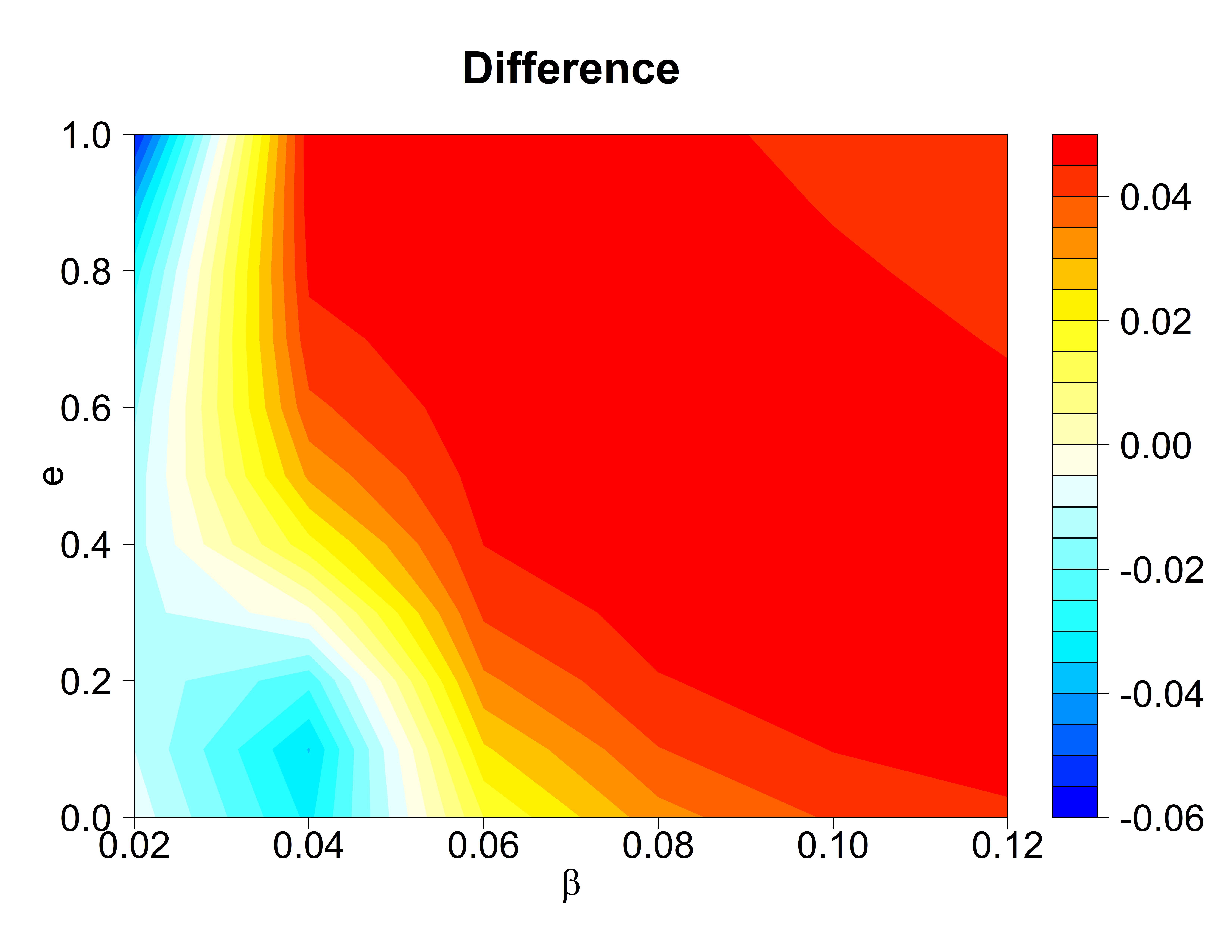}}\\ \vspace{-4mm}
			\subfloat[]{\includegraphics[width=0.30\textwidth]{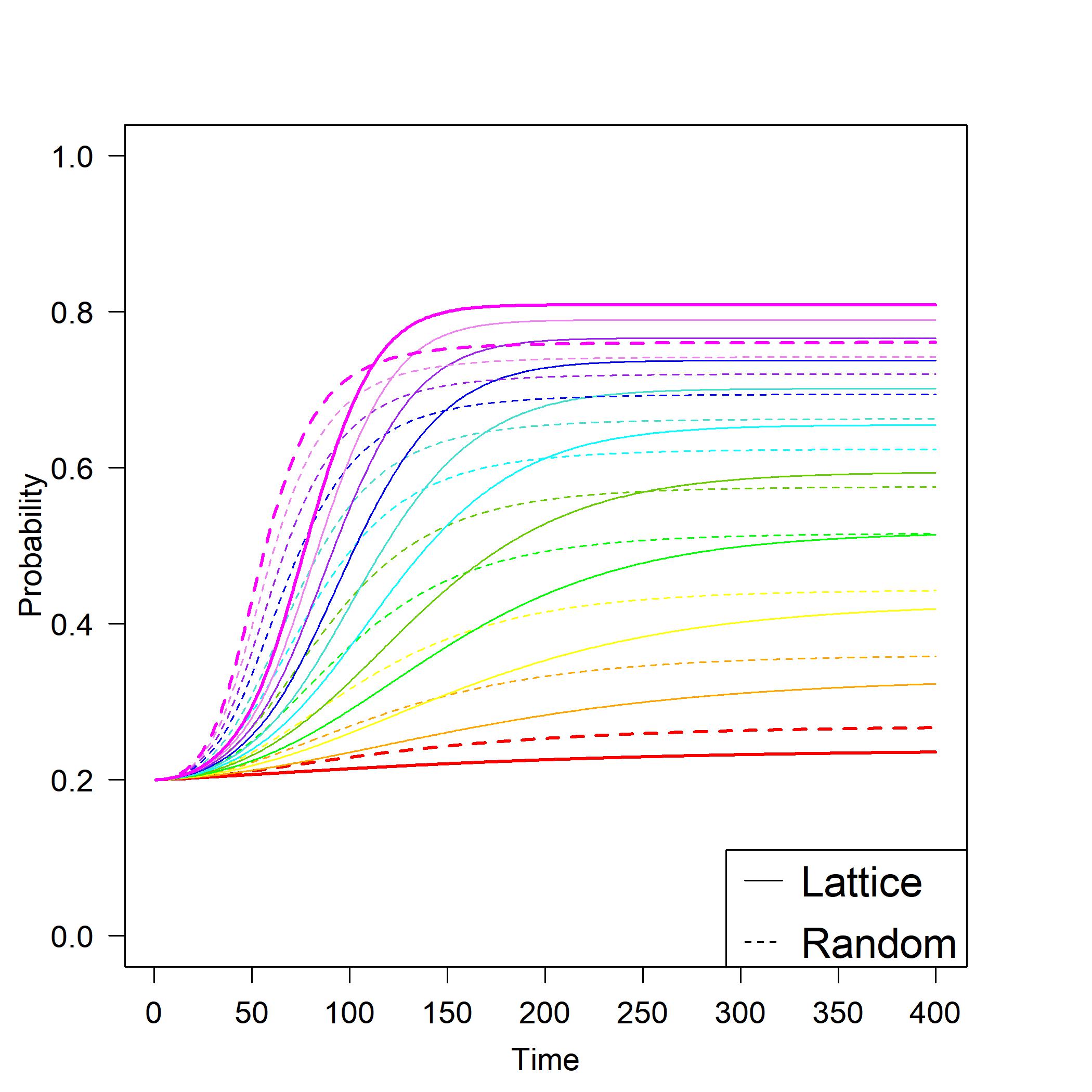}}
			\caption{Contour plot of the steady state probabilities in the parameter space $(\beta, e)$: in panel (a) $X^{\star}_{\rm lattice}$ for the square lattice, and in panel (b) $X^{\star}_{\rm random}$ for the random network with $n=25$, $\delta=0.1333$, $p=5/25$, $\gamma=0.02$. In panel (c) contour plot of the difference $X^{\star}_{\rm lattice}-X^{\star}_{\rm random}$. In panel (d) time evolution of the mean prevalence for $e$ in $[0,1]$. See the text for detailed explanation.}
			\label{fig11} 
\end{figure}

\begin{figure}[H]
	\centering
	\subfloat[]{\includegraphics[width=0.30\textwidth]{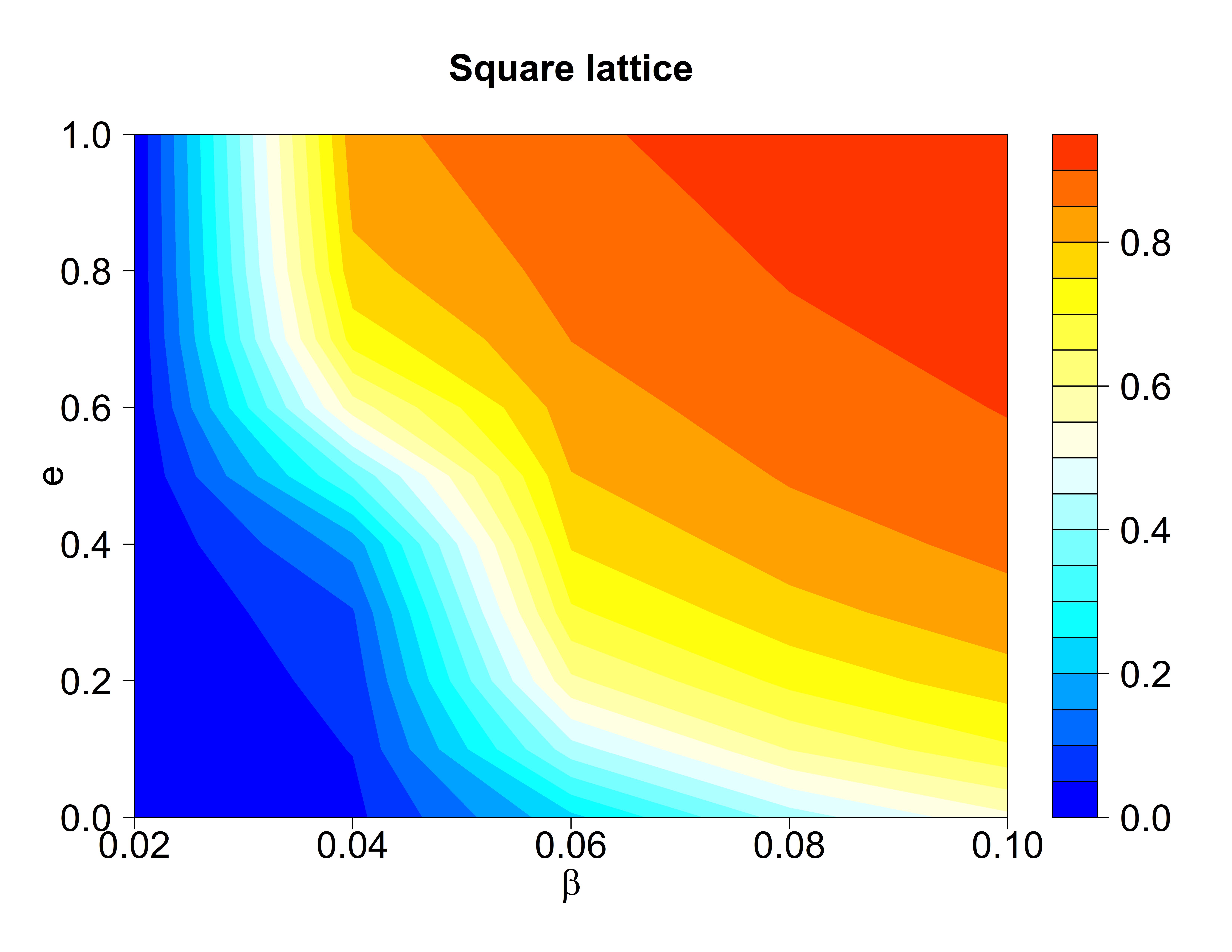}}\\ \vspace{-4mm}
	\subfloat[]{\includegraphics[width=0.30\textwidth]{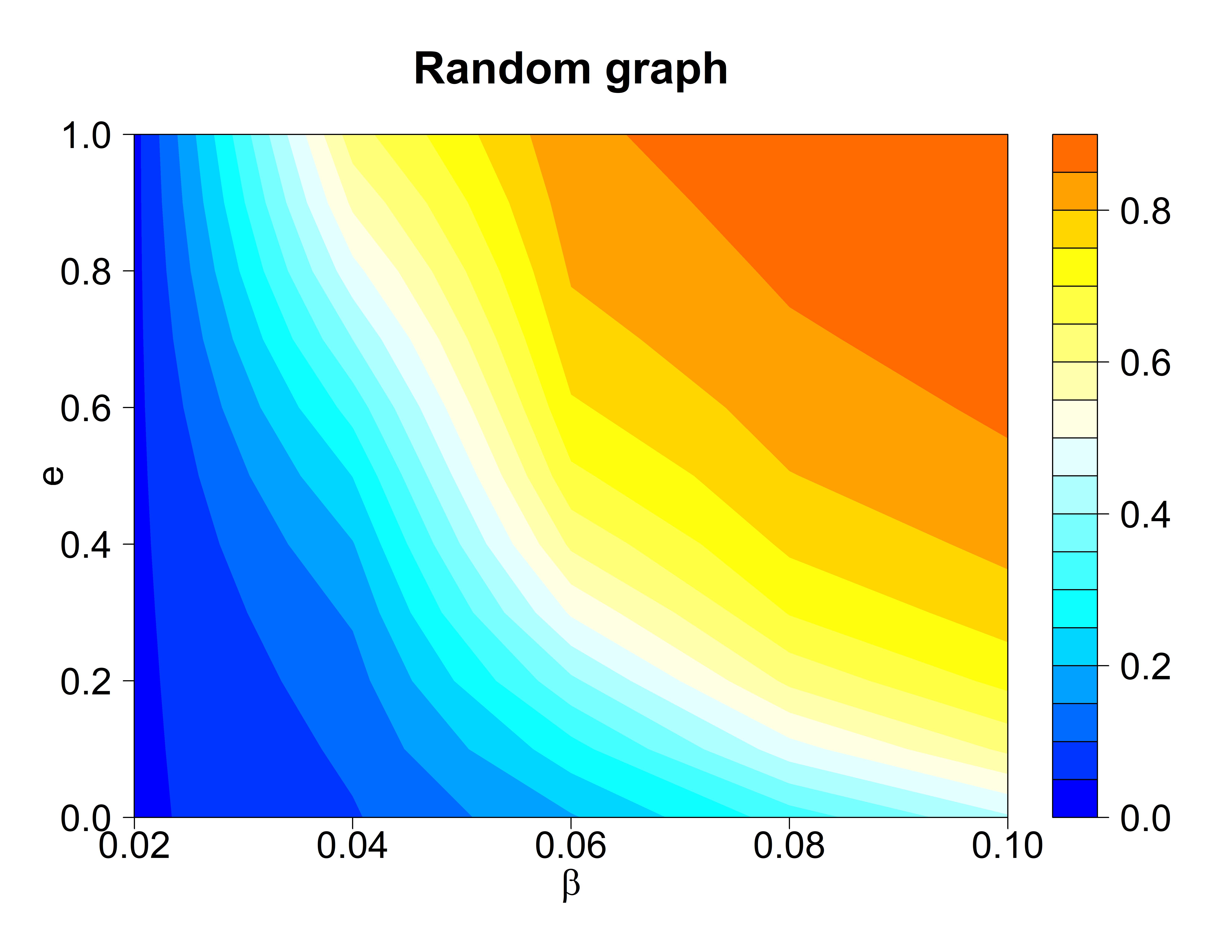}}\\ \vspace{-4mm}
	\subfloat[]{\includegraphics[width=0.30\textwidth]{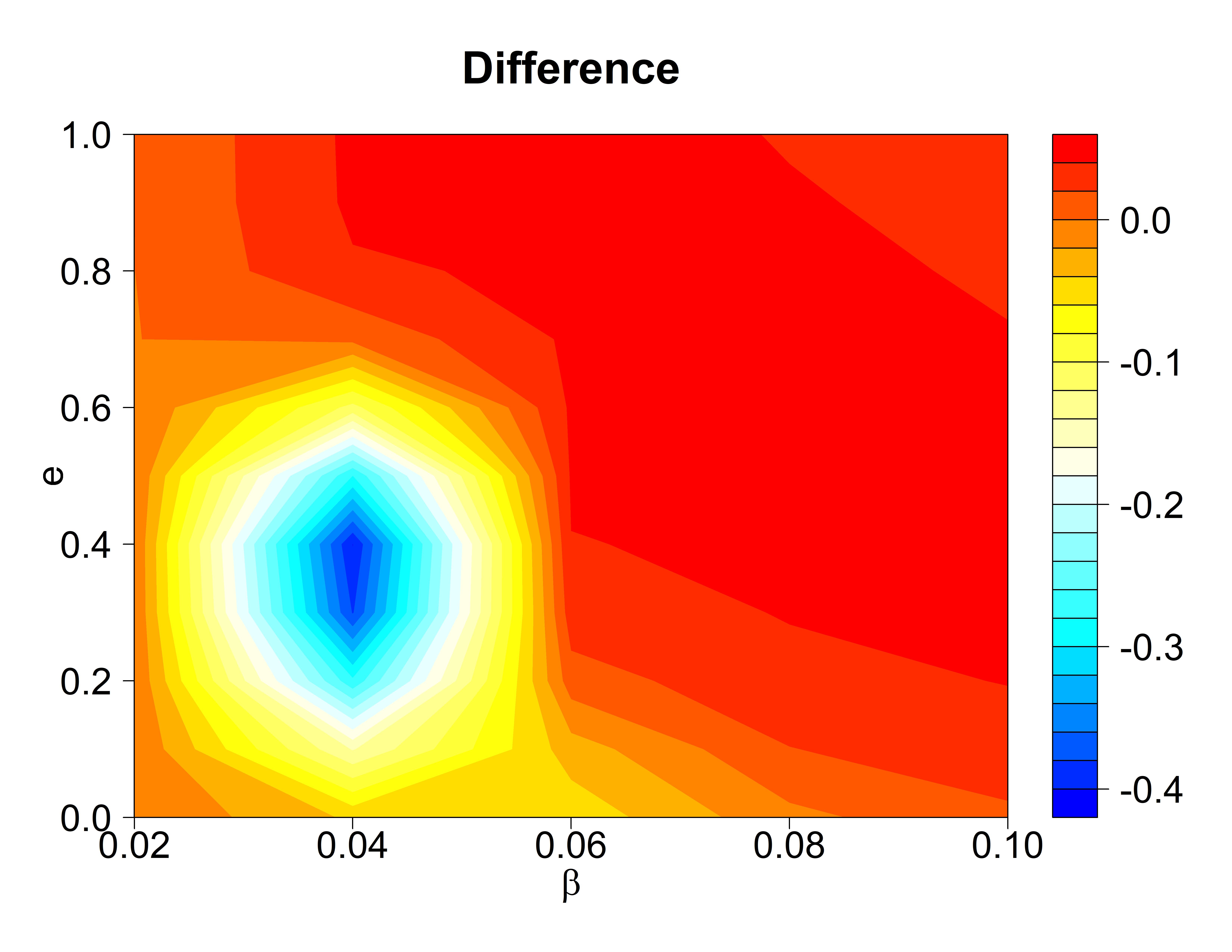}}\\ \vspace{-4mm}
	\subfloat[]{\includegraphics[width=0.30\textwidth]{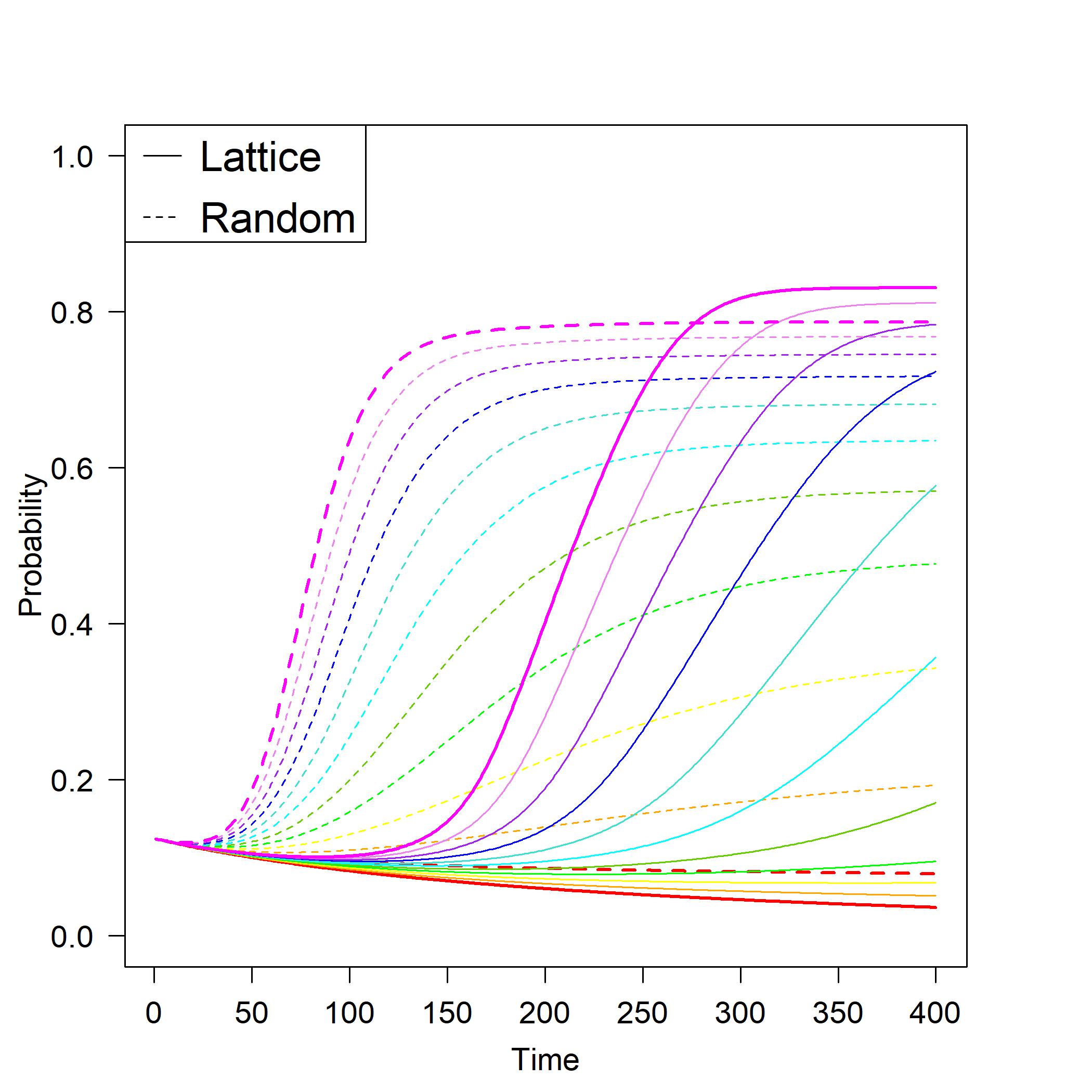}}
	\caption{Contour plot of the steady state probabilities in the parameter space $(\beta, e)$: in panel (a) $X^{\star}_{\rm lattice}$ for the square lattice, and in panel (b) $X^{\star}_{\rm random}$ for the random network with $n=64$, $\delta=0.0556$, $p=8/64$, $\gamma=0.02$. In panel (c) contour plot of the difference $X^{\star}_{\rm lattice}-X^{\star}_{\rm random}$. In panel (d) time evolution of the mean prevalence for $e$ in $[0,1]$. See the text for detailed explanation.}
	\label{fig12} 
\end{figure}

The analysis has been then repeated on a Moore lattice. In this case, we obtain an interesting temporal dynamics of the dominance of one regime over the other. Fig. \ref{fig13} shows four different time snapshots of the difference $X^{\star}_{\rm lattice}-X^{\star}_{\rm random}$ at times $t=100$, $t=200$, $t=300$ and $t=400$, the latter representing the time at which the steady state is reached for all values of $e$.

Fig. \ref{fig13}, panel (a), shows that in the early stages of the epidemic there is a well-defined island for low values of $\beta$, where the reinforcement exerts a strong pulling effect, leading to an increase of the spread on regular networks. Achieving the same effect over time only requires a gradually decreasing value of $e$. On this type of network and with these sizes, the adaptivity of the model appears to prevail in the initial phases of the process, while in the final phases the absolute values of the difference seem to decrease, while maintaining the reinforcement effect in favor of the regular network model. These results are in line with those obtained in the model by Zheng \textit{et al.}\cite{Zheng2013}.

Finally, we tested the difference between the behavior of regular graphs, i.e. cycles and regular graphs with constant degree equal to 3, and comparable random networks with the same number of nodes and density. The results are represented in Fig. \ref{fig14}. Panels (a) and (b) refer to a cycle with $n=20$ nodes, $\delta=0.1053$, $p=4/20$ and $\gamma=0.01$.  Panels (c) and (d) refer to a 3-regular graph with $n=20$ nodes, $\delta=0.1579$, $p=4/20$ and $\gamma=0.01$.
In this figure, the contour plots refer to the asymptotic values only, at $t=400$ and $t=800$ respectively. The effect of reinforcement on the setting of the regime in the two types of networks emerges strongly. It is observed very clearly that at low values of infectivity, as the parameter $e$ increases, the regular graph exhibits a higher spread of infection than the comparable random graph.

\begin{figure}[H]
	\centering
	\subfloat[]{\includegraphics[width=0.30\textwidth]{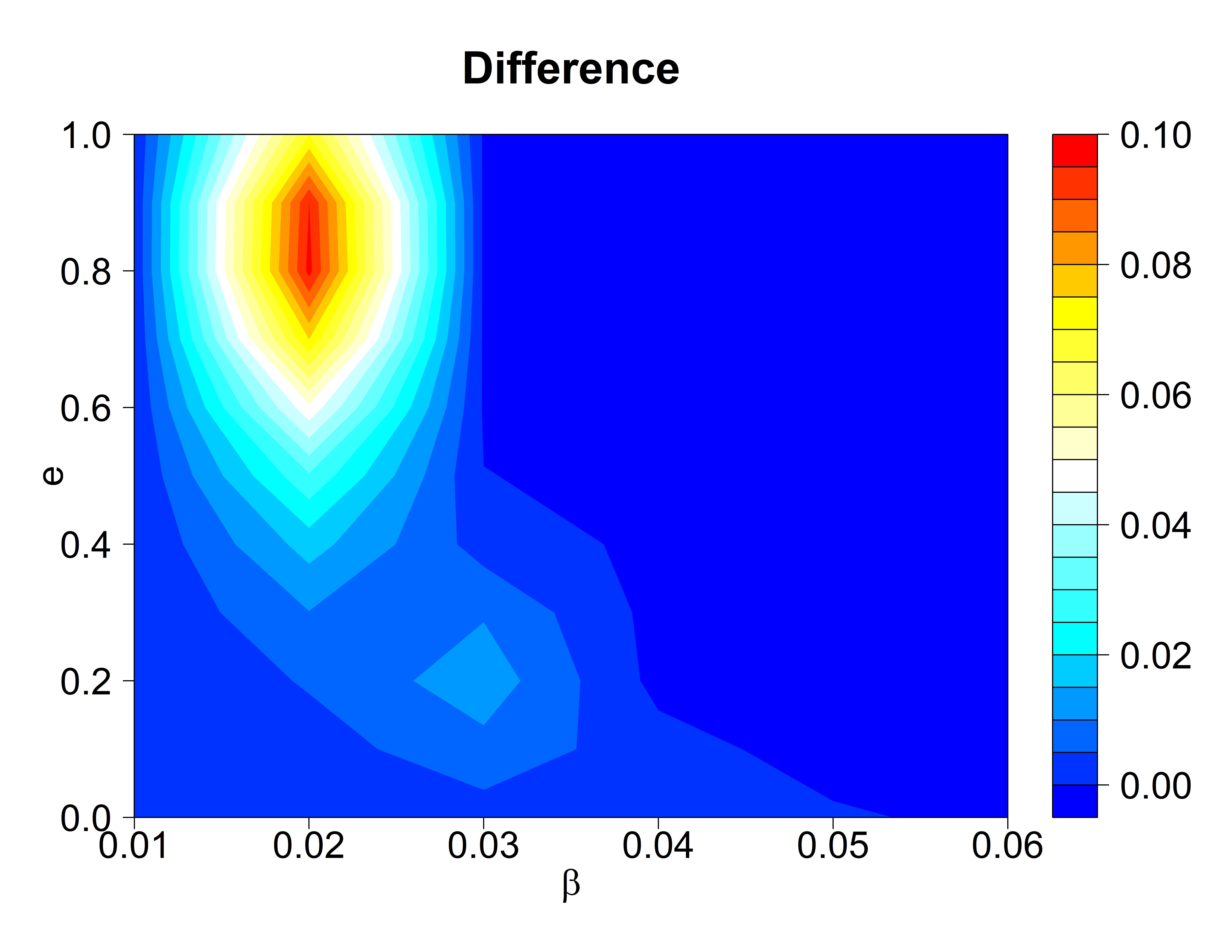}}\\ \vspace{-4mm}
	\subfloat[]{\includegraphics[width=0.30\textwidth]{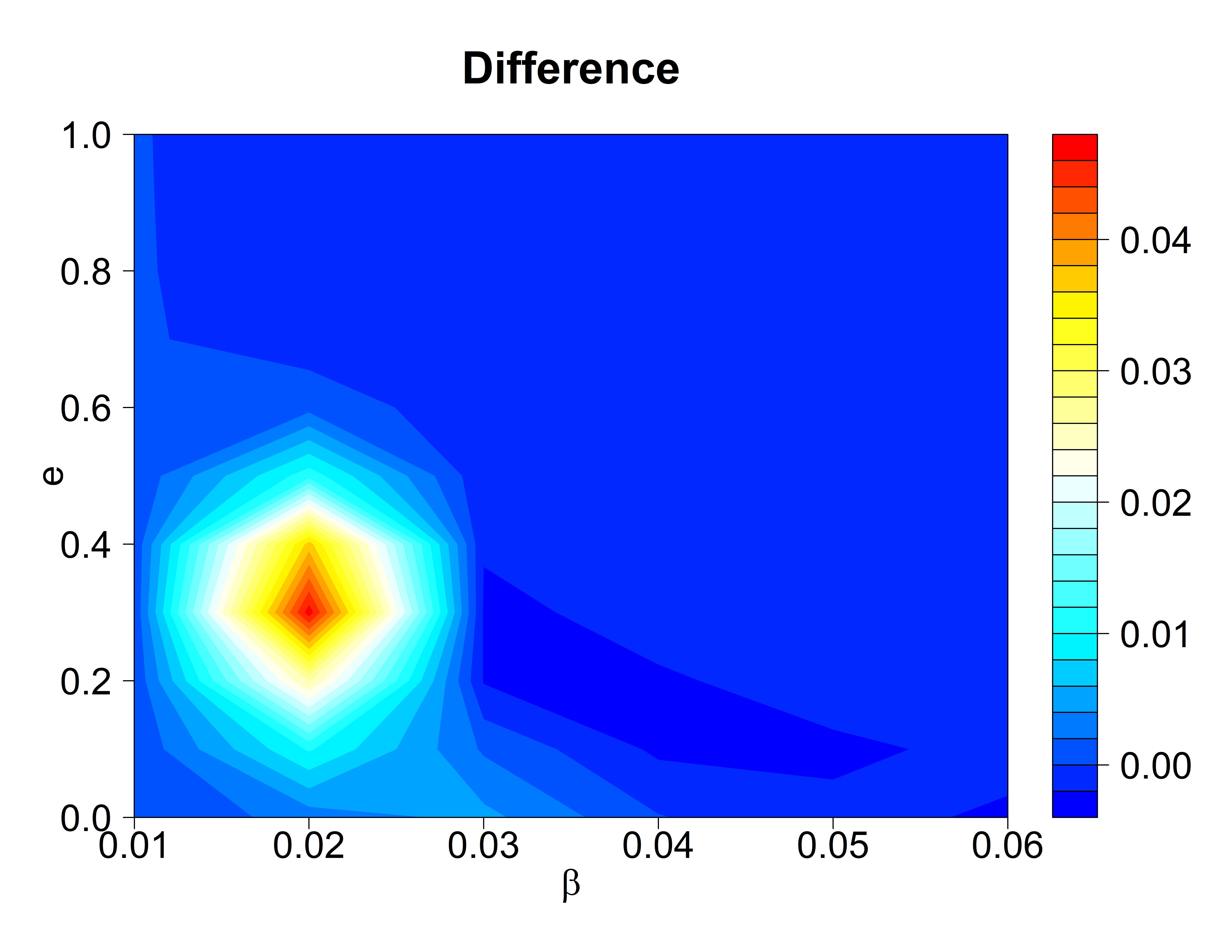}}\\ \vspace{-4mm}
	\subfloat[]{\includegraphics[width=0.30\textwidth]{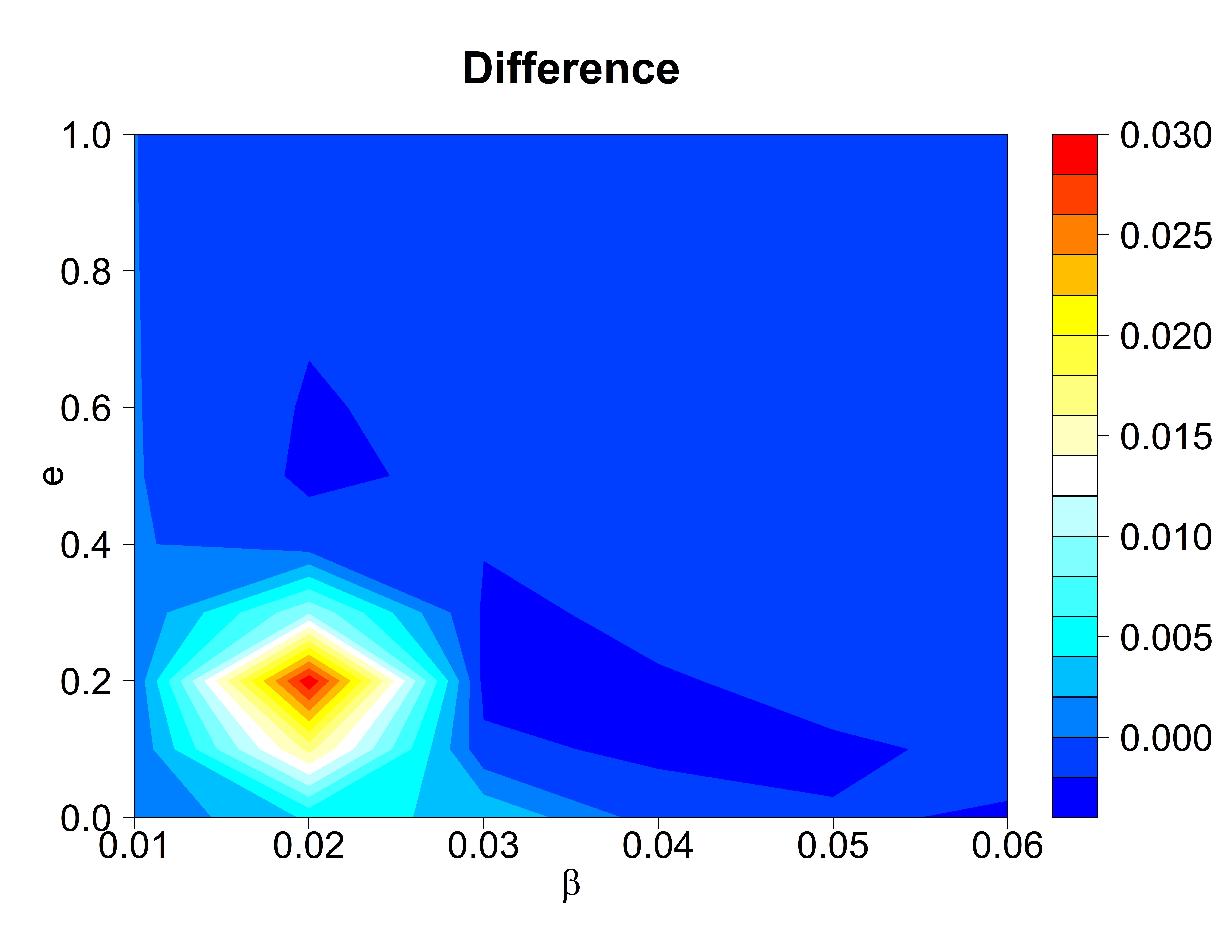}}\\ \vspace{-4mm}
	\subfloat[]{\includegraphics[width=0.30\textwidth]{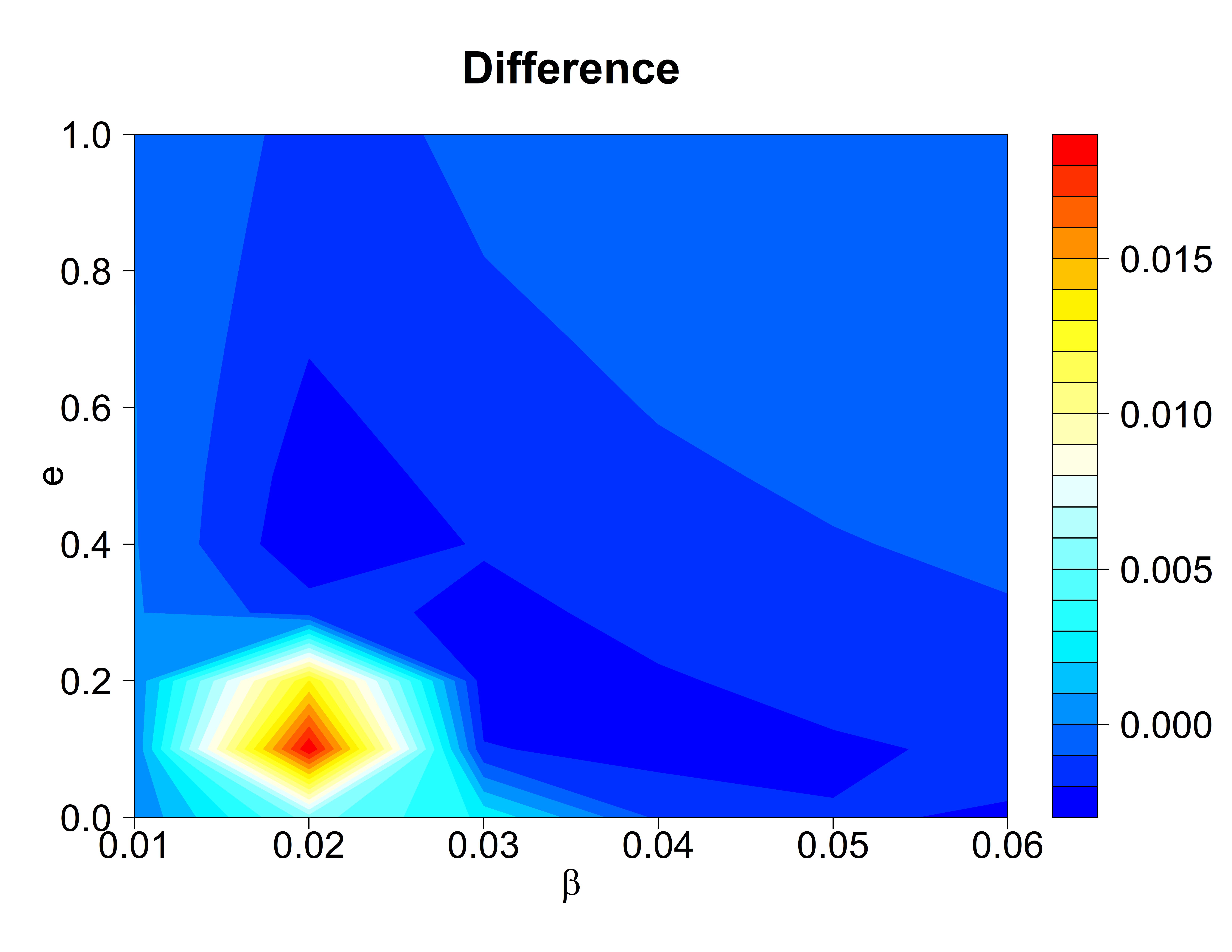}}
	\caption{Four snapshots of the evolution the contour plot of the difference $X^{\star}_{\rm lattice}-X^{\star}_{\rm random}$ on a Moore lattice with $n=25$, $\delta=0.34$, $p=3/25$, $\gamma=0.02$ and $t$ equal to (a) $100$, (b) $200$, (c) $300$, (d) $400$.}
	\label{fig13} 
\end{figure}

\begin{figure}[H]
	\centering
	\subfloat[]{\includegraphics[width=0.30\textwidth]{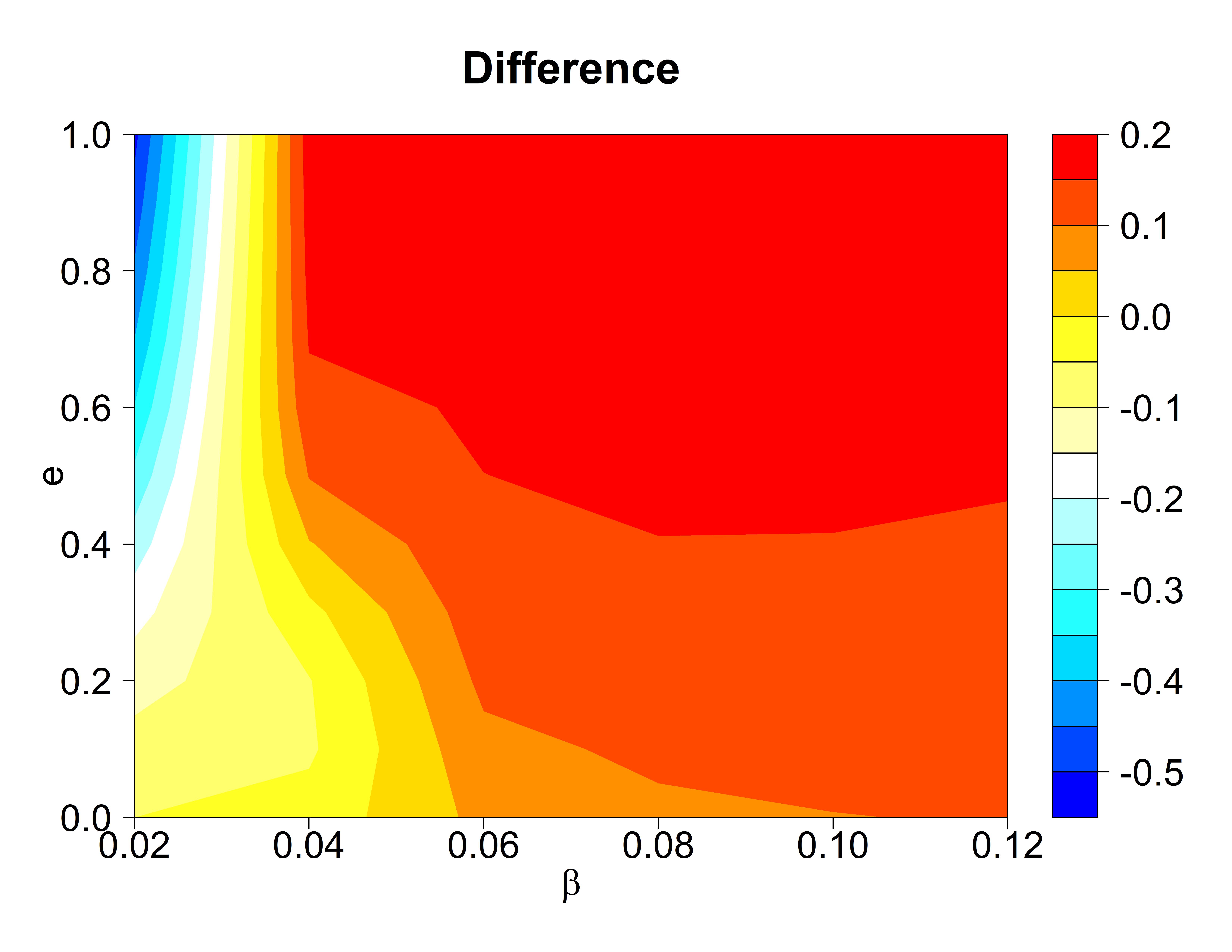}}\\ \vspace{-4.5mm}
	\subfloat[]{\includegraphics[width=0.30\textwidth]{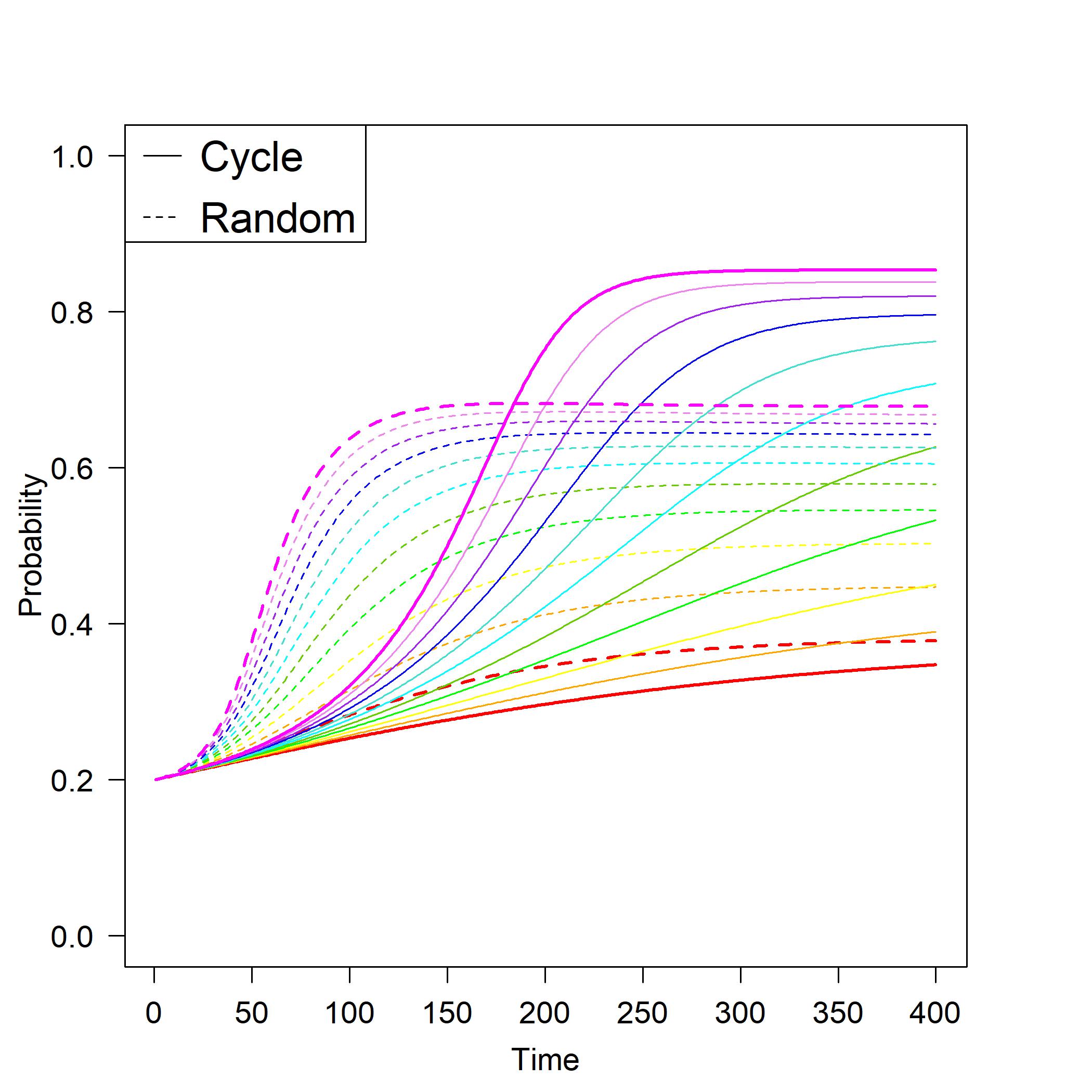}}
	\\ \vspace{-4mm}
	\subfloat[]{\includegraphics[width=0.30\textwidth]{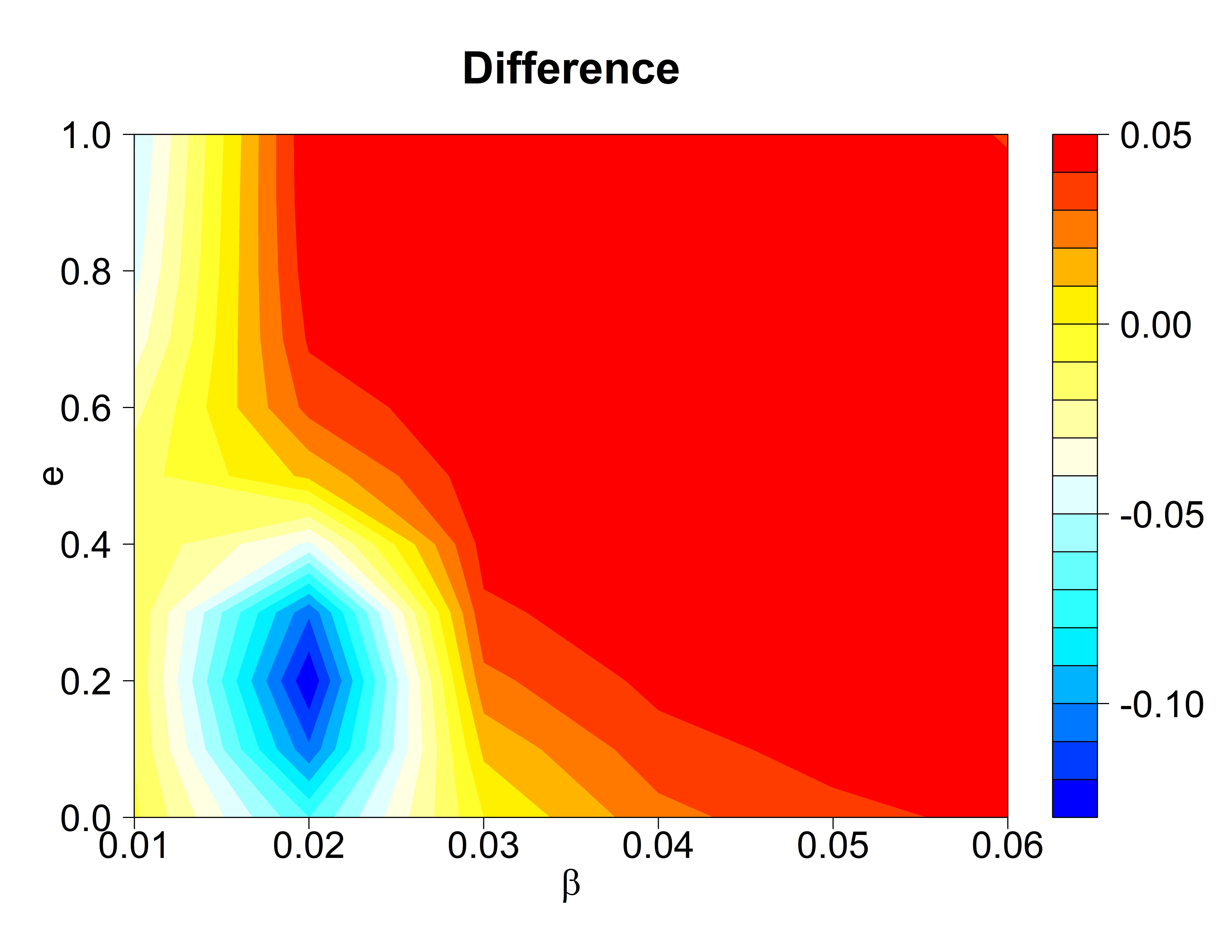}}\\ \vspace{-4.5mm}
	\subfloat[]{\includegraphics[width=0.30\textwidth]{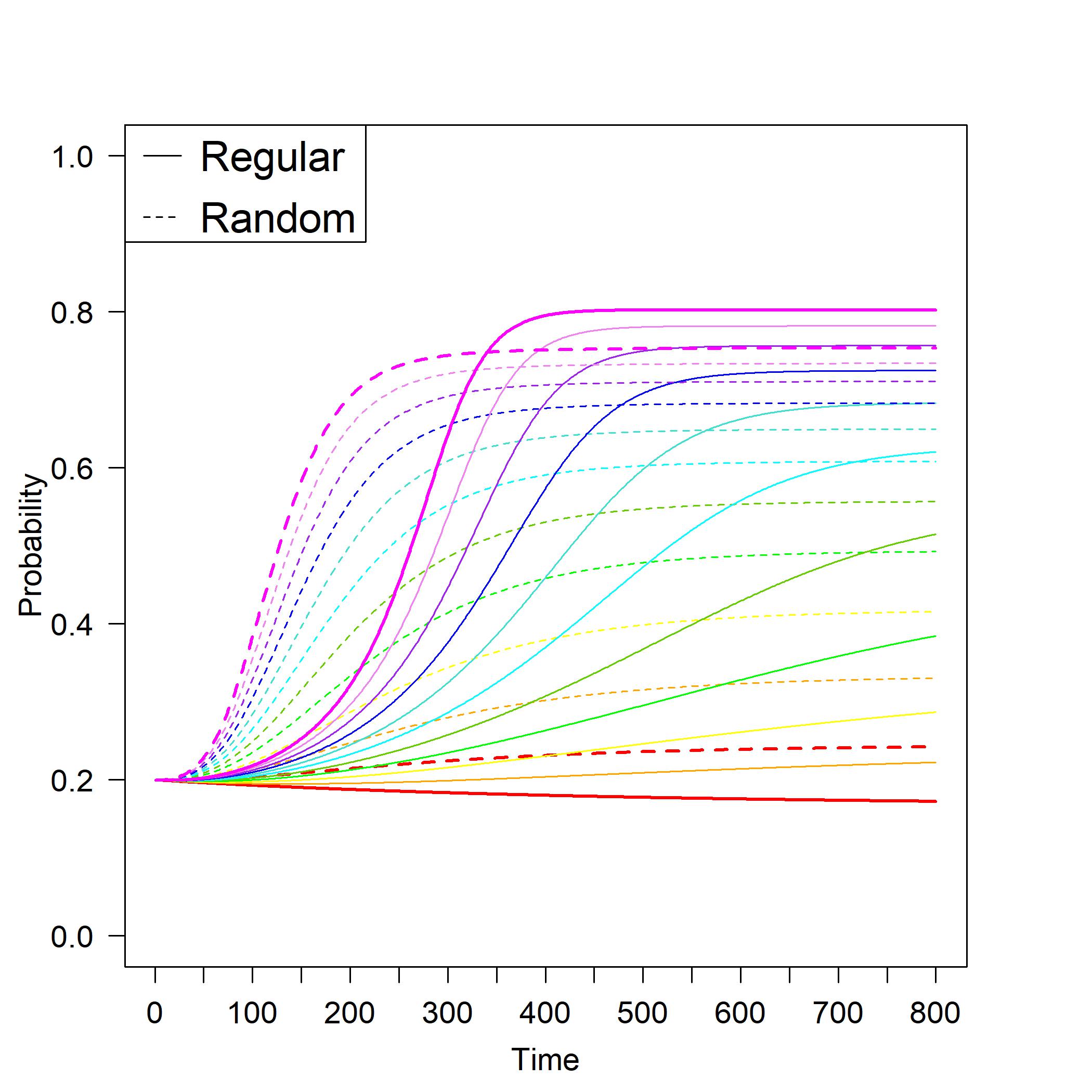}}
	\caption{Difference $X^{\star}_{\rm lattice}-X^{\star}_{\rm random}$ and time evolution of the prevalence for different values of $e$: panels (a) and (b), cycle with $n=20$, $\delta=0.1053$, $p=4/20$ and $\gamma=0.01$; panels (c) and (d), $3-$regular graph with $n=20$, $\delta=0.1579$, $p=4/20$ and $\gamma=0.01$.}
	\label{fig14} 
\end{figure}


\section{Conclusion}
We propose a new paradigm of interaction between a complex network and its line graph, which is used to implement a self-adaptive epidemic model based on the SIS model equations on networks. We discuss the existence and stability properties of the asymptotic solutions of the model for general network topologies. We also provide the solution in a closed form for some specific synthetic graphs. These asymptotic endemic values are then interpreted as a new centrality measure for both nodes and edges.

In its current form, the model allows for a reinforcement action, where the probability of an edge being a transmission channel increases as the infection probability of the nodes connected to it increases, and vice versa. We used this key factor to interpret the effects of reinforcement that typically operates in online social networks during processes of opinion or behavior adoption.
	
A slight variation of the model will allow the introduction of a penalty effect, where the greater the probability of a node being infected, the less weight is assigned to the edges connected to it. In this way, we are confident that we can extend the model's capabilities to different real-world scenarios. For instance, in the domains of viability and navigation, the weights of links directed to a node may decrease when its susceptibility to a particular form of disruption or shock is identified.

Finally, we point out that the proposed idea, i.e. the real-time interaction of a graph and its line graph, can be extended beyond the SIS model studied here and we argue that other dynamic processes can be effectively included in the proposed paradigm.
	
	\hfill
	
%

{\bf DATA AVAILABILITY}

Data generated and analyzed during the current study are available from the corresponding author on request.

{\bf AUTHOR DECLARATIONS}

\textbf{Conflict of Interest}

The authors have no conflicts to disclose.

%

\clearpage

\appendix
	
\section{Application to synthetic graphs}
\label{appendixA}
	In this Appendix, we provide analytical results for some specific classes of graphs. We report the proof of Theorem \ref{theorem1_cycle} on the existence and stability of the equilibrium point for cycle graphs. We then generalize the result to regular graphs, that include the complete graphs as special cases. Finally, we investigate the case of star graphs, which does not lead to a closed solution, but to a bound on its stationary states.
	
	\subsection{Cycle graphs $C_n$}
	\label{subsection_Cycle}
	We prove Theorem \ref{theorem1_cycle}.
	\begin{proof}
		The equilibrium points are solutions of the equation $x(2\beta x^{2}-2\beta x+\gamma)=0$, that is $x^{\star}=0$ and $x^{\star}=\frac{\beta \pm \sqrt{\beta (\beta -2\gamma)}}{2\beta}$. It is immediate to observe that, studying the sign of the derivative, the stable equilibrium points are only $x^{\star}=0$ for $\beta < 2 \gamma$, and $x^{\star}=\frac{\beta + \sqrt{\beta (\beta -2\gamma)}}{2\beta}$ for $\beta > 2 \gamma$. If the initial probability $p$ is below the unstable equilibrium point, that is if $p< \frac{\beta - \sqrt{\beta (\beta -2\gamma)}}{2\beta}$, the stable asymptotic solution is again $x^{\star}=0$, since $\dot{x}<0$. Therefore,
		\begin{equation}
			\resizebox{.88\hsize}{!}{$
			\label{equilibrium_cycle2}
			\left\{ 
			\begin{array}{lll}
				x^{\star}=0 & {\rm if} & \beta < 2 \gamma \ {\rm or}\ \beta > 2 \gamma \land  p< \frac{\beta - \sqrt{\beta (\beta -2\gamma)}}{2\beta} \\
				\hfill \\
				x^{\star}=\frac{1}{2}\left( 1+ \sqrt{1-\frac{2\gamma}{\beta}} \right) & {\rm if} & \beta > 2 \gamma \land  p> \frac{\beta - \sqrt{\beta (\beta -2\gamma)}}{2\beta}
			\end{array}
			\right.$}
		\end{equation}
		Recall that we set $q=1-p$. Since $p> \frac{\beta - \sqrt{\beta (\beta -2\gamma)}}{2\beta}$ implies $\beta > \frac{2\gamma}{1-(q-p)^2}=\frac{\gamma}{2pq}$, for $0<p<\frac{1}{2}$, and $ \beta > 2 \gamma$ for $\frac{1}{2}\leq p<1$, 
		we can identify $\tau_{c}(p)=\frac{1}{2pq}$ as the threshold of the epidemic dynamics on cycles for $0<p<\frac{1}{2}$. For $\frac{1}{2}\leq p <1$ the threshold becomes constant and equal to $2$.
	\end{proof}
	
	\subsection{Regular graphs $K_{n}^{d}$}
	\label{appendixregular}
	We now generalize the results obtained for cycle graphs to a regular graph with $n$ nodes, degree $d<n$,  $m=\frac{1}{2}nd$ edges, and adjacency matrix ${\bf B}_{P}$. The corresponding line graph is regular, has $m=\frac{1}{2}nd$ vertices, $\frac{1}{2}nd(d-1)$ edges and degree $2(d-1)$. Let ${\bf B}_{D}$ be its binary adjacency matrix. The  symmetry of matrices ${\bf B}_{P}$ and ${\bf B}_{D}$ ensures that $x_{i}(t)=x(t),\ \forall i=1,\dots, n$ and $y_{j}(t)=y(t),\ \forall j=1,\dots, m$, but, in general, $x(t)\neq y(t)$. Moreover:
	${\rm diag}\, {\bf x}(t)=x(t){\bf I}_{n}$,
	${\rm diag}\, {\bf y}(t)=y(t){\bf I}_{m}$,
	${\rm diag}({\bf E}{\bf u}_{m})=d{\bf I}_n$,
	${\rm diag}({\bf E}^{T}{\bf u}_{n})=2{\bf I}_{m}$,
	${\bf E}\, {\rm diag}\, {\bf u}_{m} {\bf E}^{T}-{\rm diag}({\bf E}{\bf u}_{m})={\bf B}_{P}\in {\mathbb R}^{n\times n}$, and
	${\bf E}^{T}\, {\rm diag}\, {\bf u}_{n} {\bf E}-{\rm diag}({\bf E}^{T}{\bf u}_{n})={\bf B}_{D}\in {\mathbb R}^{m\times m}$,
	so that
	\begin{equation}
		\left\{ 
		\begin{array}{l}
			{\bf A}_{P}(t)=y(t){\bf B}_{P}\\
			\hfill \\
			{\bf A}_{D}(t)=x(t){\bf B}_{D}\\
		\end{array}
		\right.
		.
	\end{equation}
	For infectivity and recovery rates equal for the networks $G_P$ and $G_D$, Eq. (\ref{continuos_eqs}) becomes
	\begin{equation}
			\resizebox{.88\hsize}{!}{$
		\label{solution_regular}
		\left\{ 
		\begin{array}{l}
			\dot{x}_{i}(t)=\beta\left[1-x_{i}(t) \right]y(t)\sum_{h=1}^{n} ({\bf A}_{P})_{ih}\, x_{h}(t)-\gamma x_{i}(t)\qquad i=1,\dots, n \\
			\hfill \\
			\dot{y}_{j}(t)=\beta\left[1-y_{j}(t) \right]x(t)\sum_{h=1}^{m} ({\bf A}_{D})_{jh}\, y_{h}(t)-\gamma y_{j}(t)\quad j=1,\dots, m\\
		\end{array}
		\right.$}
	\end{equation}
	Let us handle the equation in $x_{i}(t)=x(t)$:
	\begin{equation}
		\begin{split}
			\dot{x}(t)=&\beta\left[1-x(t) \right]y(t)\sum_{h=1}^{n} ({\bf A}_{P})_{ih}\, x_{h}(t)-\gamma x(t)\\
			=&\beta\left[1-x(t) \right]y(t)x(t)\sum_{h=1}^{n} ({\bf A}_{P})_{ih}-\gamma x(t)\\
			=&\beta d \left[1-x(t) \right]y(t)x(t)-\gamma x(t).
		\end{split}
	\end{equation}
	Similarly for $y(t)$, so that we get the system
	\begin{equation}
		\left\{ 
		\begin{array}{l}
			\dot{x}(t)=\beta d \left[1-x(t) \right]y(t)x(t)-\gamma x(t)\\
			\hfill \\
			\dot{y}(t)=2\beta (d-1) \left[1-y(t) \right]x(t)y(t)-\gamma y(t)\\
		\end{array}
		\right.
		.
		\label{regular_equations}
	\end{equation}
	The nature of the steady state equilibrium points of the problem (\ref{regular_equations}) is characterized by the following:
	
	\begin{theorem}
		\label{theorem2}
		The stable equilibrium points of the ASIS model on the d-regular graph $K_{n}^{d}$ with infectivity rate $\beta$ and recovery rate $\gamma$ on the network $G_P$ are given by
		\begin{equation}
			\label{solutions_regular_final}
			\left\{ 
			\begin{array}{lll}
				x^{\star}=0 & {\rm if} & \mathcal{R}<\tau_{r}\\
				\hfill \\
				x^{\star}=\frac{1}{2}\left(1-\frac{d-2}{2d(d-1)\mathcal{R}}+ \frac{\sqrt{\xi}}{2d(d-1)\mathcal{R}}\right) & {\rm if} & \mathcal{R}>\tau_{r}
			\end{array}
			\right.
		\end{equation}
		where $\mathcal{R}= \frac{\beta}{\gamma}$, $\xi=\left[ (d-2)-2 d(d-1)\mathcal{R} \right]^{2}-8d^2(d-1)\mathcal{R}$ and
		\begin{equation}
			\label{threshold_regular}
			\tau_{r}=
			\left\{ 
			\begin{array}{lll}
				\frac{d+(d-2)p}{2d(d-1)}\cdot \frac{1}{pq} & {\rm if} & 0<p<\frac{1}{1+\sqrt{\frac{2(d-1)}{d}}}\\
				\left[ \frac{1}{\sqrt{d}}+\frac{1}{\sqrt{2(d-1)}}\right]^{2} & {\rm if} & \frac{1}{1+\sqrt{\frac{2(d-1)}{d}}}\leq p<1
			\end{array}
			\right.
		\end{equation}
		is the threshold of the epidemic dynamics on regular graphs.
	\end{theorem}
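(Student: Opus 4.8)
The plan is to work directly with the autonomous planar system \eqref{regular_equations} in the scalar unknowns $x(t),y(t)$ (available because $G_P$ is $d$-regular and the rates coincide on $G_P$ and $G_D$), following the same three steps as for the cycle: first locate the equilibria, then test their local stability, and finally decide which equilibrium the trajectory issued from $(p,p)$ actually reaches. Setting $\dot x=\dot y=0$ gives, besides the disease-free point $(0,0)$, endemic solutions characterised by $\beta d(1-x^\star)y^\star=\gamma$ and $2\beta(d-1)(1-y^\star)x^\star=\gamma$. I would solve the first for $y^\star=\tfrac1{d\mathcal{R}(1-x^\star)}$ and substitute into the second, obtaining the quadratic $2d(d-1)\mathcal{R}\,x^{\star 2}-\bigl[2d(d-1)\mathcal{R}-(d-2)\bigr]x^\star+d=0$; the quadratic formula gives the roots $\tfrac12\bigl(1-\tfrac{d-2}{2d(d-1)\mathcal{R}}\pm\tfrac{\sqrt{\xi}}{2d(d-1)\mathcal{R}}\bigr)$ with $\xi$ as in the statement, which are real exactly when $\xi\ge0$ and of which the $+$ branch is the endemic value in \eqref{solutions_regular_final}.

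For local stability I would compute the $2\times2$ Jacobian of \eqref{regular_equations} and simplify its entries at an endemic point using the two equilibrium identities: it becomes the matrix with diagonal entries $-\tfrac{\gamma x^\star}{1-x^\star}$ and $-\tfrac{\gamma y^\star}{1-y^\star}$ and off-diagonal entries $\beta d\,x^\star(1-x^\star)$ and $2\beta(d-1)y^\star(1-y^\star)$. Its trace is manifestly negative, and using $(1-x^\star)y^\star=\tfrac1{d\mathcal{R}}$ and $(1-y^\star)x^\star=\tfrac1{2(d-1)\mathcal{R}}$ its determinant is proportional to $2d(d-1)\mathcal{R}^2 x^{\star 2}y^{\star 2}-1$. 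Hence the larger root (larger $x^\star y^\star$) is a stable node and the smaller one a saddle; at $(0,0)$ the Jacobian is $-\gamma{\bf I}_2$, so the disease-free state is always locally stable, and $\xi=0$ corresponds to the saddle-node collision of the two endemic branches.

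The threshold comes from a global argument. The system is cooperative on $(0,1)^2$ since $\partial\dot x/\partial y=\beta d\,x(1-x)>0$ and $\partial\dot y/\partial x=2\beta(d-1)y(1-y)>0$, so its flow is monotone and the trajectory from $(p,p)$ converges to an equilibrium. If $\dot x(p,p)\ge0$ and $\dot y(p,p)\ge0$, i.e. $\mathcal{R}\ge\tfrac1{dpq}$, that trajectory is nondecreasing and reaches the stable endemic state; if both are $\le0$, i.e. $\mathcal{R}\le\tfrac1{2(d-1)pq}$, it is nonincreasing and reaches $(0,0)$. The decisive window is the intermediate one, $\tfrac1{2(d-1)pq}<\mathcal{R}<\tfrac1{dpq}$, which is precisely where $\tfrac{d+(d-2)p}{2d(d-1)pq}$ lies; there one has to compare $(p,p)$ with the stable manifold of the saddle $(x_1^\star,y_1^\star)$, and the marginal case is the one in which the initial node probability equals the unstable equilibrium, $p=x_1^\star$. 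Imposing that $p$ be a root of the equilibrium quadratic gives, after a short computation, $\tfrac1{\mathcal R}\bigl(\tfrac1{dq}+\tfrac1{2(d-1)p}\bigr)=1$, i.e. $\mathcal{R}=\tfrac{2(d-1)p+dq}{2d(d-1)pq}=\tfrac{d+(d-2)p}{2d(d-1)pq}$. Since the endemic state must also exist, the threshold is the least $\mathcal{R}$ with $x_1^\star(\mathcal R)$ real and $\le p$: for $p$ below the double-root value this is the solution of $p=x_1^\star$, namely $\tfrac{d+(d-2)p}{2d(d-1)pq}$; for $p$ above it, $x_1^\star(\mathcal R)\le p$ holds automatically and the binding constraint becomes $\xi\ge0$, i.e. $\mathcal{R}\ge\bigl[\tfrac1{\sqrt d}+\tfrac1{\sqrt{2(d-1)}}\bigr]^2$. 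A short calculation shows that $p\mapsto\tfrac{d+(d-2)p}{2d(d-1)pq}$ is minimised at $p^\ast=\tfrac1{1+\sqrt{2(d-1)/d}}=\tfrac{\sqrt{2d(d-1)}-d}{d-2}$, with minimal value equal to that double-root value, so the two cases split exactly as in \eqref{threshold_regular}.

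The routine parts are Steps 1 and 2 (plain algebra and a $2\times2$ stability check). The hard part will be the rigorous treatment of the intermediate window: unlike the cycle ($d=2$), where the system collapses to a scalar cubic and comparing $p$ with $x_1^\star$ is immediate, here the flow is genuinely two-dimensional and one must show that the basin boundary of the endemic state meets the diagonal exactly at $p=x_1^\star$. I would do this by confining the trajectory from $(p,p)$ with monotonicity and following it relative to the two nullclines $y=\tfrac1{d\mathcal{R}(1-x)}$ and $x=\tfrac1{2(d-1)\mathcal{R}(1-y)}$ — showing that it eventually crosses the $\dot x=0$ nullcline (and from there increases to the endemic state) iff $\mathcal{R}>\tfrac{d+(d-2)p}{2d(d-1)pq}$, and otherwise is driven to $(0,0)$.
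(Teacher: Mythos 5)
Your derivation of the equilibria and of the threshold follows the same route as the paper's: you eliminate $y^\star$ to obtain the quadratic $2d(d-1)\mathcal{R}\,x^{\star 2}+\left[(d-2)-2d(d-1)\mathcal{R}\right]x^\star+d=0$, identify the $+$ branch as the endemic state, and obtain $\tau_r$ by imposing that $p$ coincide with the smaller root $x_1^\star$ (i.e.\ that $p$ solve the quadratic), with the case split in \eqref{threshold_regular} coming from the reality condition $\xi\ge 0$ at the minimizer of $p\mapsto\frac{d+(d-2)p}{2d(d-1)pq}$. Where you genuinely differ is in the stability analysis. The paper decides stability from a one-dimensional sign-of-$\dot{x}$ diagram (Fig.~\ref{derivative}) and then asserts that the trajectory from $(p,p)$ reaches $x_2^\star$ iff $p>x_1^\star$; your $2\times 2$ Jacobian computation (negative trace, determinant sign separating the saddle from the stable node, $-\gamma{\bf I}_2$ at the origin) is an actual local-stability argument that the paper does not contain, and your observation that the problem is genuinely planar for $d>2$ is exactly right. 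Note, however, that the saddle sits at $(x_1^\star,y_1^\star)$ with $y_1^\star=x_1^\star+\frac{d-2}{2d(d-1)\mathcal{R}}>x_1^\star$, so the initial point $(p,p)$ with $x_1^\star<p<y_1^\star$ is not order-comparable to the saddle, and the cooperative-flow comparison you invoke only settles the cases $p\le x_1^\star$ (extinction) and $p\ge y_1^\star$ (endemic); your claim that the basin boundary meets the diagonal exactly at $p=x_1^\star$ remains unproved in your write-up. This is precisely the step the paper also leaves unjustified (its Fig.~\ref{derivative} argument implicitly treats the flow as scalar), so your proposal reproduces the published threshold by the same algebra while being more explicit about where the global part of the argument is incomplete.
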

	
	\begin{proof}
		
		The equilibrium points of the problem (\ref{regular_equations}) are given by the null solutions $x^{\star}=0$ and $y^{\star}=0$ and by the solutions of the nonlinear system
		\begin{equation}
			\label{nonlinearesystem}
			\left\{ 
			\begin{array}{l}
				d\mathcal{R} (1-x)y-1=0\\
				\hfill \\
				2(d-1)\mathcal{R} (1-y)x-1=0
			\end{array}
			\right.
			.
		\end{equation}
		The solving equation in $x$ is $2d(d-1)\mathcal{R} x^2+\left[ (d-2)-2d(d-1)\mathcal{R}\right]x+d=0$. Therefore, we have
		\begin{equation}
			\label{solutions}
			\left\{ 
			\begin{array}{l}
				x^{\star}=\frac{1}{2}\left(1-\frac{d-2}{2d(d-1)\mathcal{R}}\pm \frac{\sqrt{\xi}}{2d(d-1)\mathcal{R}}\right)\\
				\hfill \\
				y^{\star}=\frac{1}{2}\left(1+\frac{d-2}{2d(d-1)\mathcal{R}}\pm \frac{\sqrt{\xi}}{2d(d-1)\mathcal{R}}\right)\\
			\end{array}
			\right.
		\end{equation}
		with $\xi=\left[ (d-2)-2 d(d-1)\mathcal{R} \right]^{2}-8d^2(d-1)\mathcal{R}$. Note that $\xi \geq 0$ for $0< \mathcal{R} \leq {\tau}_{1} \cup \mathcal{R} \geq {\tau}_{2}$ where
		\begin{equation}
				\resizebox{.90\hsize}{!}{$
			\begin{split}
			\footnotesize
			\label{conditions}
			{\tau}_{1,2}&\coloneqq \frac{3d-2}{2d(d-1)}\pm \sqrt{\frac{2}{d(d-1)}}
			=\frac{1}{d}+\frac{1}{2(d-1)}\pm \frac{2}{\sqrt{2d(d-1)}}\\
			&=\left( \frac{1}{\sqrt{d}}\pm\frac{1}{\sqrt{2(d-1)}}\right)^{2}.
			\end{split}$}
		\end{equation}
		Let us focus on the steady states for the primary process. Let us distinguish the following cases:
		\begin{itemize}
			\item ${\tau}_{1}< \mathcal{R} < {\tau}_{2}$: there is a unique equilibrium point, a unique steady state solution and it is $x^{\star}=0$. 
			
			\item $0< \mathcal{R} < {\tau}_{1}$: the two non-trivial solutions $x^{\star}$ in Eq. (\ref{solutions}) exist but they both are negative. Therefore the sign of the right-hand side in Eq. (\ref{regular_equations}), that is the sign of $\dot{x}$, is positive below $x^{\star}=0$ and negative above $x^{\star}=0$. Therefore, the null solution is again the only meaningful stable solution. 
			
			\item $\mathcal{R} \geq {\tau}_{2}$: in addition to the null solution, both the non-trivial solutions $x^{\star}$ in Eq. (\ref{solutions}) exist and they are positive. We represent in Fig. \ref{derivative} the signs of the first derivative $\dot{x}$, where $x^{\star}_{1}$ and $x^{\star}_{2}$ refer to the solutions in Eq. (\ref{solutions}).
		
			To conclude the discussion about stability, let us observe that, if the initial probability $p$ at time $t=0$ lies below the value of $x^{\star}_{1}$ then again the only stable steady state remains $x^{\star}=0$. If, instead, $p>x^{\star}_{1}$, that is
			\begin{equation}
				p>\frac{1}{2}\left(1-\frac{d-2}{2d(d-1)\mathcal{R}}- \frac{\sqrt{\xi}}{2d(d-1)\mathcal{R}}\right)
				\label{disequazione}
			\end{equation}
			the stable steady state becomes $x^{\star}_{2}$. Inequality (\ref{disequazione}) solved for $\mathcal{R}$ gives
			
			\begin{equation*}
				\label{threshold1}
				\mathcal{R} > {\tau}_{r}=
				\left\{ 
				\begin{array}{lll}
					\frac{d+(d-2)p}{2d(d-1)}\cdot \frac{1}{pq} & {\rm if} & 0<p<\frac{1}{1+\sqrt{\frac{2(d-1)}{d}}}\\
					\left[ \frac{1}{\sqrt{d}}+\frac{1}{\sqrt{2(d-1)}}\right]^{2} & {\rm if} & \frac{1}{1+\sqrt{\frac{2(d-1)}{d}}}\leq p<1
				\end{array}
				\right.
				.
			\end{equation*}
		\end{itemize}
\begin{figure}[H]
	\centering
	{\includegraphics[width=0.40\textwidth]{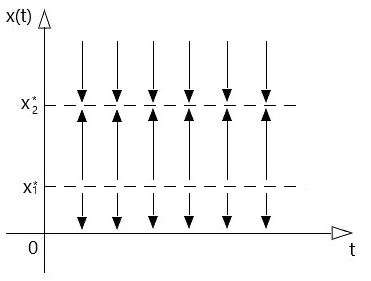}}
	\caption{Sign of the derivative around the equilibrium points $x^{\star}_{1}$ and $x^{\star}_{2}$.}
	\label{derivative} 
\end{figure}
\end{proof}
	
	\begin{remark}
		Theorem \ref{theorem2} extends Theorem \ref{theorem1_cycle} proved for the case of the cycle $C_n$. Indeed, when $d=2$, we have $\xi=16 \mathcal{R}(\mathcal{R}-2)$, $\tau_{1}=0$ and $\tau_{2}=2$. Moreover $\frac{1}{1+\sqrt{\frac{2(d-1)}{d}}}=\frac{1}{2}$.
	\end{remark}
	
	\begin{remark}
		Threshold $\tau_{r}$ in Eq. (\ref{threshold_regular}), in general, depends on both $p$ and $d$. There is a critical value, that is 
		$\frac{1}{1+\sqrt{\frac{2(d-1)}{d}}}$, which discriminates the two values of $\tau_{r}$.
		In both cases $\tau_{r}$ is a decreasing function of $d$, as expected. When $p$ is below the critical value, $\tau_{r}$ depends on $p$ and it increases when $p$ decreases. Above the critical value, $\tau_{r}$ is independent of $p$. Note also that it is equal to $\frac{1}{2}$ for $d=2$, and tends to $\sqrt{2}-1$ when $d$ approaches $+\infty$.
		Interestingly, the threshold of the standard SIS model on a d-regular graph is $\frac{1}{p\lambda_{1}}=\frac{1}{p2(d-1)}$ and it is always lower than $\tau_{r}$ for any $0<p<1$.
	\end{remark}
	
	Consider as an example a regular graph with $n=6$ nodes and $d=3$, so $m=9$ edges.
	Under these conditions, the threshold is
	\begin{equation*}
		\label{threshold_example}
		\tau_{r}=
		\left\{ 
		\begin{array}{lll}
			\frac{3+p}{12pq} & {\rm if} & 0<p<0.464\\
			1.161 & {\rm if} & 0.464 \leq p<1
		\end{array}
		\right.
		.
	\end{equation*}
	\vspace{-6mm}
	\begin{figure}[H]
		\centering
		\subfloat[]{\includegraphics[width=0.34\textwidth]{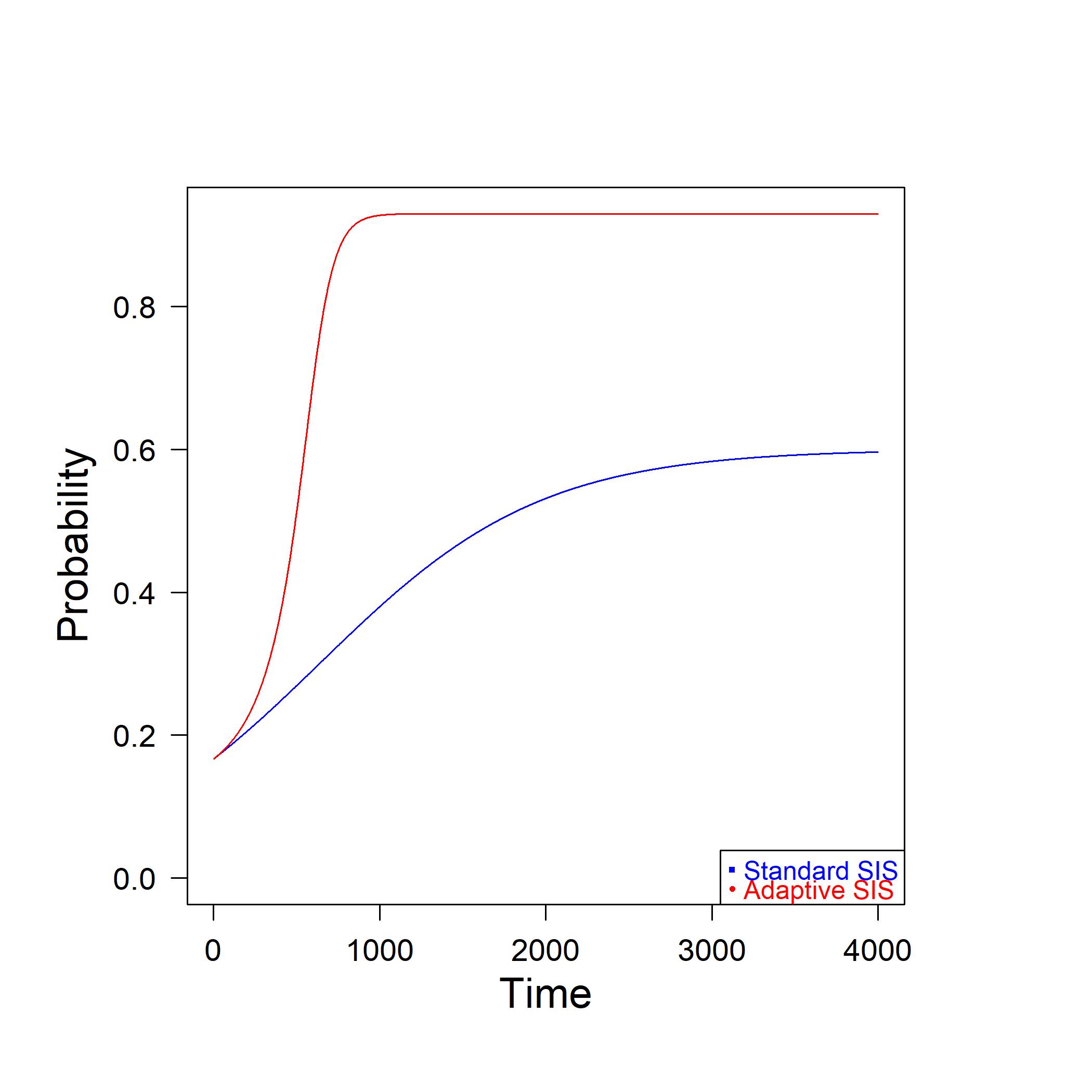}}\\  \vspace{-4mm}
		\subfloat[]{\includegraphics[width=0.34\textwidth]{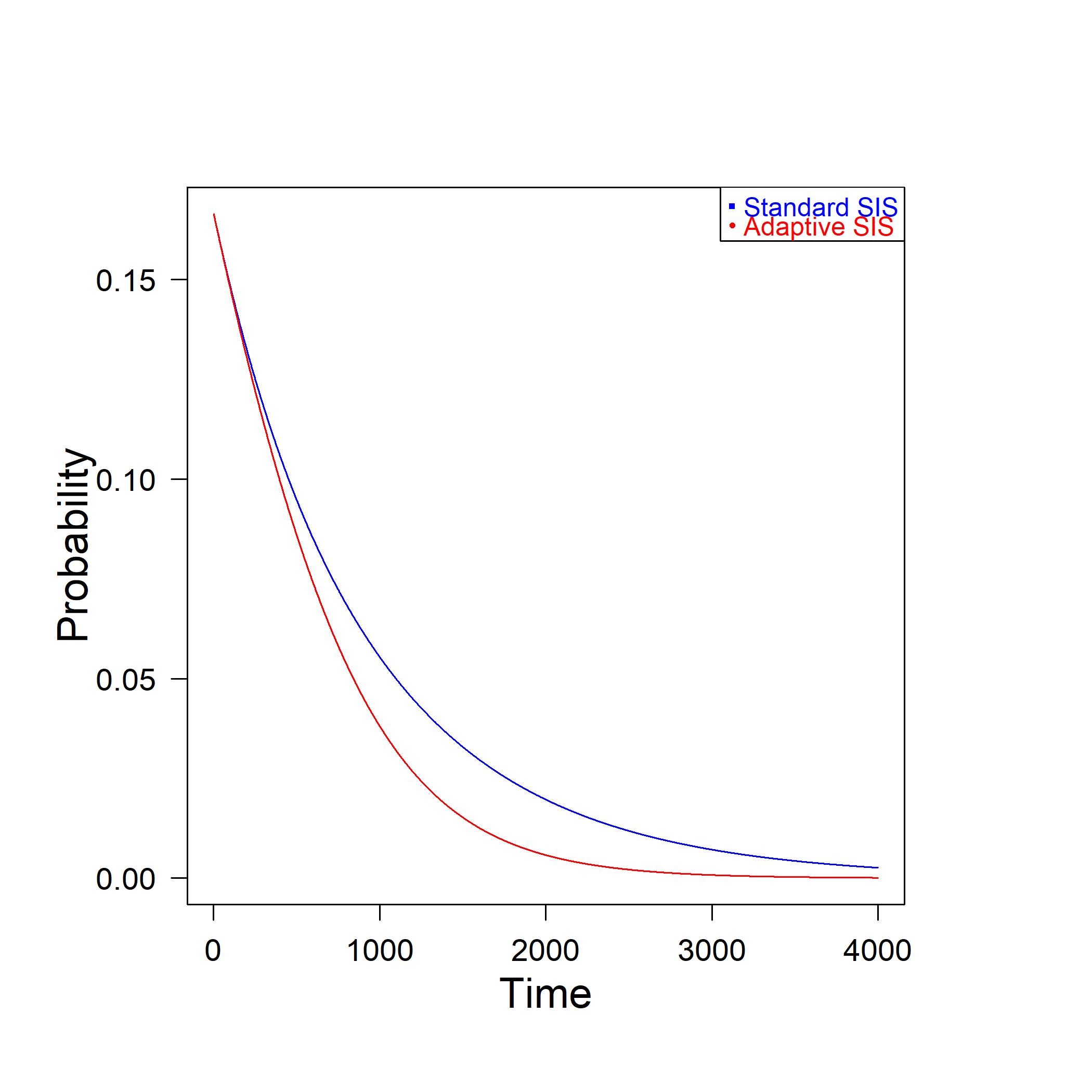}}
		\caption{Prevalence $x(t)$ for the ASIS model (in red) and for the standard SIS model (in blue) for a graph $K_{6}^{3}$ with (a) $\beta=0.005$ and $\gamma=0.001$; (b) $\beta=0.002$ and $\gamma=0.002$.}
		\label{fig16} 
	\end{figure}

	In Fig. \ref{fig16}, panel (a), we plot the evolution of the model for $\beta=0.005$, $\gamma=0.001$, then $\mathcal{R}=5$. We choose $p=1/6=0.167<0.464$, so that $\tau_{r}=1.9$. We have $\mathcal{R}>\tau_{r}$ and the stable steady state is $x^{\star}=0.9295435$. In Fig. \ref{fig16}, panel (b), we plot the evolution of the model for $\beta=0.002$, $\gamma=0.002$, $\mathcal{R}=1$ and $p=1/6$. Now $\mathcal{R}<\tau_{r}$ and the asymptotic steady state is $x^{\star}=0$.
	
\begin{remark}
		We provide now a graphical interpretation of the previous results in the $x-y$ plane. The derivatives in Eq. (\ref{nonlinearesystem}) are both positive in a finite region identified by \vspace{-2mm}
		\begin{equation}
			\label{interpretation}
			\left\{ 
			\begin{array}{l}
				y\geq \frac{1}{d\mathcal{R} (1-x)}\\
				\hfill \\
				y\leq 1-\frac{1}{2(d-1)\mathcal{R}x}\\
			\end{array}
			\right.
		\end{equation}
    	whose boundary curves intersect at points
		\begin{equation}
			\begin{split}
			\label{solutions1}
			&\left\{ 
			\begin{array}{l}
				x^{\star}_{1}=\frac{1}{2}\left(1-\frac{d-2}{2d(d-1)\mathcal{R}}- \frac{\sqrt{\xi}}{2d(d-1)\mathcal{R}}\right)\\
				\hfill \\
				y^{\star}_{1}=\frac{1}{2}\left(1+\frac{d-2}{2d(d-1)\mathcal{R}}- \frac{\sqrt{\xi}}{2d(d-1)\mathcal{R}}\right)\\
			\end{array}
			\right.
			\\
			& \hspace{-10mm}{and}\\
			&\left\{ 
			\begin{array}{l}
				x^{\star}_{2}=\frac{1}{2}\left(1-\frac{d-2}{2d(d-1)\mathcal{R}}+ \frac{\sqrt{\xi}}{2d(d-1)\mathcal{R}}\right)\\
				\hfill \\
				y^{\star}_{2}=\frac{1}{2}\left(1+\frac{d-2}{2d(d-1)\mathcal{R}}+ \frac{\sqrt{\xi}}{2d(d-1)\mathcal{R}}\right)\\
			\end{array}
			\right.
			\end{split}
		\end{equation}
		Fig. \ref{fig17}, panel (a), illustrates the region in Eq. (\ref{interpretation}) and the intersection points in Eq. (\ref{solutions1}) for $\beta=0.005$, $\gamma=0.001$, $n=6$, and $d=4$. Fig. \ref{fig17}, panel (b), illustrates the trajectory (green line) of the time evolution of the epidemic in the $x-y$ plane under the same conditions and $p=1/6$. The plus sign ($+$) indicates the starting point of the phase diagram and the empty circle ($\circ$) the ending (asymptotic) point toward the attractive stable solution.
		
		Fig. \ref{fig18}, panels (a-d), illustrates the trajectories (green line) of the evolution of the epidemic in the $x-y$ plane when the two nontrivial solutions in Eq. (\ref{solutions1}) exist. Fig. \ref{fig18}, panels (e-h), illustrates the analog trajectories (green line) when the only equilibrium point is the null solution. To better illustrate the behavior under different conditions we have relaxed the assumption that the initial probability is identical for nodes in network $G_P$ and nodes in network $G_D$ and we used different values for the initial probabilities $p_x$ for the variable $x$ and $p_y$ for the variable $y$. In the different panels, we used the following parameters: (a) $\beta=0.005$, $\gamma=0.001$, $p_{x}=0.1$, $p_{y}=0.9$; (b) $\beta=0.005$, $\gamma=0.001$, $p_{x}=0.9$, $p_{y}=0.1$; (c) $\beta=0.002$, $\gamma=0.001$, $p_{x}=0.10$, $p_{y}=0.15$; (d) $\beta=0.002$, $\gamma=0.001$, $p_{x}=0.9$, $p_{y}=0.9$; (e) $\beta=0.002$, $\gamma=0.002$, $p_{x}=1/6$, $p_{y}=1/6$; (f) $\beta=0.002$, $\gamma=0.002$, $p_{x}=0.8$, $p_{y}=0.8$; (g) $\beta=0.002$, $\gamma=0.002$, $p_{x}=0.2$, $p_{y}=0.8$; (h) $\beta=0.002$, $\gamma=0.002$, $p_{x}=0.8$, $p_{y}=0.2$.
		
		\begin{figure}[H]
			\centering
			\subfloat[]{\includegraphics[width=0.45\textwidth]{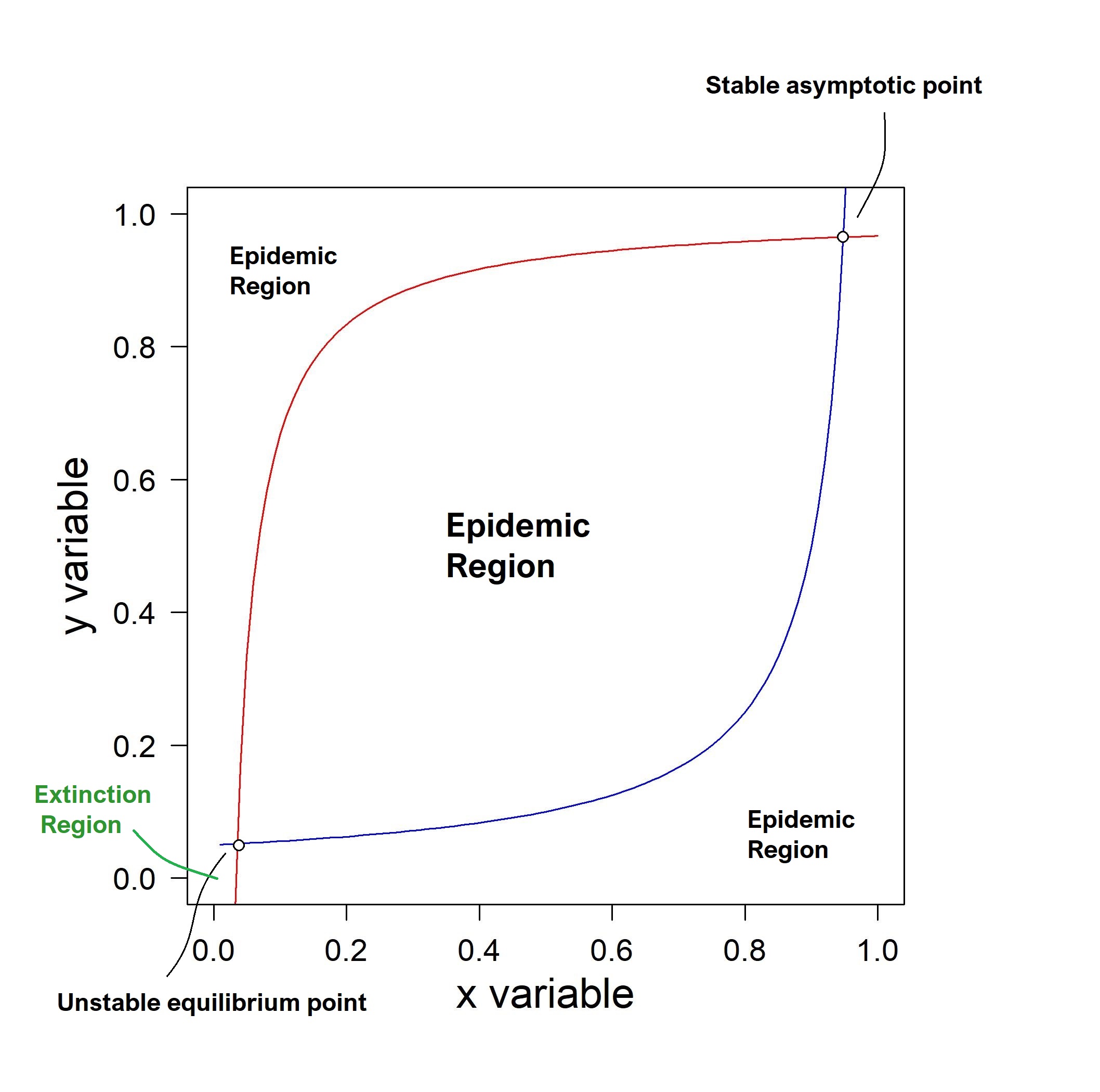}}\\
			\subfloat[]{\includegraphics[width=0.45\textwidth]{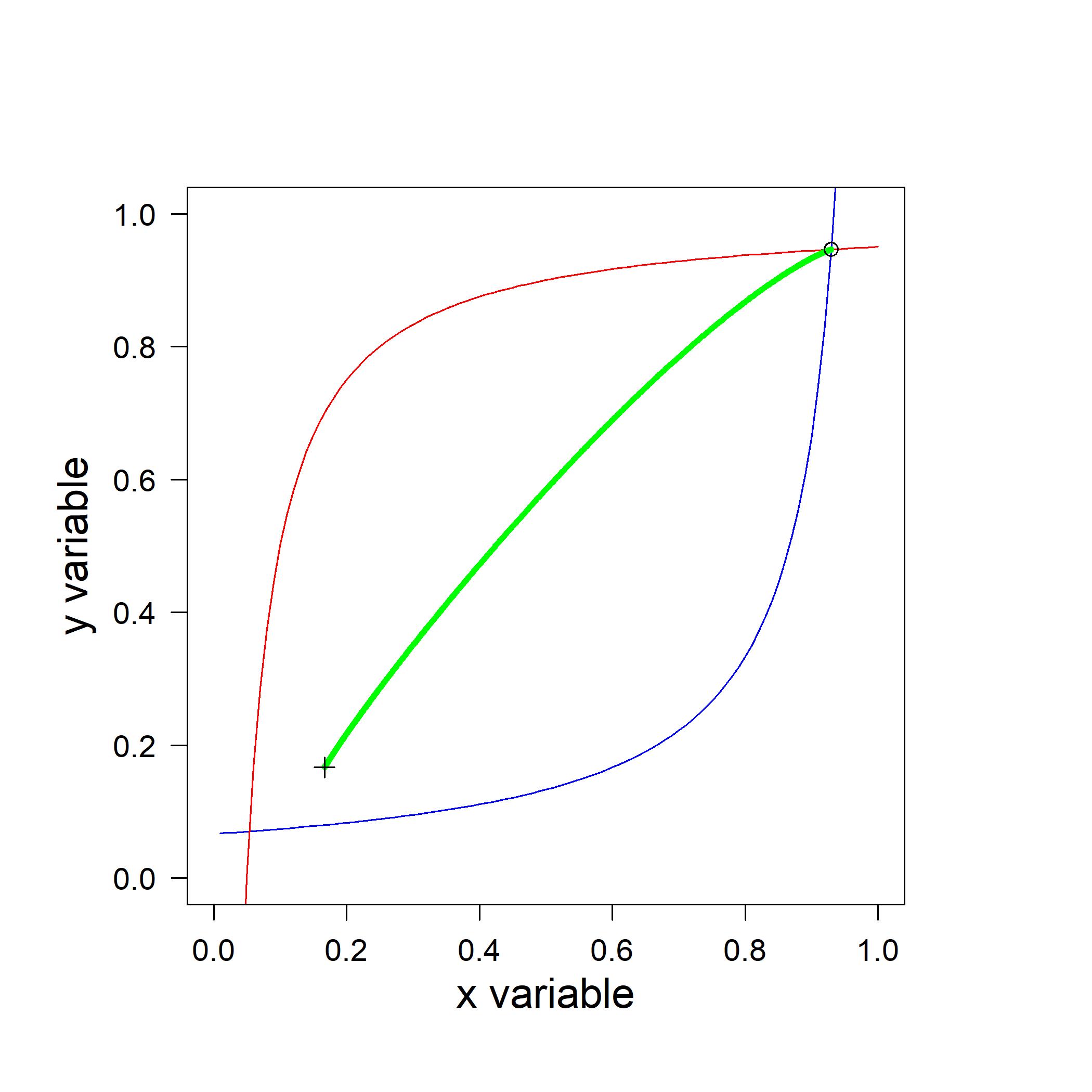}}
			\caption{(a) Different regions of the plane $x-y$ according the the asymptotic behavior of the model; (b) phase diagram (green line) of the evolution of the probabilities $x$ and $y$ for $\beta=0.005$, $\gamma=0.001$ and $p=1/6$ in the regular graph with $n=6$ and $d=4$. The plus sign ($+$) is the starting point, the empty circle ($\circ$) is the ending point.}
			\label{fig17} 
		\end{figure}

			\begin{figure}[H]
			\centering
			\subfloat[]{\includegraphics[width=0.25\textwidth]{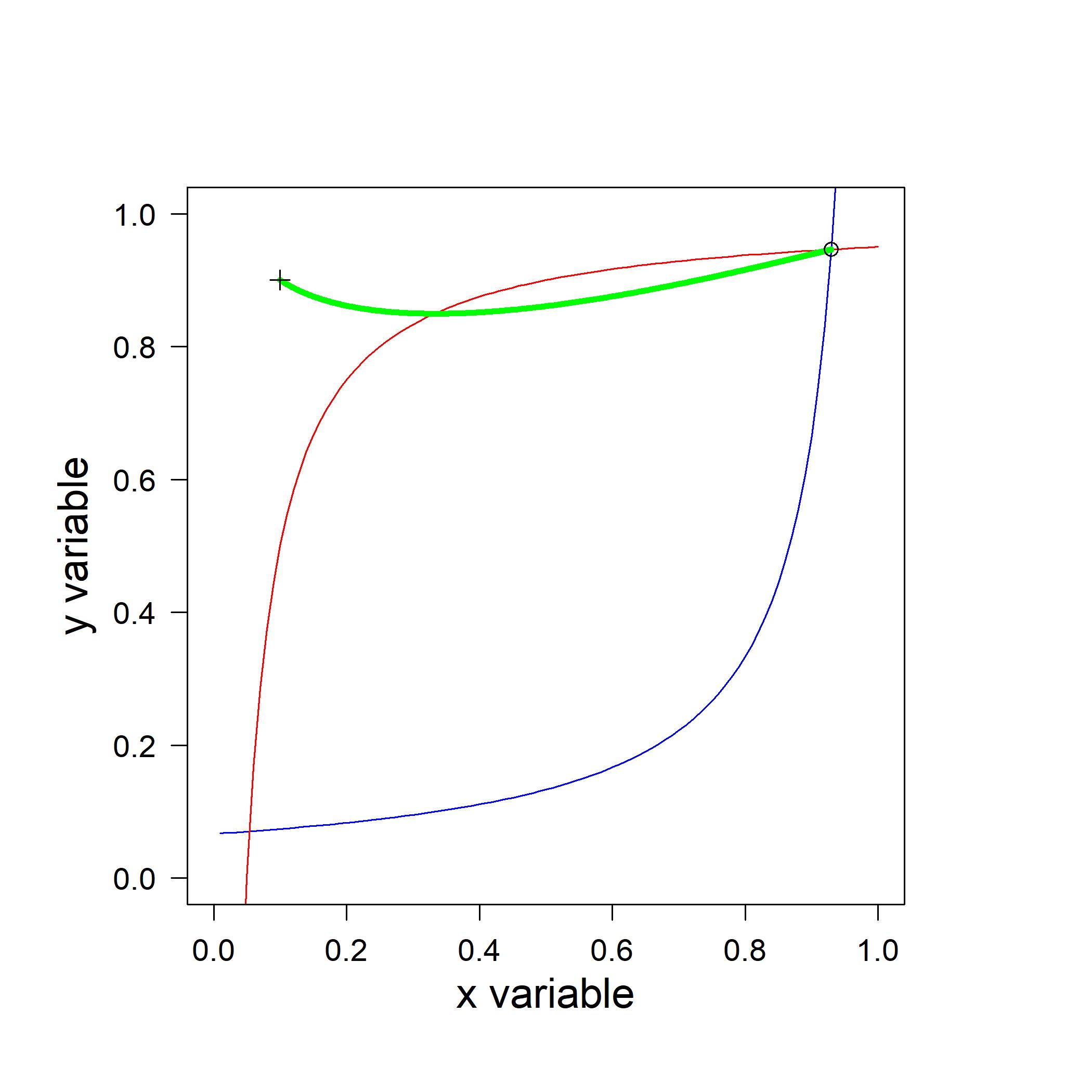}}
			\subfloat[]{\includegraphics[width=0.25\textwidth]{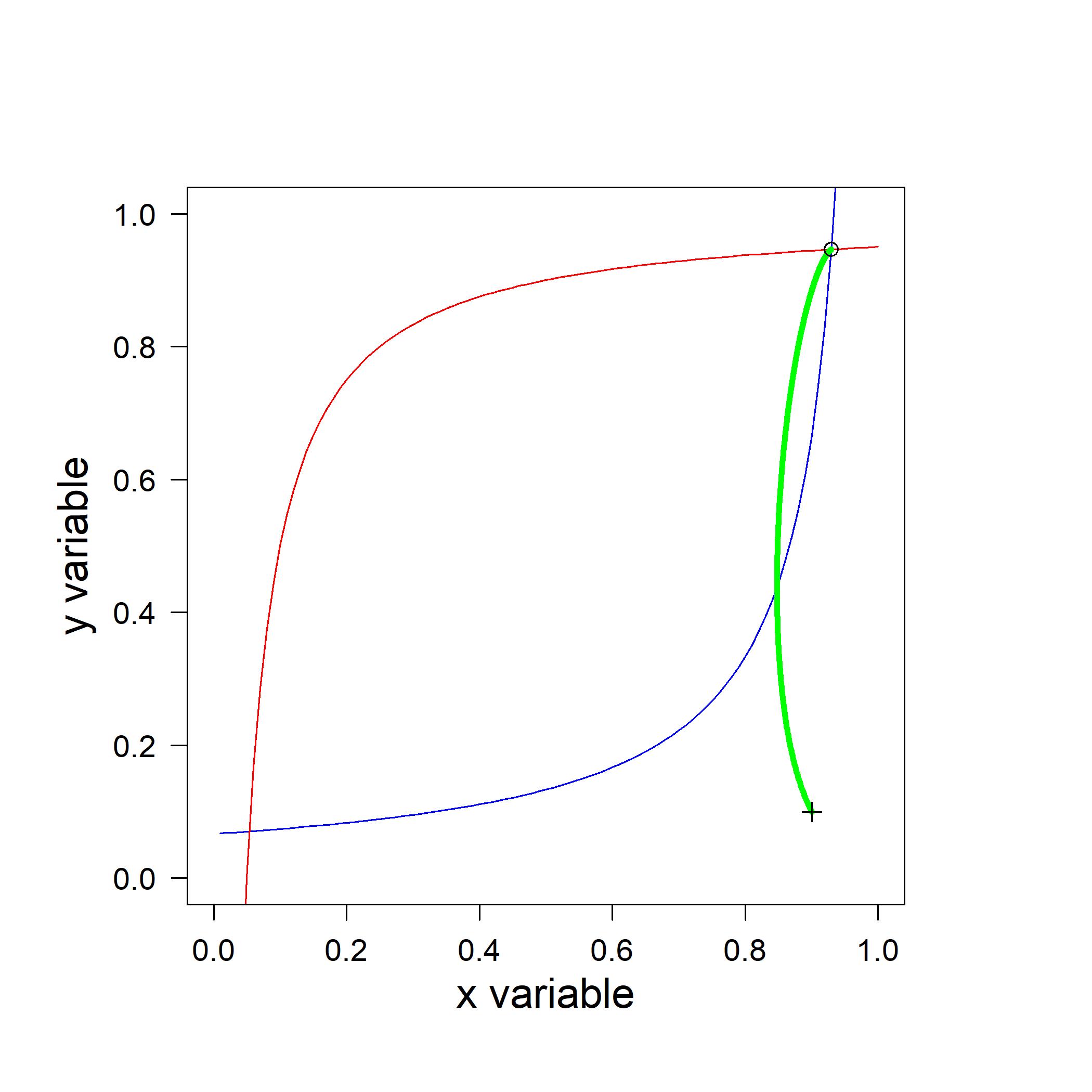}}\\  \vspace{-4mm}
			\subfloat[]{\includegraphics[width=0.25\textwidth]{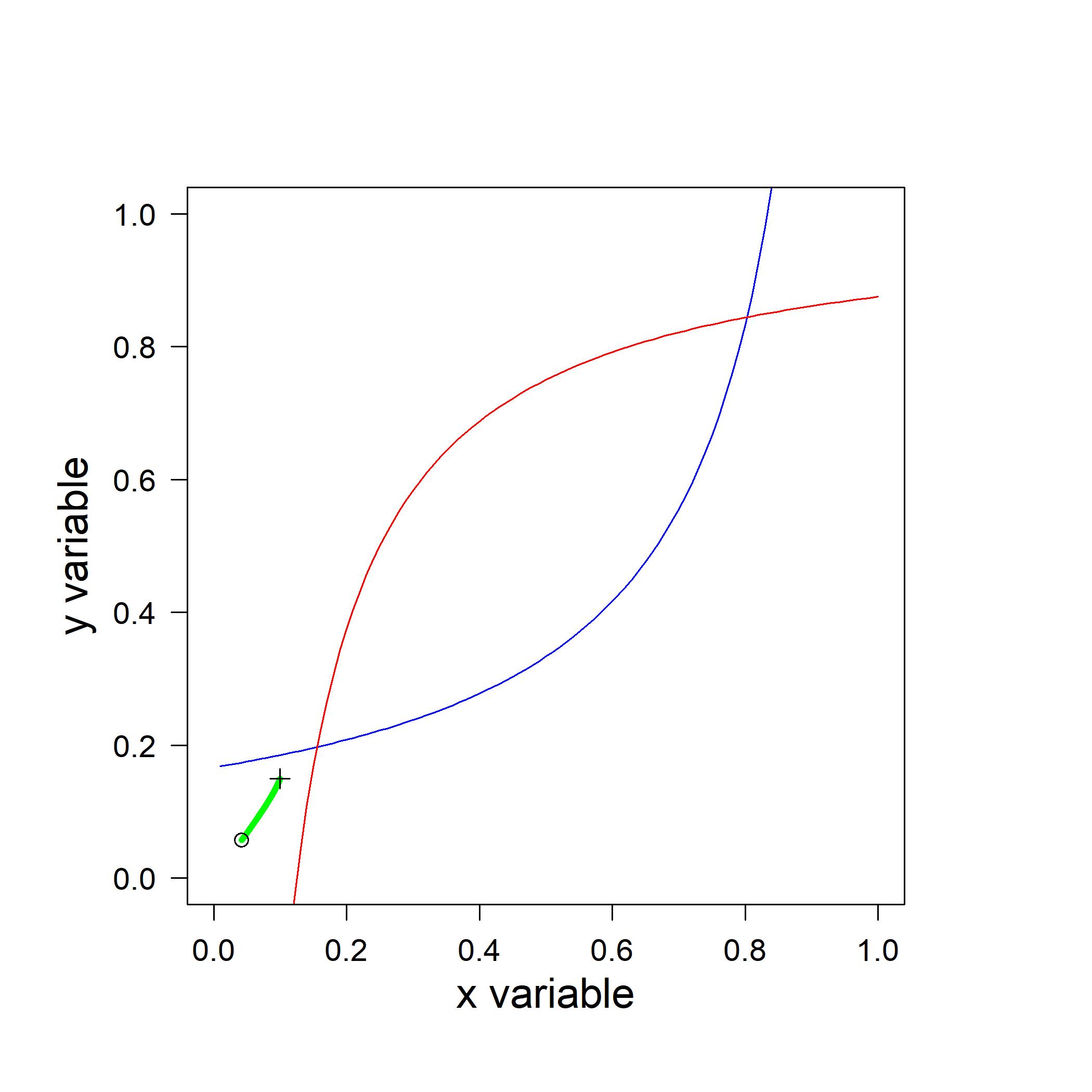}}
			\subfloat[]{\includegraphics[width=0.25\textwidth]{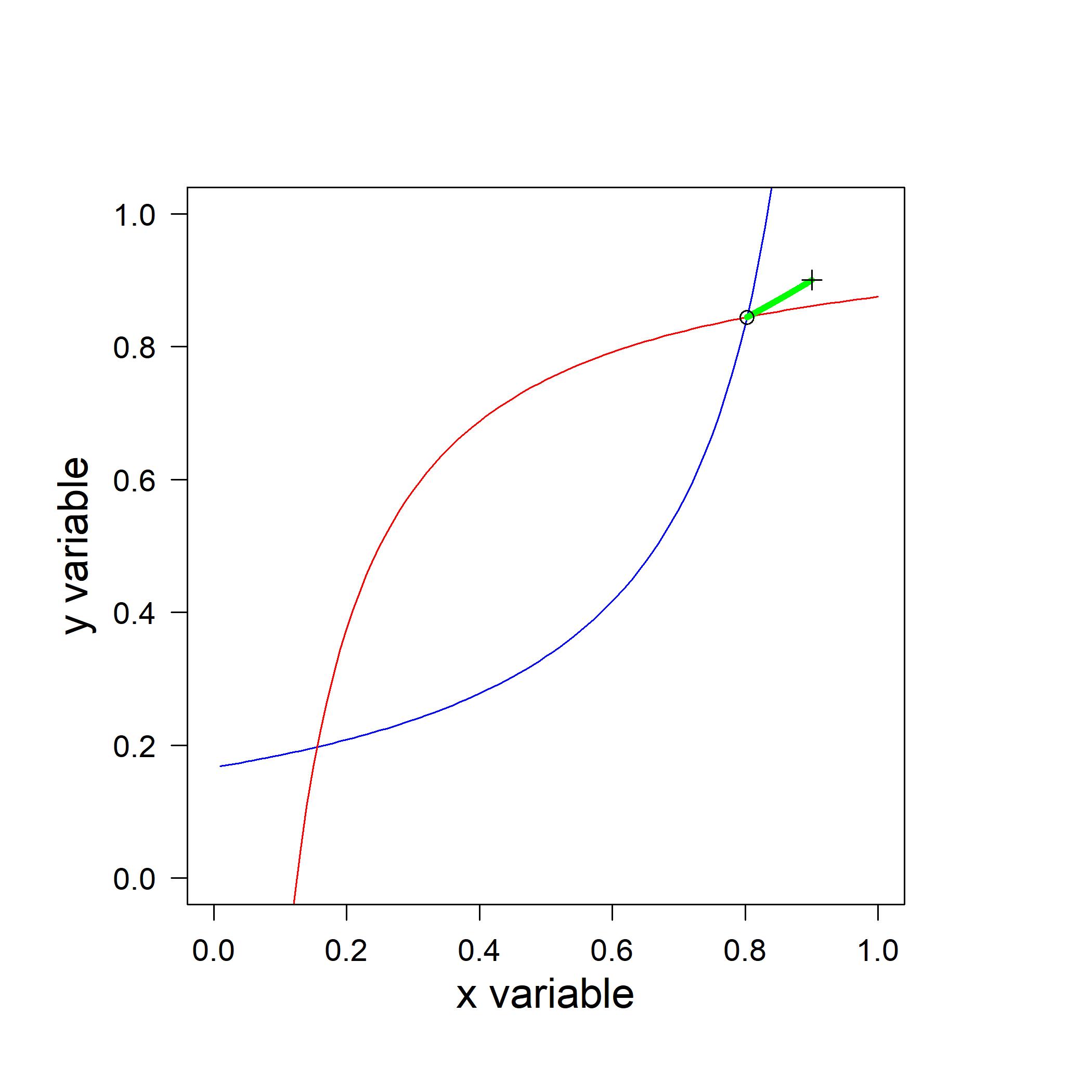}}\\  \vspace{-4mm}
			\subfloat[]{\includegraphics[width=0.25\textwidth]{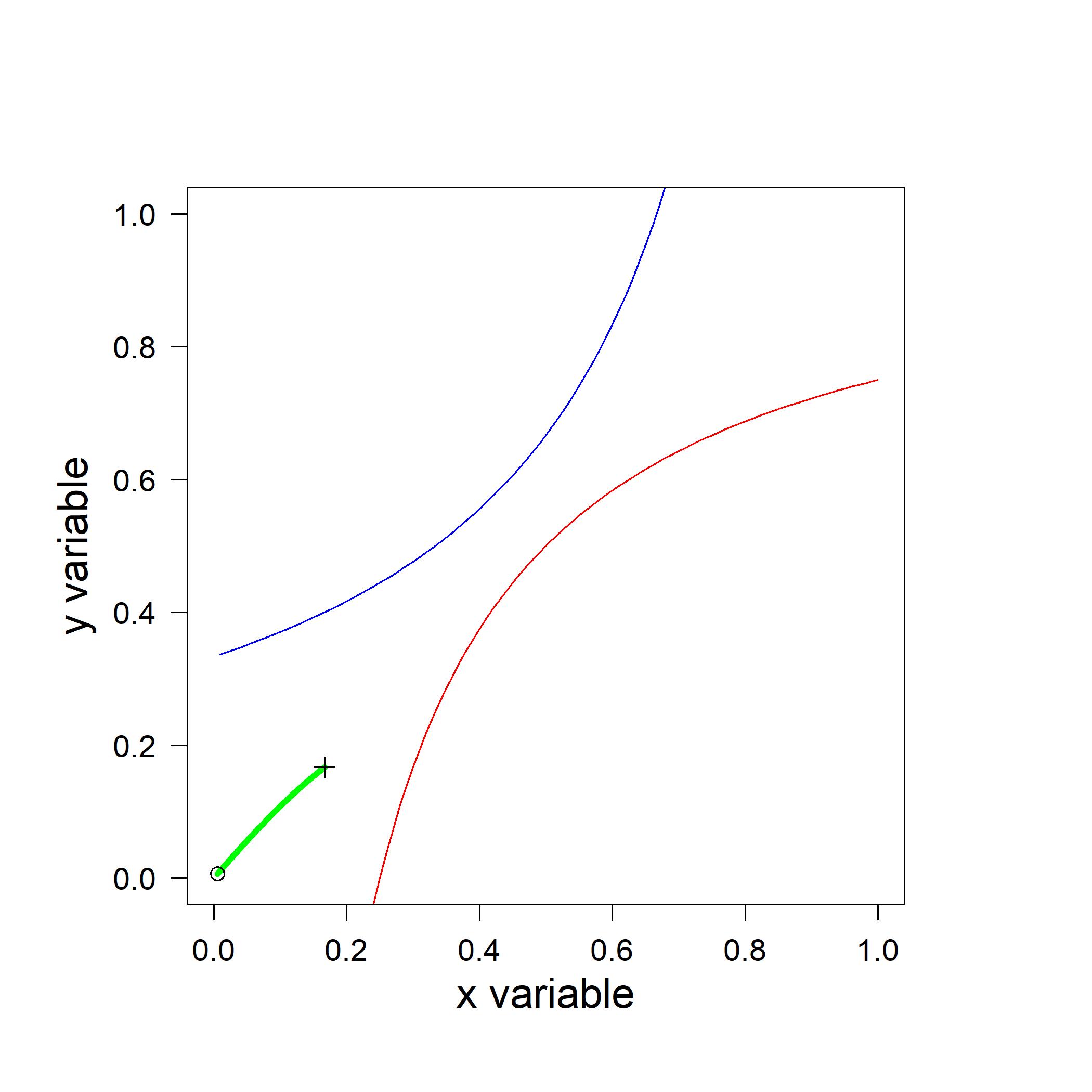}}
			\subfloat[]{\includegraphics[width=0.25\textwidth]{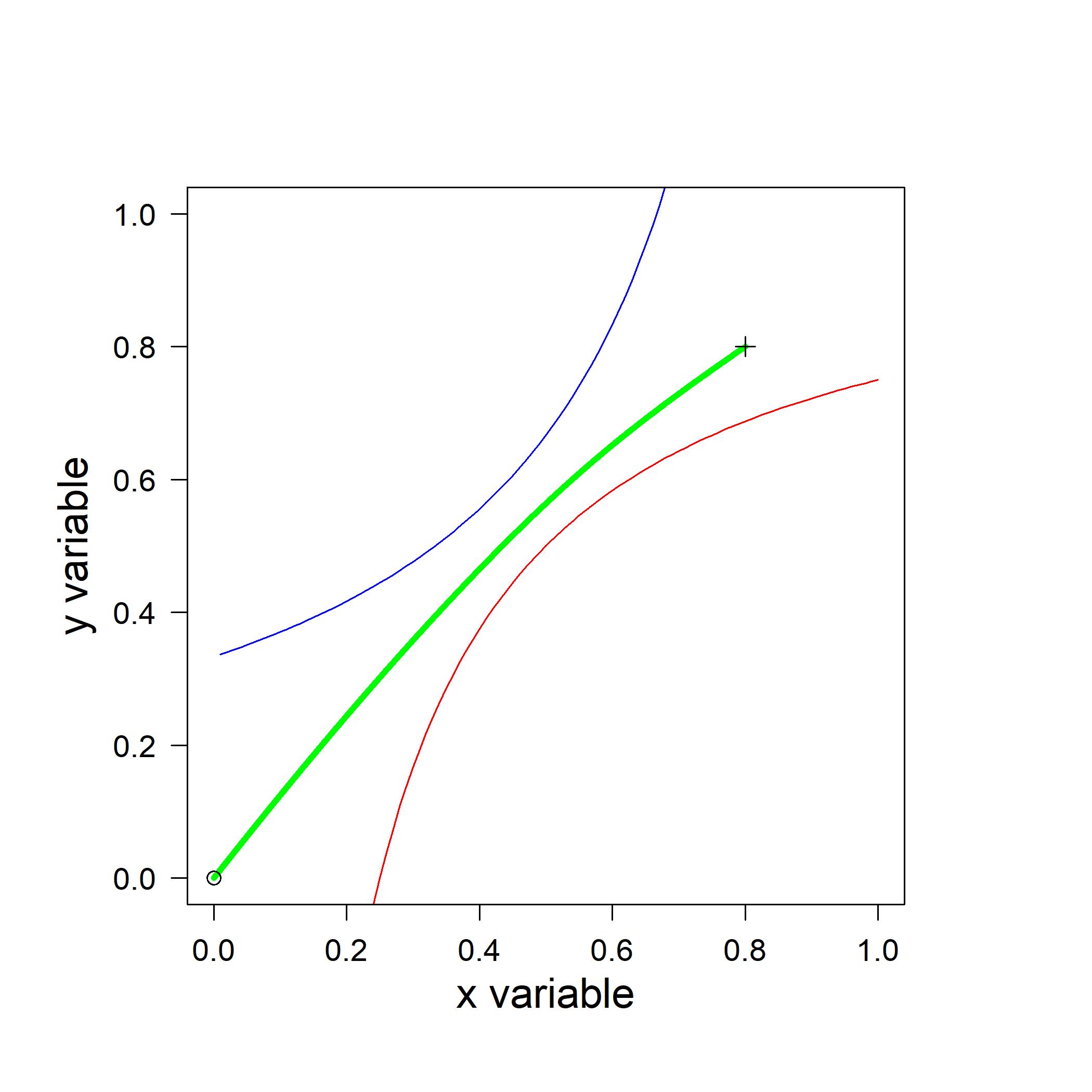}}\\  \vspace{-4mm}
			\subfloat[]{\includegraphics[width=0.25\textwidth]{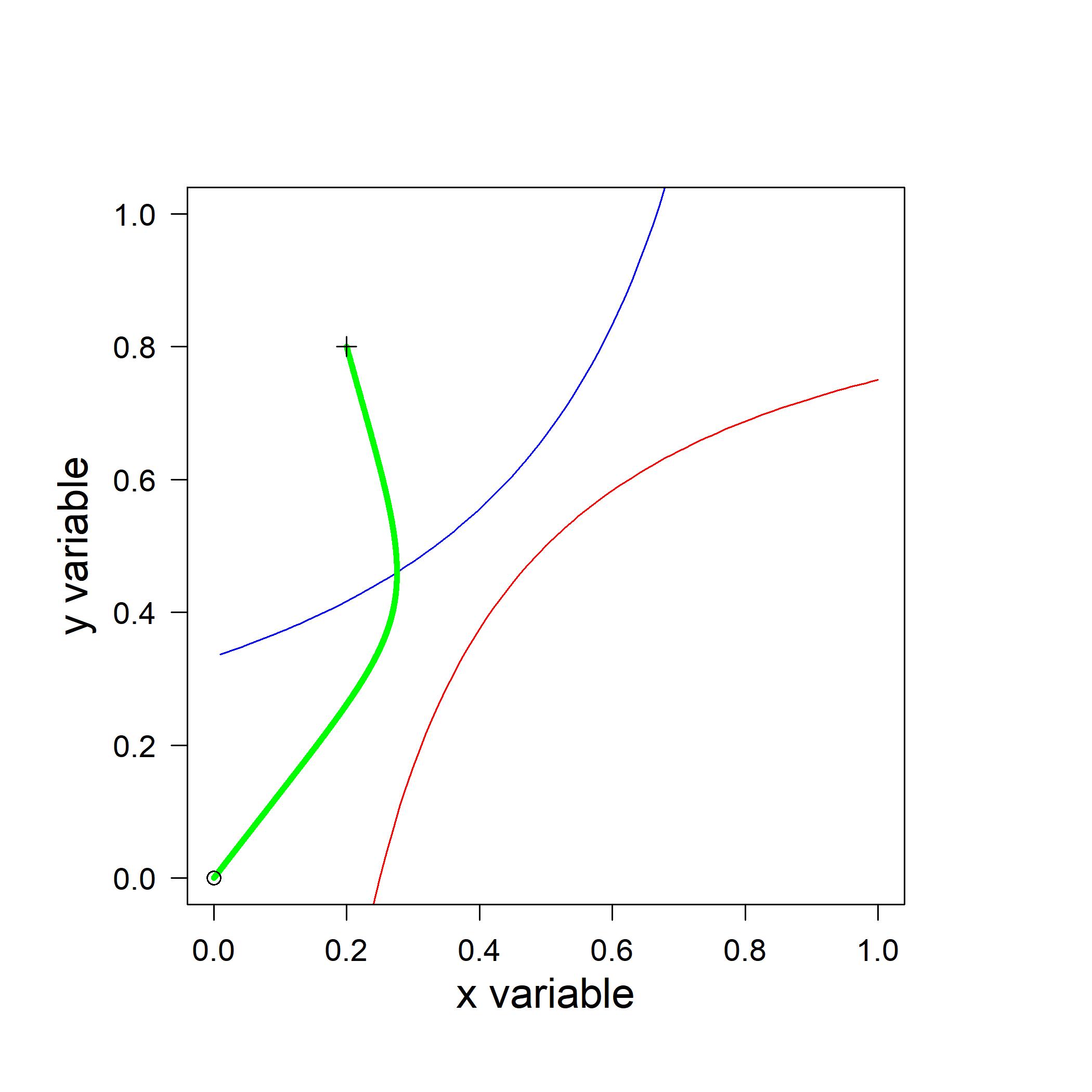}}
			\subfloat[]{\includegraphics[width=0.25\textwidth]{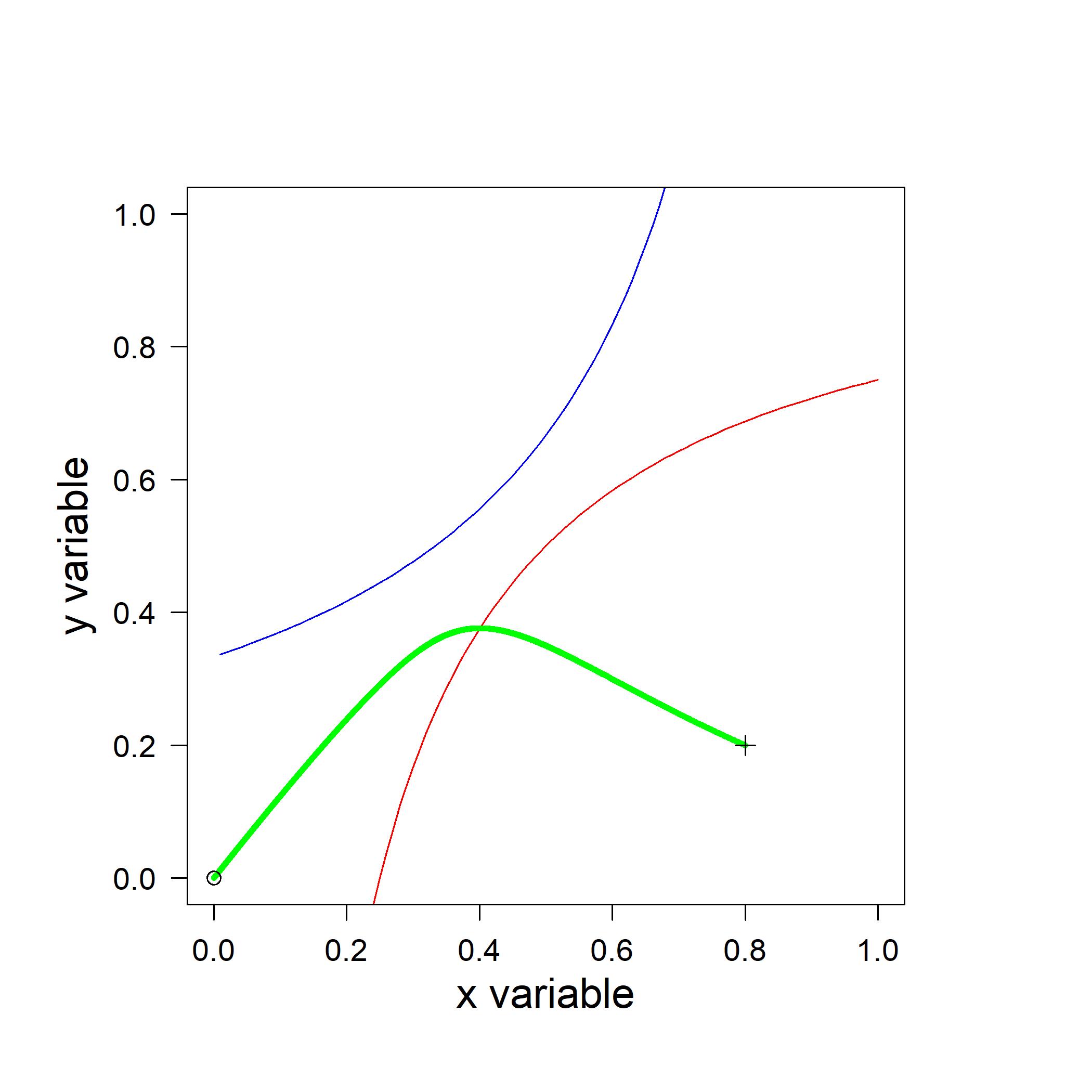}}			
			\caption{Panels (a)-(h): trajectories in the plane $x-y$ under the different conditions specified in the text.
			 }
			\label{fig18} 
		\end{figure}
		
\end{remark}

%

	\vspace{-11mm}

	\subsection{Star graphs $S_{n}$}
	Let us consider a star graph $S_{n}$ with $n$ nodes, $m=n-1$ edges, center in the node number $1$ with degree $n-1$, and adjacency matrix ${\bf B}$. The corresponding line graph is complete, has $m=n-1$ vertices, $\frac{1}{2}(n-1)(n-2)$ edges and constant degree $n-2$.
	The symmetries in the primary and dual processes ensure that the following apply: $x_{i}(t)=x(t),\ \forall i=2,\dots, n$ and $y_{j}(t)=y(t),\ \forall j=1,\dots, m$. Moreover: 
	${\rm diag}({\bf x}(t))={\rm diag}(x_{1}(t),x(t){\bf u}_{n-1})$,
	${\rm diag}({\bf y}(t))=y(t)\, {\rm diag}\,{\bf u}_{m}=y(t){\bf I}_{m}$,
	${\bf E}\, {\rm diag}\, {\bf u}_{m} {\bf E}^{T}-{\rm diag}({\bf E}{\bf u}_{m})={\bf B}_{P}\in {\mathbb R}^{n\times n}$,
	${\bf E}^{T}\, {\rm diag}\, {\bf x}\, {\bf E}-{\rm diag}({\bf E}^{T}{\bf x})=x_{1}{\bf B}_{D}\in {\mathbb R}^{m\times m}$,
	where ${\bf B}_{D}$ is the adjacency matrix of the complete graph with $m$ nodes. Therefore, Eq. (\ref{ApAd_continuos}) becomes
	\begin{equation}
		\resizebox{.30\hsize}{!}{$
		\left\{ 
		\begin{array}{l}
			{\bf A}_{P}(t)=y(t){\bf B}_{P}\\
			\hfill \\
			{\bf A}_{D}(t)=x_{1}(t){\bf B}_{D}\\
		\end{array}
		\right.$}
	\end{equation}
	and, by components, Eq. (\ref{continuos_eqs}) becomes
	\begin{equation}
		\label{discrete_solution_star}
		\resizebox{.85\hsize}{!}{$
		\left\{ 
		\begin{array}{l}
			\dot{x}_{1}(t)=\beta\left[1-x_{1}(t) \right]y(t)\sum_{h=1}^{n} ({\bf A}_{P})_{1h}\, x_{h}(t)-\gamma x_{1}(t) \\
			\hfill \\
			\dot{x}_{i}(t)=\beta\left[1-x_{i}(t) \right]y(t)\sum_{h=1}^{n} ({\bf A}_{P})_{ih}\, x_{h}(t)-\gamma x_{i}(t)\qquad \ i=2,\dots, n \\
			\hfill \\
			\dot{y}_{j}(t)=\beta\left[1-y_{j}(t) \right]x_{1}(t)\sum_{h=1}^{m} ({\bf A}_{D})_{jh}\, y_{h}(t)-\gamma y_{j}(t)\quad j=1,\dots, m\\
		\end{array}
		\right.$}
	\end{equation}
	The problem (\ref{discrete_solution_star}) is equivalent to
	\begin{equation}
		\label{solution_star}
		\resizebox{.65\hsize}{!}{$
		\left\{ 
		\begin{array}{l}
			\dot{x}_{1}(t)=\beta (n-1)\left[1-x_{1}(t) \right]y(t) x(t)-\gamma x_{1}(t) \\
			\hfill \\
			\dot{x}(t)=\beta\left[1-x(t) \right]y(t) x_{1}(t)-\gamma x(t)\\
			\hfill \\
			\dot{y}(t)=\beta (n-2) \left[1-y(t) \right]x_{1}(t)y(t)-\gamma y(t)
		\end{array}
		\right.$}
	\end{equation}
	The equilibrium points are, thus, given by
	\begin{equation}
		\label{solution_star_equilibrium}
		\resizebox{.45\hsize}{!}{$
		\left\{ 
		\begin{array}{l}
			\mathcal{R} (n-1)\left[1-x_{1} \right]y x-x_{1}=0 \\
			\hfill \\
			\mathcal{R}\left[1-x \right]y x_{1}-x=0\\
			\hfill \\
			\mathcal{R} (n-2) \left[1-y \right]x_{1}y-y=0
		\end{array}
		\right.$}
	\end{equation}
	The resolution of the previous system is very cumbersome, and a closed expression is not particularly useful. Nonetheless, we can get some information about the steady states. First, the relationship between the value of the asymptotic probability for the node $1$ and that for the other nodes in the network $G_P$, can be expressed as
	\begin{equation}
		x_{1}=\frac{\gamma}{\beta(n-2)} \frac{(n-3)x+1}{1-x}. 
	\end{equation}
	Since it must be $0<x_{1}<1$, we get an upper bound for $x$
	\begin{equation}
		x<\frac{ (n-2)\mathcal{R}-1}{(n-2)\mathcal{R}+(n-3)}<1.
	\end{equation}
	This value represents a worst-case scenario for the infection probability of the pendant nodes. For instance, for $n=6$, $\beta=0.005$ and $\gamma=0.001$, we get $x<0.826087$. Eq. (\ref{exacteqstar}) for $x$ can be used to compute the exact numerical solution for specific values of $n$ and $\mathcal{R}$:
	\begin{equation}
		\begin{split}
			& (n-1)(n-2)\mathcal{R} \left[(n-2)\mathcal{R}+(n-3) \right]x^3\\
			&+\left[ (n-3)^2+(n-1)(n-2)\mathcal{R}-(n-1)(n-2)^2\mathcal{R}^{2} \right ]x^2\\
			&+2(n-3)x+1=0.
		\end{split}
		\label{exacteqstar}
	\end{equation}
	With the same parameters as before, the exact solution of the previous equation is $x=0.818337$. By this result, we get also $x_{1}=0.9509388$ and $y=0.9474204$.

\nocite{*}
\bibliography{References}

\end{document}